\begin{document}


%
\catchline{}{}{}{}{}
%

\title{Delta functions on twistor space and their sign factors \\ 
}

\author{Jun-ichi Note
}

\address{Research Institute of Science \& Technology, College of Science and Technology, 
\\
Nihon University, Tokyo 101-8308, Japan
\,
\\
\email{note.junichi20@nihon-u.ac.jp
} }



\maketitle


\begin{abstract}
When performing the Fourier transform of the scattering amplitudes in Yang-Mills theory from momentum space to real twistor space, we encounter sign factors that break global conformal invariance. Previous studies conjectured that the sign factors are intrinsic in the real twistor space corresponding to the split signature space-time; hence, they will not appear in the complex twistor space corresponding to the Lorentzian signature space-time. In this study, we present a new geometrical interpretation of the sign factors by investigating the domain of the delta functions on the real twistor space. In addition, we propose a new definition of delta functions on the complex twistor space in terms of the \v{C}ech cohomology group without any sign factors and show that these delta functions have conformal invariance. Moreover, we show that the inverse Fourier transforms of these delta functions are the scattering amplitudes in  Yang-Mills theory. Thus, the sign factors do not appear in the complex twistor space. 
\end{abstract}

\keywords{delta functions; twistor space; conformal invariance.}

\ccode{Mathematics Subject Classification 2020: 32L25}

\section{Introduction}

Studies on scattering amplitudes have been conducted using various methods and have progressed to the present \cite{BT}.
Twistor string theory is particularly impactful and uses twistor variables for analysis \cite{Wit}.
In this theory, amplitudes in $\mathcal{N}=4$ super Yang-Mills theory (SYM) undergo a Fourier transform from momentum space to twistor space.
This analysis was performed in the
real twistor space corresponding to the split signature space-time instead of  the complex twistor space corresponding to the Lorentz signature space-time, 
because real variables are in good agreement with the Fourier transform.  
The tree-level maximally helicity violating
(MHV) amplitudes in the real twistor space have sign factors which violate the invariance of global conformal  symmetry, particularly conformal inversion, despite a massless system \cite{MS,ACCK,KS}. However, these amplitudes have local conformal symmetry \cite{Wit}.
For example, 
the three-particle MHV amplitude, which is a component of the Britto-Cachazo-Feng-Witten (BCFW) recursion relation of tree amplitudes \cite{DH}, is represented as a product of the sign factor and delta function using twistor variables \cite{MS} or as a product of sign factors using both twistor and dual twistor variables \cite{ACCK}.
The sign factors have been conjectured not to appear in the complex twistor space because their information will be incorporated into the \v{C}ech cohomological structure \cite{MS}. Therefore, the appearance of sign factors is considered intrinsic to the real twistor space.    
Additionally, it is conjectured that delta functions on the complex twistor space will be constructed 
from a complex integral of complex delta functions which are defined by $\delta(z) = 1/(2 \pi i z)$ in terms of the \v{C}ech cohomology group without sign factors,  and the integral contour will be chosen as suitable \cite{MS}. 
However, the delta functions on the complex twistor space have been defined by the Dolbeault cohomology group in previous studies \cite{CSW}. 
This definition has been used in various recent  studies on the scattering amplitudes in twistor theory \cite{recent}.
The \v{C}ech cohomological definition of the delta functions has not been defined.
Therefore, the sign factors in the complex twistor space are unknown. However, investigating them is useful for calculating the scattering amplitudes in terms of the BCFW recursion relations in twistor space and would facilitate studies of the scattering amplitude for non-Abelian gauge theory. 

This paper presents a new geometrical interpretation of the sign factors in the real twistor space and a new definition of delta functions on the complex twistor space in terms of  the \v{C}ech cohomology group without any sign factors, for the first time.
First, we show that when we eliminate the sign factor in the definition of the delta function on the real twistor space $\mathbb{RP}^{3}$, it becomes a delta function on the three-dimensional sphere $S^{3} \approx \mathbb{RP}^{3} \times O(1)$, where $O(1):=\{-1,+1\}$. 
Additionally, for the collinear delta function which enforces the collinearity of the three twistors in the real twistor space and has a sign factor, we show that its domain is the real Grassmann manifold $G_{2,4}(\mathbb{R})$, which is the set of all two-dimensional planes in $\mathbb{R}^{4}$, and the domain of the collinear delta function without the sign factor is $G_{2,4}(\mathbb{R}) \times O(1)$. 
Hence, we show that the geometrical role of the sign factors is to divide the domain of the delta functions by $O(1)$.
Next, we propose a new definition of delta functions on $n$-dimensional complex space $\mathbb{C}^{n}$ as a representative element of the ($n-1$)-th \v{C}ech cohomology group on complex projective space $\mathbb{CP}^{n-1}$ with coefficients in the sheaf $\mathcal{O}(-n)$. This \v{C}ech cohomology group is denoted by $H^{n-1}(\mathbb{CP}^{n-1},\mathcal{O}(-n))$. We also propose a new definition of delta function on $\mathbb{CP}^{3}$ without any sign factors as a representative element of the \v{C}ech cohomology group $H^{2}(\mathbb{CP}^{2},\mathcal{O}(-3))$ by integrating the delta function on $\mathbb{C}^{4}$ by a specific parameter to impose the equivalence relation of $\mathbb{CP}^{3}$.
Here, we adopt dual complex twistor space
as $\mathbb{CP}^{3}$ to relate the delta functions to the scattering amplitudes.
We denote the delta function on $\mathbb{CP}^{3}$
by $\Delta_{m}^{(3)}(\bar{Z},\bar{X})$, where $m$ is an integer, and $\bar{X}$ and $\bar{Z}$ represent the points in $\mathbb{CP}^{3}$. This has the property of the delta function on $\mathbb{CP}^{3}$. Additionally, we define the delta function which enforces the collinearity of the three twistors $\bar{Z}_{1}$, $\bar{Z}_{2}$ and $\bar{Z}_{3}$
 in $\mathbb{CP}^{3}$ without any sign factors, 
denoted by $\Delta^{(2)}(\bar{Z}_{1},\bar{Z}_{2},\bar{Z}_{3})$.
This is constructed by specific parameters integration of  the delta function on $\mathbb{C}^{4}$  
and is a representative element of the \v{Cech} cohomology group $H^{1}(\mathbb{CP}^{1},\mathcal{O}(-2))$. These delta functions $\Delta_{m}^{(3)}(\bar{Z},\bar{X})$ and $\Delta^{(2)}(\bar{Z}_{1},\bar{Z}_{2},\bar{Z}_{3})$
are invariant under the local and global conformal transformations because they do not have the sign factors which violate conformal invariance,
and ingredient's delta function on $\mathbb{C}^{4}$ is invariant under the local and global conformal transformations.
Here, $\Delta_{m}^{(3)}(\bar{Z},\bar{X})$ is a solution of the differential equation of the twistor wave function \cite{Pen5}, because it possesses a degree of homogeneity in $m$ in $\bar{Z}$ and $(-m-4)$ in $\bar{X}$. The twistor wave functions correspond to massless free fields in the Minkowski space. Therefore, it is natural that $\Delta_{m}^{(3)}(\bar{Z},\bar{X})$ has conformal invariance.
Similarly, we define the delta function on the dual complex super twistor space $\mathbb{CP}^{3|4}$
by a specific parameter integration of a delta function on $\mathbb{C}^{4|4}$ which is defined by the product of the delta function on $\mathbb{C}^{4}$ and the Grassmann delta function. This is denoted by $\Delta^{(3|4)}(\bar{\mathcal{Z}}_{1},\bar{\mathcal{Z}}_{2})$, where $\bar{\mathcal{Z}}_{1}$ and $\bar{\mathcal{Z}}_{2}$ represent points in $\mathbb{CP}^{3|4}$. 
Additionally, we define the delta function $\Delta^{(2|4)}(\bar{\mathcal{Z}}_{1},\bar{\mathcal{Z}}_{2},\bar{\mathcal{Z}}_{3})$ which enforces the collinearity of the three supertwistors $\bar{\mathcal{Z}}_{1}$, $\bar{\mathcal{Z}}_{2}$ and $\bar{\mathcal{Z}}_{3}$ in $\mathbb{CP}^{3|4}$ by specific parameters integration of  the delta function on $\mathbb{C}^{4|4}$.
These delta functions 
$\Delta^{(3|4)}(\bar{\mathcal{Z}}_{1},\bar{\mathcal{Z}}_{2})$
and $\Delta^{(2|4)}(\bar{\mathcal{Z}}_{1},\bar{\mathcal{Z}}_{2},\bar{\mathcal{Z}}_{3})$ do not have any sign factors, therefore, they are invariant under the local and global super conformal transformations in the same manner as in the bosonic case. Furthermore, we show that the inverse Fourier transform of $\Delta^{(2|4)}(\bar{\mathcal{Z}}_{1},\bar{\mathcal{Z}}_{2},\bar{\mathcal{Z}}_{3})$ is the three-particle MHV amplitude for $\mathcal{N}=4$ SYM in the momentum superspace, and the inverse Fourier transform of the product of the two delta functions
$\Delta^{(3|4)}(\bar{\mathcal{Z}}_{1},\bar{\mathcal{Z}}_{2})\Delta^{(3|4)}(\bar{\mathcal{Z}}_{1},\bar{\mathcal{Z}}_{3})$
is the three-particle $\overline{\text{MHV}}$ amplitude for $\mathcal{N}=4$ SYM in the momentum superspace.
For this reason, the delta functions
$\Delta^{(3|4)}(\bar{\mathcal{Z}}_{1},\bar{\mathcal{Z}}_{2})$
and
$\Delta^{(2|4)}(\bar{\mathcal{Z}}_{1},\bar{\mathcal{Z}}_{2},\bar{\mathcal{Z}}_{3})$ are scattering amplitudes for $\mathcal{N}=4$ SYM in the dual complex super twistor space.
When the twistor space is complex space, the amplitudes are superconformally invariant because the sign factors do not appear. 

This paper is organized as follows. Section 2 reviews the delta functions on the real twistor space and their sign factors.  
Section 3 presents a new geometrical interpretation of sign factors.
Section 4 proposes a new definition of the delta functions on the dual complex (super) twistor space and shows that the inverse Fourier transforms of these delta functions are the scattering amplitudes in $\mathcal{N}=4$ SYM.  
Finally, Section 5 summarizes our study.

\section{Review of the delta functions on the real twistor space}

The delta function on the real twistor space 
$\mathbb{RP}^{3}$ is defined by using the following parameter integral \cite{MS}:
\begin{align}
\delta^{(3)}_{-n-4}(W,Y)
=\int_{-\infty}^{\infty} \dfrac{dt}{t^{1+n}} \text{sgn} (t) 
\delta^{(4)}(W-tY),
\label{d1}
\end{align}
where $W$ and $Y$ represent two points in $\mathbb{RP}^{3}$.
Hereafter, we denote the homogeneous coordinates of a point $W$ in $\mathbb{RP}^{3}$ by $W_{\alpha}$ ( $\alpha=0,1,2,3$ ).
In this definition, $\delta^{(3)}_{-n-4}(W,Y)$ has the support on the line $W_{\alpha} - t Y_{\alpha}=0$, that is, it is nonzero when $W$ and $Y$ are on the same line passing through the origin in $\mathbb{R}^{4}$.
Therefore, on the support, $W$ and $Y$
are the same points in $\mathbb{RP}^{3}$.
We see that $\delta^{(3)}_{-n-4}(W,Y)$ has the property
\begin{align}
\int_{\mathbb{RP}^{3}} f(W) \delta^{(3)}_{-n-4}(W,Y) D^{3} W = f(Y)
\end{align}
where $f(W)$ is a homogeneous function of degree $n$ and $D^{3}W$ is the volume element of $\mathbb{RP}^{3}$:
\begin{align}
D^{3}W := \dfrac{1}{6} \epsilon^{\alpha \beta \gamma \delta}
W_{\alpha} dW_{\beta} \wedge dW_{\gamma} \wedge dW_{\delta}.
\end{align}
Furthermore, $\delta^{(3)}_{-n-4}(W,Y)$ has a 
degree of
homogeneity $(-n-4)$ in $W$ and $n$ in $Y$: 
\begin{align}
\delta^{(3)}_{-n-4}(aW,Y) = \dfrac{1}{a^{n+4}} \delta^{(3)}_{-n-4}(W,Y),
\,\,\,\,\,
\delta^{(3)}_{-n-4}(W,bY) = b^{n} \delta^{(3)}_{-n-4}(W,Y).
\label{4}
\end{align}
From Eq. $(\ref{d1})$, we see that $\delta^{(3)}_{-n-4}(W,Y)$ is conformally invariant ($SL(4,\mathbb{R})$ invariant) because the integrand $\delta^{(4)}(W-tY)$
is manifestly conformally invariant.
Furthermore, a tilded $\delta$-function is defined by 
\begin{align}
\tilde{\delta}^{(3)}_{-n-4}(W,Y)
:= \int_{-\infty}^{\infty} \dfrac{dt}{t^{1+n}} 
\delta^{(4)}(W-tY) 
\label{tdelta}
\end{align}
without the sign factor.
This is associated with the delta function $\delta^{(3)}_{-n-4}(W,Y)$ through the relation:
\begin{align}
{\delta}^{(3)}_{-n-4}(W,Y) 
=\int_{-\infty}^{\infty} \dfrac{dt}{t^{1+n}}
\text{sgn} \left( \dfrac{W}{Y}\right) 
\delta^{(4)}(W-tY) 
= \text{sgn} \left( \dfrac{W}{Y}\right)
{\tilde{\delta}}^{(3)}_{-n-4}(W,Y), 
\end{align}
because $t=W/Y$ on the support of the integrand $\delta^{(4)}(W-tY)$ of $\delta^{(3)}_{-n-4}(W,Y)$.

The collinear  $\delta$-function which enforces the collinearity of the three twistors $W_{1}$, $W_{2}$, and $W_{3}$ in $\mathbb{RP}^{3}$ is defined by using the following parameter integral \cite{MS}:
\begin{align}
\delta^{(2)}(W_{1},W_{2},W_{3})
:=\int_{\mathbb{R}^{2}} \dfrac{ds}{s} \dfrac{dt}{t}
\text{sgn} (s) \text{sgn} (t) \delta^{(4)}
(W_{1}-sW_{2}-tW_{3}).
\label{w123}
\end{align}
This is a component of the three-particle MHV amplitude in twistor space.
Furthermore, the collinear $\tilde{\delta}$-function 
is defined by
\begin{align}
\tilde{\delta}^{(2)}(W_{1},W_{2},W_{3})
:=\int_{\mathbb{R}^{2}} \dfrac{ds}{s} \dfrac{dt}{t}
\delta^{(4)}
(W_{1}-sW_{2}-tW_{3})
\label{delta2}
\end{align}
without the sign factors.
We see that $\delta^{(2)}(W_{1},W_{2},W_{3})$ and $\tilde{\delta}^{(2)}(W_{1},W_{2},W_{3})$ are conformally invariant ($SL(4,\mathbb{R})$ invariant) because the integrand 
$\delta^{(4)}(W_{1}-sW_{2}-tW_{3})$ 
in Eqs. $(\ref{w123})$ and $(\ref{delta2})$
is manifestly conformally invariant. 
Here, we denote the coordinates of $W_{i}$ by $W_{i \alpha}:=(\lambda_{iA},\mu_{i}^{A^{\prime}})$ ($i=1,2,3$\,;\,$A=0,1$\,;\,$A^{\prime}=0^{\prime},1^{\prime}$ ) and define $\langle ij \rangle := \lambda_{iA} \lambda_{j}^{A}=\lambda_{iA} \epsilon^{AB} \lambda_{jB}
=\lambda_{i0} \lambda_{j1} - \lambda_{i1} \lambda_{j0}$.
In this notation, we have $s=\langle 31 \rangle/\langle 32 \rangle$ and
$t=\langle 12 \rangle/\langle 32 \rangle$ on the support of the integrand $\delta^{(4)}
(W_{1}-sW_{2}-tW_{3})$ in Eqs. $(\ref{w123})$ and $(\ref{delta2})$ because the equations $\lambda_{1A}-s\lambda_{2A}-t\lambda_{3A}=0$ are valid. 
By using this expression of $s$ and $t$, we have $st=\langle 31 \rangle \langle 12 \rangle/\langle 32 \rangle^{2}$, that is, $\text{sgn} (st) = \text{sgn} \left( \langle 31 \rangle \langle 12 \rangle \right)$.
From this expression and Eqs. $(\ref{w123})$ and $(\ref{delta2})$, we see that   
$\delta^{(2)} (W_{1},W_{2},W_{3})$ and $\tilde{\delta}^{(2)}(W_{1},W_{2},W_{3})$ are related by
\begin{align}
\delta^{(2)} (W_{1},W_{2},W_{3})
= \text{sgn} \left( \langle 31 \rangle \langle 12 \rangle \right)
\tilde{\delta}^{(2)}(W_{1},W_{2},W_{3}).
\label{3112}
\end{align}
Furthermore, we define the collinear ${\delta}$-function and $\tilde{\delta}$-function on the real super twistor space $\mathbb{RP}^{3|4}$ as follows:
\begin{align}
&
{\delta}^{(2|4)} \left( \text{W}_{1}, \text{W}_{2}, \text{W}_{3} \right)
:= \int_{\mathbb{R}^2} \dfrac{ds}{s} \dfrac{dt}{t}
\text{sgn}(s) \text{sgn}(t)
\delta^{(4|4)} \left( \text{W}_{1} - s \text{W}_{2} - t \text{W}_{3}   \right),
\label{deltaW123}
\\
&
\tilde{\delta}^{(2|4)} \left( \text{W}_{1}, \text{W}_{2}, \text{W}_{3} \right)
:= \int_{\mathbb{R}^2} \dfrac{ds}{s} \dfrac{dt}{t}
\delta^{(4|4)} \left( \text{W}_{1} - s \text{W}_{2} - t \text{W}_{3} \right),
\label{tildeW123}
\end{align}
where $\text{W}_{i}=\left( \lambda_{iA}, \mu_{i}^{A^\prime}, \chi_{ik} \right)$ ( $k=1,2,3,4$ ) denotes the bosonic coordinates $\lambda_{iA}$,  $\mu_{i}^{A^\prime}$, and
the fermionic coordinates $\chi_{ik}$ of $\mathbb{RP}^{3|4}$.
We see that ${\delta}^{(2|4)} \left( \text{W}_{1}, \text{W}_{2}, \text{W}_{3} \right)$ and $\tilde{\delta}^{(2|4)} \left( \text{W}_{1}, \text{W}_{2}, \text{W}_{3} \right)$ are superconformally invariant, because the integrand 
$\delta^{(4|4)} \left( \text{W}_{1} - s \text{W}_{2} - t \text{W}_{3} \right)$ 
in Eqs. $(\ref{deltaW123})$ and $(\ref{tildeW123})$
is manifestly superconformally invariant. 
In the same manner as Eq. $(\ref{3112})$,
we have the relationship
\begin{align}
{\delta}^{(2|4)} \left( \text{W}_{1}, \text{W}_{2}, \text{W}_{3} \right)
= \text{sgn} (\langle 31 \rangle \langle 12 \rangle)\,
\tilde{\delta}^{(2|4)} \left( \text{W}_{1}, \text{W}_{2}, \text{W}_{3} \right).
\end{align}
By carrying out the integration in Eq. $(\ref{tildeW123})$, we have
\begin{align}
(\ref{tildeW123})= \text{sgn} (\langle 23 \rangle)
\dfrac{\delta^{(2)} \left( \langle 23 \rangle \mu_{1}^{A^\prime} + \langle 31 \rangle \mu_{2}^{A^\prime} + \langle 12 \rangle \mu_{3}^{A^\prime} \right)
\delta^{(4)} \left( \langle 23 \rangle \chi_{1k} + \langle 31 \rangle \chi_{2k} + \langle 12 \rangle \chi_{3k} \right)
}{\langle 12 \rangle \langle 23 \rangle \langle 31 \rangle}.
\end{align}
Here, the Fourier transform of the three-particle MHV amplitude for $\mathcal{N}=4$ SYM in the split signature from the momentum superspace to the real super twistor space is represented as follows \cite{MS,ACCK}:
\begin{align}
\mathcal{A}_{\text{MHV}} \left( \text{W}_{1}, \text{W}_{2}, \text{W}_{3} \right)
&= i \text{sgn}(\langle 23 \rangle)\,
\tilde{\delta}^{(2|4)} \left( \text{W}_{1}, \text{W}_{2}, \text{W}_{3}\right)
\notag \\
&= i \text{sgn}(\langle 12 \rangle \langle 23 \rangle \langle 31 \rangle)\,
{\delta}^{(2|4)} \left( \text{W}_{1}, \text{W}_{2}, \text{W}_{3}\right).
\label{amhvsgn}
\end{align}
The sign factor in Eq. $(\ref{amhvsgn})$
ensures the cyclic symmetry of the amplitude and the property of particles interchange (for example, when particles 1 and 2 are interchanged, minus sign times the amplitude)
because ${\delta}^{(2|4)} \left( \text{W}_{1}, \text{W}_{2}, \text{W}_{3}\right)$ is symmetric under the interchange of variables.

Although the collinear $\delta$-function and $\tilde{\delta}$-function  
 in $\mathcal{A}_{\text{MHV}} \left( \text{W}_{1}, \text{W}_{2}, \text{W}_{3} \right)$ are superconformally invariant,
the sign factors $\text{sgn}(\langle 23 \rangle)$ and
$\text{sgn}(\langle 12 \rangle \langle 23 \rangle \langle 31 \rangle)$ are not invariant, particularly in conformal inversion. Therefore, $\mathcal{A}_{\text{MHV}} \left( \text{W}_{1}, \text{W}_{2}, \text{W}_{3} \right)$
is not superconformally invariant.
Here, the conformal inversion in the real super twistor space is a discrete element of $SL(4,\mathbb{R})$ whose square is the identity \cite{KS}.  This operation interchanges variables $\lambda$ and $\mu$ but does not change $\chi$ as follows: 
\begin{align}
I[\lambda_{A}] = \mu_{A^\prime}, \quad
I[\mu_{A^\prime}] = \lambda_{A}, \quad
I[\lambda^{A}] = -\mu^{A^\prime}, \quad
I[\mu^{A^\prime}] = -\lambda^{A}, \quad
I[\chi_{k}] = \chi_{k}.
\end{align}
Hence, for the conformal inversion, the sign factor transforms as
\begin{align}
I[\langle ij \rangle] 
= I[ \lambda_{iA} \lambda_{j}^{A} ]
= I[ \lambda_{iA} ] I[ \lambda_{j}^{A} ]
= \mu_{iA^\prime} \left( -\mu_{j}^{A^\prime} \right)
= - \mu_{iA^\prime} \mu_{j}^{A^\prime}
\ne \langle ij \rangle.
\end{align}
Therefore, the sign factor in Eq. $(\ref{amhvsgn})$ is not invariant under conformal inversion.
Furthermore, the Fourier transform of the three-particle $\overline{\text{MHV}}$ amplitude for $\mathcal{N}=4$ SYM in the split signature from the momentum superspace to the real super twistor space is represented as follows \cite{MS}:
\begin{align}
&\mathcal{A}_{\overline{\text{MHV}}} \left( \text{W}_{1}, \text{W}_{2}, \text{W}_{3} \right)
\notag \\
= 
&\text{sgn} \left( \left[ \dfrac{\partial}{\partial \text{W}_{2}} \dfrac{\partial}{\partial \text{W}_{3}} \right] \right)\,
\tilde{\delta}^{(3|4)} \left( \text{W}_{1}, \text{W}_{2} \right) \tilde{\delta}^{(3|4)} \left( \text{W}_{1}, \text{W}_{3} \right),
\label{tpdfists}
\end{align}
where the formal operator $\text{sgn} ([\partial_{\text{W}_{2}} \partial_{\text{W}_{3}}])$ is not invariant under super conformal transformation. Therefore, $\mathcal{A}_{\overline{\text{MHV}}} \left( \text{W}_{1}, \text{W}_{2}, \text{W}_{3} \right)$ is not superconformally invariant.

\section{New geometrical consideration of the sign factors}

In this section, we show that the sign factors play a role to divide the domain of the delta functions on the real twistor space by $O(1)$, for the first time.

\begin{proposition}
The domain of $\tilde{\delta}^{(3)}_{-n-4}(W,Y)$ defined by Eq. $(\ref{tdelta})$ without the sign factor is $\mathbb{RP}^{3} \times O(1) \approx S^{3}$.
However, the domain of ${\delta}^{(3)}_{-n-4}(W,Y)$ defined by Eq. $(\ref{d1})$ with the sign factor is $\mathbb{RP}^{3}$. 
\end{proposition}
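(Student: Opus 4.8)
The plan is to track what the sign factor does to the homogeneity/parity structure of the integrand, and then identify the resulting quotient geometry. First I would recall that the support of $\delta^{(4)}(W - tY)$ forces $t = W/Y$ (componentwise, on the common line through the origin in $\mathbb{R}^4$), so that integrating out $t$ collapses the four-dimensional delta function to a distribution supported on the ray (or pair of rays) in $\mathbb{R}^4$ spanned by $Y$. The key observation is that in $\tilde{\delta}^{(3)}_{-n-4}(W,Y)$ the integration variable $t$ runs over all of $\mathbb{R}\setminus\{0\}$, so both $t>0$ (giving $W$ a positive multiple of $Y$) and $t<0$ (giving $W$ a negative multiple of $Y$) contribute; hence the distribution ``sees'' both antipodal rays $\mathbb{R}_{>0}Y$ and $\mathbb{R}_{<0}Y$ as genuinely distinct loci of support. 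Because $S^3$ is exactly the set of rays in $\mathbb{R}^4$ (the quotient of $\mathbb{R}^4\setminus\{0\}$ by positive rescalings), while $\mathbb{RP}^3$ identifies antipodal rays, this says $\tilde\delta$ is naturally a distribution on $S^3$, and the two-to-one covering $S^3 \to \mathbb{RP}^3$ is precisely the fibration with fiber $O(1) = \{\pm 1\}$, i.e. $S^3 \approx \mathbb{RP}^3 \times O(1)$ locally (as a double cover).

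For the second statement, I would insert $\mathrm{sgn}(t)$ and use the identity already recorded in the excerpt, $\delta^{(3)}_{-n-4}(W,Y) = \mathrm{sgn}(W/Y)\,\tilde\delta^{(3)}_{-n-4}(W,Y)$, together with the homogeneity law $(\ref{4})$: under $W \mapsto -W$ (the deck transformation of the double cover $S^3\to\mathbb{RP}^3$), the factor $\mathrm{sgn}(W/Y)$ flips sign in just the way needed so that $\delta^{(3)}_{-n-4}(-W,Y) = (-1)^{n+4}\delta^{(3)}_{-n-4}(W,Y)$ is consistent with a well-defined section of the appropriate line bundle over $\mathbb{RP}^3$ — the sign factor supplies exactly the antipodal twist that $\tilde\delta$ lacks. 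Concretely, I would show that the $t<0$ and $t>0$ contributions to $\delta^{(3)}_{-n-4}$ are identified under $Y \mapsto -Y$ (equivalently $W\mapsto -W$) because $\mathrm{sgn}(t)$ absorbs the sign ambiguity, so the support descends to a single point of $\mathbb{RP}^3$ rather than a pair of points of $S^3$. Thus the domain of $\delta^{(3)}_{-n-4}$ is $\mathbb{RP}^3$, and comparing with the previous paragraph the role of the sign factor is precisely to quotient the $O(1)$ factor out of $S^3 \approx \mathbb{RP}^3\times O(1)$.

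The main obstacle I anticipate is making the phrase ``domain of the delta function'' precise enough that the claimed homeomorphism $\mathbb{RP}^3\times O(1)\approx S^3$ is a genuine statement rather than a heuristic: one must decide whether ``domain'' means the support of the distribution, the space on which the test functions live, or the base of the line bundle of which the delta function is a section. I would fix this by declaring that the domain is the manifold swept out by the support locus as $Y$ ranges over the relevant space, with $W$ the active variable — for $\tilde\delta$ this is the set of rays in $\mathbb{R}^4$ because the $t>0$ and $t<0$ branches are never identified, giving $S^3$; for $\delta$ the $\mathrm{sgn}(t)$ insertion glues the branches under the antipodal map, giving $\mathbb{RP}^3$. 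The only other point needing care is the standard but slightly informal identification $S^3 \approx \mathbb{RP}^3 \times O(1)$, which should be read as the statement that $S^3$ is the total space of the (trivializable-as-a-set, or pulled-back) $O(1)$-bundle, i.e. the connected double cover, and I would state it in exactly that covering-space language to avoid the false impression of a global product of spaces in the topological category.
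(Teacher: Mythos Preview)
Your overall strategy---tracking behaviour under the antipodal map $W\mapsto -W$ and reading the quotient geometry off the double cover $S^3\to\mathbb{RP}^3$---is a legitimate and genuinely different route from the paper's. The paper never argues via deck transformations: it fixes the coordinate chart $U=\{W_0\neq 0\}$, carries out the $t$-integral in both (\ref{d1}) and (\ref{tdelta}) explicitly, and obtains the closed formula
\[
\delta^{(3)}_{-n-4}(W,Y)=\frac{Y_0}{|Y_0|}\frac{W_0}{|W_0|}\,\tilde{\delta}^{(3)}_{-n-4}(W,Y).
\]
From this it is immediate that $\tilde{\delta}$ coincides with $\delta$ on each half-chart $\{W_0>0\}$ or $\{W_0<0\}$, and the paper then identifies $M_{1,4}(\mathbb{R})/\mathbb{R}^+\approx \mathbb{RP}^3\times O(1)\approx S^3$ through the homogeneous-space description $O(4)/O(3)$. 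Your covering-space language is more conceptual; the paper's chart computation is more concrete and makes the sign discrepancy visible without any bundle vocabulary.

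There is, however, a genuine gap in your first paragraph. The observation that ``$t$ runs over all of $\mathbb{R}\setminus\{0\}$, so both $t>0$ and $t<0$ contribute'' applies \emph{equally} to $\delta^{(3)}_{-n-4}$: the integration range in (\ref{d1}) and (\ref{tdelta}) is identical, and both distributions are supported on exactly the same pair of rays. So this cannot be what singles out $\tilde{\delta}$ as living on $S^3$ rather than $\mathbb{RP}^3$. What you actually need is the parity computation you sketch for $\delta$ in your second paragraph, carried out \emph{also} for $\tilde{\delta}$: substituting $t\mapsto -t$ one finds
\[
\tilde{\delta}^{(3)}_{-n-4}(-W,Y)=(-1)^{n+1}\,\tilde{\delta}^{(3)}_{-n-4}(W,Y)
\quad\text{versus}\quad
\delta^{(3)}_{-n-4}(-W,Y)=(-1)^{n}\,\delta^{(3)}_{-n-4}(W,Y),
\]
the extra $-1$ coming precisely from the absent $\operatorname{sgn}(t)$. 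It is this mismatch with the degree-$(-n-4)$ homogeneity required by (\ref{4}) that obstructs the descent of $\tilde{\delta}$ from $S^3$ to $\mathbb{RP}^3$. Once you insert that one-line check, your argument closes; as written, the first paragraph asserts the conclusion rather than establishing it. Your closing caveat about the informal identification $S^3\approx\mathbb{RP}^3\times O(1)$ is well taken and in fact applies to the paper's own statement as well.
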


\begin{proof}
First, we carry out the integration of Eq. $(\ref{d1})$
in subset 
$U = \{ W=(W_{0},W_{1},W_{2},W_{3}) \in M_{1,4}(\mathbb{R}) | W_{0} \ne 0  \}$, where
$M_{1,4}(\mathbb{R})$ is the set of all nonzero vectors in $\mathbb{R}^{4}$.
By using $W_{0} \ne 0$ and $Y_{0} \ne 0$ in $U$, we obtain 
\begin{align}
\delta_{-n-4}^{(3)} (W,Y)
&= \int_{-\infty}^{\infty} \dfrac{dt}{t^{1+n}} \text{sgn} (t)
\delta (W_{0}-tY_{0}) \prod_{\alpha=1}^{3} \delta (W_{\alpha}-tY_{\alpha})
\notag \\
&= \left( \dfrac{Y_{0}}{W_{0}} \right)^{1+n} \text{sgn} \left( \dfrac{W_{0}}{Y_{0}} \right) \dfrac{1}{|Y_{0}|}
\prod_{\alpha=1}^{3} \delta \left(W_{\alpha}-\dfrac{W_{0}}{Y_{0}}Y_{\alpha} \right)
\notag \\
&= \left( \dfrac{Y_{0}}{W_{0}} \right)^{n} \dfrac{1}{|W_{0}|^{4}} 
\prod_{\alpha=1}^{3} \delta \left(\dfrac{W_{\alpha}}{W_{0}}-\dfrac{Y_{\alpha}}{Y_{0}} \right).
\label{drp3}
\end{align}
Next, we carry out the integration of Eq. $(\ref{tdelta})$
in $U$, such that 
\begin{align}
\tilde{\delta}_{-n-4}^{(3)}(W,Y) 
&:= \int_{-\infty}^{\infty} \dfrac{dt}{t^{1+n}} \delta \left( W_{0} - t Y_{0} \right)
\prod_{\alpha=1}^{3} \delta \left( W_{\alpha} - t Y_{\alpha} \right)
\notag \\
&= \int_{-\infty}^{\infty} \dfrac{dt}{t^{1+n}} \dfrac{1}{|Y_{0}|} \delta \left( \dfrac{W_{0}}{Y_{0}} - t \right)
\prod_{\alpha=1}^{3} \delta \left( W_{\alpha} - t Y_{\alpha} \right)
\notag \\
&= \left( \dfrac{Y_{0}}{W_{0}} \right)^{1+n} \dfrac{1}{|Y_{0}||W_{0}|^{3}} \prod_{\alpha=1}^{3} \delta 
\left( \dfrac{W_{\alpha}}{W_{0}} - \dfrac{Y_{\alpha}}{Y_{0}} \right).
\label{ds3}
\end{align}
From Eqs. $(\ref{drp3})$ and $(\ref{ds3})$, we see the relationship
\begin{align}
\delta^{(3)}_{-n-4}(W,Y)
= \dfrac{Y_{0}}{|Y_{0}|}
\dfrac{W_{0}}{|W_{0}|} \,\tilde{\delta}^{(3)}_{-n-4}(W,Y).
\label{tdfadf}
\end{align}
Therefore, we obtain 
\begin{align}
\int_{U} f(W) \tilde{\delta}^{(3)}_{-n-4}(W,Y) D^{3}W
= \dfrac{Y_{0}}{|Y_{0}|} \dfrac{W_{0}}{|W_{0}|} f(Y)
\label{idfd}
\end{align}
where $f(W)$ is a homogeneous function of degree $n$, by performing integration in domain $U$. 

Here, $\mathbb{RP}^{3}$ is the set of all lines in $\mathbb{R}^{4}$ through the origin.  
This is the quotient set of
 $M_{1,4}(\mathbb{R})$ 
by the equivalence relation in which the elements multiplied by a nonzero real number are equivalent:  
\begin{align}
\mathbb{RP}^{3} = M_{1,4}(\mathbb{R})/\mathbb{R}^{*}
\end{align}
where $\mathbb{R}^{*}:=\mathbb{R}-\{0\} \cong O(1) \times \mathbb{R}^{+}$. 
Here, $\mathbb{R}^{+}:=\{x \in \mathbb{R} \,|\, x>0\}$.
In $U$,
for $W \in M_{1,4}(\mathbb{R})$ and $g \in \mathbb{R}^{+}$, the sign of $W_{0}$ is the same as that of $(gW)_{0}$. Therefore, in $M_{1,4}(\mathbb{R})/\mathbb{R}^{+}$, the elements with $W_{0}>0$ differ from those with $W_{0}<0$.
By identifying these elements, that is, 
dividing $M_{1,4}(\mathbb{R})/\mathbb{R}^{+}$ by $O(1)$, we obtain $\mathbb{RP}^{3}$. Therefore, $M_{1,4}(\mathbb{R})/\mathbb{R}^{+} \approx \mathbb{RP}^{3} \times O(1)$.
By comparing this fact with  
\begin{align}
\mathbb{RP}^{3} \approx O(4)/[O(1) \times O(3)], 
\end{align}
we can see 
\begin{align}
M_{1,4}(\mathbb{R})/\mathbb{R}^{+} \approx
O(4)/O(3) \approx S^{3}.
\end{align}
From Eqs. $(\ref{tdfadf})$ and $(\ref{idfd})$, we observe that the tilded $\delta$-function $\tilde{\delta}^{(3)}_{-n-4}(W,Y)$ is equivalent to the delta function $\delta^{(3)}_{-n-4}(W,Y)$ in the subset of $U$ which is imposed the condition 
$W_{0}>0$ ($Y_{0}>0$) or $W_{0}<0$ ($Y_{0}<0$). 
Therefore, $\tilde{\delta}^{(3)}_{-n-4}(W,Y)$ is equal to 
${\delta}^{(3)}_{-n-4}(W,Y)$ in $M_{1,4}(\mathbb{R})/\mathbb{R}^{+} \approx \mathbb{RP}^{3} \times O(1) \approx S^{3}$. Hence, the domain of $\tilde{\delta}^{(3)}_{-n-4}(W,Y)$ is $S^{3}$. 
\end{proof}

\begin{proposition}
The domain of ${\delta}^{(2)}(W_{1},W_{2},W_{3})$ defined by  Eq. $(\ref{w123})$ with the sign factor is $G_{2,4}(\mathbb{R})$.
However,
the domain of $\tilde{\delta}^{(2)}(W_{1},W_{2},W_{3})$ defined by Eq. $(\ref{delta2})$ without the sign factor is $G_{2,4}(\mathbb{R}) \times O(1)$. 
\end{proposition}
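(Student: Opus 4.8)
The plan is to follow the proof of Proposition~1 step for step, with the common line $W=tY$ there replaced here by the common $2$-plane spanned by three coplanar twistors. First I would carry out the $s$- and $t$-integrations of Eqs.~(\ref{w123}) and~(\ref{delta2}) over the patch $U=\{(W_{1},W_{2},W_{3}) : \langle 23\rangle\neq 0\}$. On $U$ the matrix $(\lambda_{2A},\lambda_{3A})$ is invertible, so the $\lambda$-part of $\delta^{(4)}(W_{1}-sW_{2}-tW_{3})$ fixes $s=s_{0}:=\langle 31\rangle/\langle 32\rangle$ and $t=t_{0}:=\langle 12\rangle/\langle 32\rangle$ with Jacobian $1/|\langle 23\rangle|$; feeding this into the $\mu$-part gives closed local expressions for ${\delta}^{(2)}$ and $\tilde{\delta}^{(2)}$ (the purely bosonic analogue of the integrand in Eq.~(\ref{amhvsgn})), whose quotient is $\text{sgn}(s_{0}t_{0})=\text{sgn}(\langle 31\rangle\langle 12\rangle)$, i.e.\ Eq.~(\ref{3112}). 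This computation is essentially the one already in the text and is routine; its role, exactly as in Eqs.~(\ref{tdfadf})--(\ref{idfd}), is to exhibit that $\tilde{\delta}^{(2)}$ reproduces the sampling behaviour of ${\delta}^{(2)}$ only up to the extra factor $\text{sgn}(\langle 31\rangle\langle 12\rangle)$, and to record the homogeneities: ${\delta}^{(2)}(a_{1}W_{1},a_{2}W_{2},a_{3}W_{3})=a_{1}^{-4}\,{\delta}^{(2)}(W_{1},W_{2},W_{3})$, with an even power in every slot, whereas $\tilde{\delta}^{(2)}(a_{1}W_{1},a_{2}W_{2},a_{3}W_{3})=\text{sgn}(a_{2})\,\text{sgn}(a_{3})\,a_{1}^{-4}\,\tilde{\delta}^{(2)}(W_{1},W_{2},W_{3})$, so that $\tilde{\delta}^{(2)}$ carries a stray sign whenever one flips the orientation of a representative of $W_{2}$ or $W_{3}$.

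Next I would read off the geometry. The manifest factor $\delta^{(4)}(W_{1}-sW_{2}-tW_{3})$ forces $W_{1}$ into the span of $W_{2}$ and $W_{3}$; on $U$ these are linearly independent, so the three twistors are coplanar and span a single $2$-plane $\Pi\subset\mathbb{R}^{4}$, i.e.\ a point of $G_{2,4}(\mathbb{R})$, with any linearly independent pair among $W_{1},W_{2},W_{3}$ a basis of $\Pi$; once the rescalings of each $W_{i}$ and the position of $W_{1}$ inside $\Pi$ (which the $s,t$-integration removes) are quotiented out, the plane $\Pi$ is the remaining datum. Writing $M_{2,4}(\mathbb{R})$ for the set of rank-$2$ real $2\times4$ matrices (ordered pairs of linearly independent vectors in $\mathbb{R}^{4}$), one has $G_{2,4}(\mathbb{R})=M_{2,4}(\mathbb{R})/GL(2,\mathbb{R})$, with $GL(2,\mathbb{R})$ acting by change of basis of the plane. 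Since $GL(2,\mathbb{R})$ has two connected components, with $GL(2,\mathbb{R})/GL^{+}(2,\mathbb{R})\cong O(1)$, quotienting only by the identity component gives the connected double cover $M_{2,4}(\mathbb{R})/GL^{+}(2,\mathbb{R})$ — the oriented Grassmannian, which one may further identify with $SO(4)/[SO(2)\times SO(2)]\approx S^{2}\times S^{2}$, and which in the notation of Proposition~1 reads $G_{2,4}(\mathbb{R})\times O(1)$; this is the exact analogue of $\mathbb{RP}^{3}\approx O(4)/[O(1)\times O(3)]$ versus $S^{3}\approx O(4)/O(3)$. Because ${\delta}^{(2)}$ has clean (even-power) homogeneity in each slot, its value is unambiguous once the triple is specified projectively, so it descends to $G_{2,4}(\mathbb{R})$; because $\tilde{\delta}^{(2)}$ picks up the orientation signs $\text{sgn}(a_{2}),\text{sgn}(a_{3})$ — precisely the sign factor $\text{sgn}(\langle 31\rangle\langle 12\rangle)$, which is invariant under $GL^{+}(2,\mathbb{R})$ and flips under $GL^{-}(2,\mathbb{R})$ — it is well defined only after an orientation of $\Pi$ is fixed, hence it lives on $M_{2,4}(\mathbb{R})/GL^{+}(2,\mathbb{R})\approx G_{2,4}(\mathbb{R})\times O(1)$. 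Finally, as in Proposition~1, the conclusion on $U$ patches together, since any triple with some bracket $\langle ij\rangle\neq 0$ lies in such a patch.

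The step I expect to be the main obstacle is the last one — making precise the sense in which ``${\delta}^{(2)}$ descends to $G_{2,4}(\mathbb{R})$ while $\tilde{\delta}^{(2)}$ descends only to its double cover,'' and in particular matching the bookkeeping of sign twists to a single copy of $O(1)$. The subtlety is that $\tilde{\delta}^{(2)}$ acquires a \emph{separate} sign under each rescaling $W_{2}\mapsto a_{2}W_{2}$ and $W_{3}\mapsto a_{3}W_{3}$ (its twisted homogeneity, the analogue of the stray $\text{sgn}(a)$ in Proposition~1), and one must check that, once the projective identifications of the three points inside $\Pi$ are imposed, these combine into exactly the one $\mathbb{Z}/2$ that separates $GL^{+}(2,\mathbb{R})$ from $GL(2,\mathbb{R})$, namely the orientation of the plane. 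One also has to keep in mind that the measure $\dfrac{ds}{s}\dfrac{dt}{t}$ respects only the diagonal torus and the permutations inside $GL(2,\mathbb{R})$, not the whole group, so the transformation law of $\tilde{\delta}^{(2)}$ under a general change of basis of $\Pi$ must be deduced rather than assumed. The remaining ingredients — the $s,t$-integration and the Jacobian $1/|\langle 23\rangle|$ — are routine.
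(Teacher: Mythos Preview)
Your proposal is correct and follows essentially the same route as the paper: both carry out the $s,t$-integration on the chart $\langle 23\rangle\neq 0$ to recover Eq.~(\ref{3112}), identify $G_{2,4}(\mathbb{R})=M_{2,4}(\mathbb{R})/GL(2,\mathbb{R})$ versus its orientation double cover $M_{2,4}(\mathbb{R})/GL^{+}(2,\mathbb{R})\approx G_{2,4}(\mathbb{R})\times O(1)$, and then read off that the extra $\text{sgn}(\langle 31\rangle\langle 12\rangle)$ is precisely the obstruction to descending past that $O(1)$. The only cosmetic differences are that the paper writes $GL^{+}(2,\mathbb{R})$ as $\mathbb{R}^{+}\times SL(2,\mathbb{R})$ and argues directly from the sign of the leading $2\times 2$ minor rather than via your per-slot homogeneity bookkeeping, and it does not pursue the further identification $SO(4)/[SO(2)\times SO(2)]\approx S^{2}\times S^{2}$; the ``obstacle'' you flag is treated at the same informal level there as in your outline.
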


\begin{proof}
We consider Eqs. $(\ref{w123})$ and $(\ref{delta2})$
in a subset
\begin{align}
V = \left\{\,
\begin{pmatrix}
B_{11} & B_{12} & B_{13} & B_{14} \\
B_{21} & B_{22} & B_{23} & B_{24}
\end{pmatrix}
\in M_{2,4}(\mathbb{R})
\, \middle| \,
\text{det} 
\begin{pmatrix}
B_{11} & B_{12} \\
B_{21} & B_{22}
\end{pmatrix}
\ne 0
\,\right\},
\label{V}
\end{align}
where $M_{2,4}(\mathbb{R})$ is the set of all 2 $\times$ 4 matrices with a rank of 2. First, we compose a 2 $\times$ 4 matrix by vertically arranging $W_{2}=( \lambda_{2A}, \mu_{2}^{A^{\prime}} )$ and $W_{3}=( \lambda_{3A}, \mu_{3}^{A^{\prime}} )$.  This is an element of $M_{2,4}(\mathbb{R})$ because $W_{2}$ and $W_{3}$ are assumed to be linearly independent.
For this matrix, we have 
 \begin{align}
\text{det} 
\begin{pmatrix}
\lambda_{20} & \lambda_{21} \\
\lambda_{30} & \lambda_{31}
\end{pmatrix}
= \lambda_{20} \lambda_{31} - \lambda_{21} \lambda_{30}
= \langle 2 3 \rangle 
= - \langle 3 2 \rangle
\ne 0
\end{align}
in $V$, that is, $\lambda_{2}$ and $\lambda_{3}$ are linearly independent.
Therefore, we can express the delta function of $\lambda$ on the right-hand side of Eq. $(\ref{w123})$ and Eq. $(\ref{delta2})$ as follows:
\begin{align}
\delta^{(2)} 
(\lambda_{1} - s \lambda_{2} - t \lambda_{3})
= \dfrac{1}{|\langle 3 2 \rangle|}
\delta \left( s-\dfrac{\langle 31 \rangle}{\langle 32 \rangle} \right)
\delta \left( t-\dfrac{\langle 12 \rangle}{\langle 32 \rangle} \right).
\end{align}
Here, we have $\langle 31 \rangle \ne 0$ and $\langle 12 \rangle \ne 0$, because the 2 $\times$ 4 matrix composed by vertically arranging $W_{3}=( \lambda_{3A}, \mu_{3}^{A^{\prime}} )$ and $W_{1}=( \lambda_{1A}, \mu_{1}^{A^{\prime}} )$ belongs to $V$ and that composed by vertically arranging $W_{1}=( \lambda_{1A}, \mu_{1}^{A^{\prime}} )$ and $W_{2}=( \lambda_{2A}, \mu_{2}^{A^{\prime}} )$ also belongs to $V$.
Therefore, by carring out the integration in 
 Eqs. $(\ref{w123})$ and $(\ref{delta2})$, we have
\begin{align}
\delta^{(2)} (W_{1},W_{2},W_{3})
&= \left| \dfrac{\langle 32 \rangle}{\langle 31 \rangle \langle 12 \rangle} \right|
\delta^{(2)} \left( \mu_{1} - \dfrac{\langle 31 \rangle}{\langle 32 \rangle} \mu_{2} - \dfrac{\langle 12 \rangle}{\langle 32 \rangle} \mu_{3}\right),
\\
\tilde{\delta}^{(2)} (W_{1},W_{2},W_{3})
&= \dfrac{|\langle 32 \rangle|}{\langle 31 \rangle \langle 12 \rangle} 
\delta^{(2)} \left( \mu_{1} - \dfrac{\langle 31 \rangle}{\langle 32 \rangle} \mu_{2} - \dfrac{\langle 12 \rangle}{\langle 32 \rangle} \mu_{3}\right).
\end{align}
From these equations, we can confirm the relationship in Eq. $(\ref{3112})$:
\begin{align}
\delta^{(2)} (W_{1},W_{2},W_{3})
= \dfrac{\langle 31 \rangle \langle 12 \rangle}{|\langle 31 \rangle \langle 12 \rangle|}\,
\tilde{\delta}^{(2)} (W_{1},W_{2},W_{3}).
\label{W123}
\end{align}

The Grassmann manifold $G_{2,4}(\mathbb{R})$ is the set of all two-dimensional planes in $\mathbb{R}^{4}$.
This is equivalent to the quotient space $M_{2,4}(\mathbb{R})/GL(2,\mathbb{R})$. 
Here, we have $GL(2,\mathbb{R}) \approx O(1) \times \mathbb{R}^{+} \times SL(2,\mathbb{R})$ because of $GL(2,\mathbb{R})/SL(2,\mathbb{R})
\cong \mathbb{R}^{*} \cong O(1) \times \mathbb{R}^{+}$. Therefore, 
\begin{align}
G_{2,4}(\mathbb{R}) \approx 
\dfrac{M_{2,4}(\mathbb{R})}{O(1) \times \mathbb{R}^{+} \times SL(2,\mathbb{R})}.
\label{g24v1}
\end{align}
Furthermore, for $g \in \mathbb{R}^{+} \times SL(2,\mathbb{R})$, the sign of the determinant
of $\begin{pmatrix} B_{11} & B_{12} \\ B_{21} & B_{22} \end{pmatrix}$ of $V$
is the same as that of $g\begin{pmatrix} B_{11} & B_{12} \\ B_{21} & B_{22} \end{pmatrix}$
because of $\det g >0$. 
Therefore, in $M_{2,4}(\mathbb{R})/[\mathbb{R}^{+} \times SL(2,\mathbb{R})]$, the elements with $\det \begin{pmatrix} B_{11} & B_{12} \\ B_{21} & B_{22} \end{pmatrix}>0$ differ from those with
$\det \begin{pmatrix} B_{11} & B_{12} \\ B_{21} & B_{22} \end{pmatrix}<0$. 
By identifying these elements, that is, dividing $M_{2,4}(\mathbb{R})/[\mathbb{R}^{+} \times SL(2,\mathbb{R})]$ by $O(1)$, we obtain $G_{2,4}(\mathbb{R})$. Therefore,
${M_{2,4}(\mathbb{R})}/[{\mathbb{R}^{+} \times SL(2,\mathbb{R})}] \approx G_{2,4}(\mathbb{R}) \times O(1)$.
Here, from Eq. $(\ref{W123})$, we see that $\tilde{\delta}^{(2)}(W_{1},W_{2},W_{3})$ is equal to 
$\delta^{(2)}(W_{1},W_{2},W_{3})$
in the subset of $V$ which is imposed the condition  
$\det \begin{pmatrix} B_{11} & B_{12} \\ B_{21} & B_{22} \end{pmatrix}>0$
(i.e. $\langle 1 2 \rangle>0$, $\langle 3 1 \rangle>0$ in this subset) or the condition $\det \begin{pmatrix} B_{11} & B_{12} \\ B_{21} & B_{22} \end{pmatrix}<0$
(i.e. $\langle 1 2 \rangle<0$, $\langle 3 1 \rangle<0$ in this subset). 
Therefore, $\tilde{\delta}^{(2)}(W_{1},W_{2},W_{3})$ is equal to $\delta^{(2)}(W_{1},W_{2},W_{3})$ in 
${M_{2,4}(\mathbb{R})}/[{\mathbb{R}^{+} \times SL(2,\mathbb{R})}] \approx G_{2,4}(\mathbb{R}) \times O(1)$.
Hence, the domain of $\tilde{\delta}^{(2)}(W_{1},W_{2},W_{3})$ is $G_{2,4}(\mathbb{R})\times O(1)$ and that of $\delta^{(2)}(W_{1},W_{2},W_{3})$
is $G_{2,4}(\mathbb{R})$.
\end{proof}

\section{New definition of the delta functions on the complex twisor space}

\subsection{New definition of the delta function 
on space of complex numbers in terms of the \v{C}ech cohomology group}

To define a new delta function on 
$3$-dimensional complex projective space $\mathbb{CP}^{3}$ in terms of the \v{C}ech cohomology group, 
we propose a new definition of the delta function on $n$-dimensional complex space $\mathbb{C}^{n}$ in terms of the \v{C}ech cohomology group in this subsection.

\begin{definition}
Let $(z_{1},z_{2},\cdot,z_{n})$ be the coordinates of point $z$ in $n$-dimensional complex space $\mathbb{C}^{n}$.
We define the delta function on $\mathbb{C}^{n}$ as 
\begin{align}
\delta^{(n)}_{\mathbb{C}}(z_{1},z_{2},\cdots,z_{n}) :=
\dfrac{1}{(2 \pi i)^{n}} \dfrac{1}{z_{1} z_{2} \cdots z_{n}}.
\label{Cdelta}
\end{align}
\end{definition}

This resembles the complex function which represents the Sato hyperfunction \cite{Ka}.

\begin{proposition}
For a complex function $f(z_{1},z_{2},\cdots, z_{n})$ which is holomorphic in a suitable region in $\mathbb{C}^{n}$, we have   
\begin{align}
&\int_{\Gamma} \delta^{(n)}_{\mathbb{C}}(z_{1},z_{2},\cdots,z_{n}) f(z_{1}, z_{2}, \cdots, z_{n})
dz_{1} dz_{2} \cdots dz_{n}
\notag \\
=
& \int_{C_{1}} \int_{C_{2}} \cdots \int_{C_{n}}
\dfrac{1}{(2 \pi i)^{n}} \dfrac{1}{z_{1} z_{2} \cdots z_{n}}
f(z_{1}, z_{2}, \cdots, z_{n})
dz_{1} dz_{2} \cdots dz_{n}
\notag \\
=
&f(0,0,\cdots,0)
\label{fdelta}
\end{align}
using Cauchy's integration formula, where the integral region $\Gamma$ is the $n$-dimensional torus
\begin{align}
\Gamma = C_{1} \times C_{2} \times \cdots \times C_{n},
\,\,\,\,\,
C_{k} = \left\{ z_{k} = r_{k} e^{i\theta_{k}} \,\middle|\,
0 \le \theta_{k} \le 2 \pi \right\}, \,\,\,\,\, (k=1,2,\cdots,n). 
\label{Gamma}
\end{align}
\label{propo4}
\end{proposition}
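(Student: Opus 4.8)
The plan is to reduce the $n$-dimensional statement to an $n$-fold iteration of the one-variable Cauchy integral formula. First I would observe that on the torus $\Gamma = C_1 \times \cdots \times C_n$ the integrand factors completely: $\delta^{(n)}_{\mathbb{C}}(z_1,\dots,z_n)\,dz_1\cdots dz_n = \prod_{k=1}^n \frac{1}{2\pi i}\frac{dz_k}{z_k}$, so by Fubini's theorem on the compact cycle $\Gamma$ the multiple integral equals the iterated integral $\int_{C_1}\cdots\int_{C_n}$, which is already the form written on the second line of \eqref{fdelta}. The only analytic input needed is that $f$ is holomorphic in a neighborhood of the closed polydisc $\overline{D_{r_1}}\times\cdots\times\overline{D_{r_n}}$ bounded by $\Gamma$; this is what is meant by ``holomorphic in a suitable region,'' and I would state it explicitly as the hypothesis, noting that each $C_k$ is a positively oriented circle of radius $r_k$ centered at $0$.

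Next I would carry out the iteration from the inside out. Fixing $z_1,\dots,z_{n-1}$ on their respective circles, the function $z_n \mapsto f(z_1,\dots,z_{n-1},z_n)$ is holomorphic on a neighborhood of $\overline{D_{r_n}}$, so the one-variable Cauchy integral formula gives $\frac{1}{2\pi i}\int_{C_n}\frac{f(z_1,\dots,z_{n-1},z_n)}{z_n}\,dz_n = f(z_1,\dots,z_{n-1},0)$. Substituting this back, the remaining integrand is $\prod_{k=1}^{n-1}\frac{1}{2\pi i}\frac{dz_k}{z_k}\cdot f(z_1,\dots,z_{n-1},0)$, and $f(\cdot,\dots,\cdot,0)$ is still holomorphic in the first $n-1$ variables on the appropriate polydisc neighborhood. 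Repeating this step $n-1$ more times — peeling off $z_{n-1}$, then $z_{n-2}$, and so on down to $z_1$ — each application kills one variable by setting it to $0$, and after $n$ steps we are left with $f(0,0,\dots,0)$, which is \eqref{fdelta}. I would present this as an induction on $n$, with base case $n=1$ being exactly Cauchy's formula.

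The only point requiring a little care — and the step I would flag as the main (minor) obstacle — is justifying the interchange that lets the iterated Cauchy formula be applied at each stage: namely that after integrating out $z_n$ the result $f(z_1,\dots,z_{n-1},0)$ genuinely inherits holomorphy and that the domain of convergence is not lost along the way. This is handled by insisting at the outset that $f$ is holomorphic on an open polydisc strictly containing $\overline{D_{r_1}}\times\cdots\times\overline{D_{r_n}}$, so that at every stage of the peeling the partially-evaluated function is holomorphic on an open polydisc strictly containing the remaining closed polydisc; the uniform bounds from compactness of $\Gamma$ also legitimize Fubini. Apart from this, the argument is a routine unwinding of the definition \eqref{Cdelta} plus $n$ applications of a classical theorem, so I would keep the write-up short: state the polydisc hypothesis, invoke Fubini to pass from $\int_\Gamma$ to the iterated integral, then induct on $n$ using Cauchy's formula one variable at a time.
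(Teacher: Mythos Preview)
Your proposal is correct and matches the paper's approach: the paper does not give a separate proof but simply asserts the result ``using Cauchy's integration formula,'' which is exactly the iterated one-variable Cauchy argument you spell out. Your write-up is more careful about the polydisc hypothesis and the Fubini justification than the paper itself, but the underlying idea is identical.
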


\begin{remark}
Proposition $\ref{propo4}$ holds for the delta function $\delta^{(n)}_{\mathbb{C}}(z_{1},z_{2},\cdots,z_{n})$ with the extra terms
\begin{align}
\sum_{
\begin{matrix}
m_{1},m_{2},\cdots,m_{n} \ge 0
\\
-m_{1}-m_{2}-\cdots -m_{n-1}+m_{n}=-n
\end{matrix}
}
\Bigg(
&A_{m_{1} m_{2} \cdots m_{m}}
\dfrac{(z_{n})^{m_{n}}}{(z_{1})^{m_{1}} (z_{2})^{m_{2}} \cdots
(z_{n-1})^{m_{n-1}}}
\notag \\
&+
B_{m_{1} m_{2} \cdots m_{m}}
\dfrac{(z_{n-1})^{m_{n}}}{(z_{1})^{m_{1}} \cdots (z_{n-2})^{m_{n-2}}
(z_{n})^{m_{n-1}}}
\notag \\
&
+ \cdots
\notag \\
&+
C_{m_{1} m_{2} \cdots m_{m}}
\dfrac{(z_{1})^{m_{n}}}{(z_{2})^{m_{1}} (z_{3})^{m_{2}} \cdots
(z_{n})^{m_{n-1}}}
\Bigg)
\end{align}
in right hand of Eq. $(\ref{Cdelta})$.
Here,  $A_{m_{1} m_{2} \cdots m_{m}}$, $B_{m_{1} m_{2} \cdots m_{m}}$, and $C_{m_{1} m_{2} \cdots m_{m}}$
are arbitrary complex constants.
Therefore, the delta function $\delta^{(n)}_{\mathbb{C}}(z_{1},z_{2},\cdots,z_{n})$ is 
a representative element of  the $(n-1)$-th \v{C}ech cohomology group on $\mathbb{CP}^{n-1}$ with coefficients in the sheaf $\mathcal{O}(-n)$, 
denoted by $H^{n-1}(\mathbb{CP}^{n-1},\mathcal{O}(-n))$, where $\mathbb{CP}^{n-1}$ is covered by
\begin{align}
U_{i} = \left\{ (z_{1},z_{2},\cdots,z_{n}) \in \mathbb{CP}^{n-1}
\,\middle|\,
z_{i} \ne 0
\right\}, \,\,\,\,\, i=1,2,\cdots,n.
\end{align}  
\label{remark1}
\end{remark}

\begin{theorem}
For an arbitrary $i=1,2,\cdots,n$, we have
\begin{align}
z_{i} \,\delta^{(n)}_{\mathbb{C}}(z_{1},z_{2},\cdots,z_{n})=0.
\end{align}
\end{theorem}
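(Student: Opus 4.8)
The plan is to read the identity at the level of cohomology classes, as set up in Remark~\ref{remark1}: $\delta^{(n)}_{\mathbb{C}}$ stands for the class in $H^{n-1}(\mathbb{CP}^{n-1},\mathcal{O}(-n))$ represented by the Laurent monomial $\frac{1}{(2\pi i)^{n}}\frac{1}{z_{1}z_{2}\cdots z_{n}}$, two cochains being identified precisely when they differ by the ``extra terms'' displayed in that remark. Multiplying by $z_{i}$ gives, at the level of representatives,
\[
z_{i}\,\delta^{(n)}_{\mathbb{C}}(z_{1},z_{2},\cdots,z_{n})
=\frac{1}{(2\pi i)^{n}}\,\frac{1}{z_{1}\cdots z_{i-1}\,z_{i+1}\cdots z_{n}},
\]
which is homogeneous of degree $-(n-1)$, hence is naturally a \v{C}ech $(n-1)$-cochain with coefficients in $\mathcal{O}(-(n-1))$ for the standard cover $\{U_{1},\dots,U_{n}\}$ of $\mathbb{CP}^{n-1}$. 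So the statement to prove is that this cochain represents the zero class in $H^{n-1}(\mathbb{CP}^{n-1},\mathcal{O}(-(n-1)))$.

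The cohomological argument I would give is this: the monomial $\frac{1}{z_{1}\cdots z_{i-1}z_{i+1}\cdots z_{n}}$ carries exponent $0$ on the variable $z_{i}$, so it extends holomorphically over $U_{i}$; concretely it is a regular section of $\mathcal{O}(-(n-1))$ on the $(n-1)$-fold intersection $\bigcap_{j\ne i}U_{j}$. Taking the \v{C}ech $(n-2)$-cochain whose only non-zero component is this function, placed on the face obtained by omitting $U_{i}$, its \v{C}ech coboundary equals $z_{i}\,\delta^{(n)}_{\mathbb{C}}$ up to sign on $U_{1}\cap\cdots\cap U_{n}$; hence $z_{i}\,\delta^{(n)}_{\mathbb{C}}$ is a coboundary. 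Equivalently, in the notation of Remark~\ref{remark1} it is exactly the ``extra term'' with numerator $(z_{i})^{0}$ and every remaining exponent equal to $1$, whose homogeneity $-(n-1)$ matches that of $z_{i}\,\delta^{(n)}_{\mathbb{C}}$. One can also simply note that every Laurent monomial of homogeneity $-(n-1)$ has at least one non-negative exponent, since otherwise the exponents would sum to $\le -n$; therefore $H^{n-1}(\mathbb{CP}^{n-1},\mathcal{O}(-(n-1)))=0$ and the class vanishes for that reason alone.

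Alternatively, and in closer analogy with the distributional identity $x\,\delta(x)=0$, one argues through the torus pairing: for a test function $f$ holomorphic near the origin, $z_{i}f$ is again holomorphic there, so Proposition~\ref{propo4} (Cauchy's formula) yields
\[
\int_{\Gamma} z_{i}\,\delta^{(n)}_{\mathbb{C}}(z)\,f(z)\,dz_{1}\cdots dz_{n}
=\int_{\Gamma} \delta^{(n)}_{\mathbb{C}}(z)\,\big(z_{i}f(z)\big)\,dz_{1}\cdots dz_{n}
=\big(z_{i}f\big)(0)=0,
\]
so $z_{i}\,\delta^{(n)}_{\mathbb{C}}$ annihilates every test function; since the class is detected by this pairing (equivalently, by the coefficient of $\frac{1}{z_{1}\cdots z_{n}}$, which is absent here), it is zero. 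The only point requiring care — and the only conceivable obstacle — is bookkeeping rather than computation: one must keep in mind that the equality holds modulo coboundaries, not as meromorphic functions, and that multiplication by $z_{i}$ shifts the relevant sheaf from $\mathcal{O}(-n)$ to $\mathcal{O}(-(n-1))$, so the trivializing ``extra terms'' are the ones at this shifted degree.
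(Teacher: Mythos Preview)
Your proposal is correct. Your ``alternative'' torus-pairing argument is exactly the paper's proof: the paper writes out $z_{i}\,\delta^{(n)}_{\mathbb{C}}(z)\,f(z)$ explicitly as $\frac{1}{(2\pi i)^{n}}\frac{f(z)}{z_{1}\cdots z_{i-1}z_{i+1}\cdots z_{n}}$, observes this is holomorphic in $z_{i}$, and concludes that the $C_{i}$-integral vanishes by Cauchy's theorem---which is equivalent to your observation that $(z_{i}f)(0)=0$ via Proposition~\ref{propo4}.

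Your leading cohomological argument---exhibiting $z_{i}\,\delta^{(n)}_{\mathbb{C}}$ as a \v{C}ech coboundary, or equivalently noting that it lands in $H^{n-1}(\mathbb{CP}^{n-1},\mathcal{O}(-(n-1)))=0$---is a correct and genuinely different route the paper does not take. It has the virtue of explaining \emph{why} the identity holds at the level of cohomology classes (every Laurent monomial of degree $-(n-1)$ necessarily has a non-negative exponent and is therefore cohomologically trivial), and it makes explicit the sheaf-degree shift $\mathcal{O}(-n)\to\mathcal{O}(-(n-1))$ that the paper leaves implicit. The paper's approach, by contrast, is a one-line computation requiring only Cauchy's theorem and no cohomological bookkeeping; your cohomological version is more informative but heavier.
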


\begin{proof}
For a complex function $f(z_{1},z_{2},\cdots, z_{n})$ which is holomorphic in a suitable region in $\mathbb{C}^{n}$, we have
\begin{align}
&\int_{\Gamma} z_{i}\,\delta^{(n)}_{\mathbb{C}}(z_{1},z_{2},\cdots,z_{z}) f(z_{1}, z_{2}, \cdots, z_{n})
dz_{1} dz_{2} \cdots dz_{n}
\\
=
& \int_{C_{1}} \int_{C_{2}} \cdots \int_{C_{n}}
\dfrac{1}{(2 \pi i)^{n}} \dfrac{1}{z_{1} \cdots z_{i-1} z_{i+1} \cdots z_{n}}
f(z_{1}, z_{2}, \cdots, z_{n})
dz_{1} dz_{2} \cdots dz_{n}
\\
=
&0
\label{fdelta}
\end{align}
using Cauchy's integration formula, particularly for variable $z_{i}$. The integral region $\Gamma$ is defined by Eq. $(\ref{Gamma})$.
\end{proof}

\begin{theorem}
For a complex non-singular matrix $A=(A_{ij})$ ( $i,j=1,2,\cdots n$ ) and an $n$-dimensional complex vector $\mathbf{z}=(z_{1}, \cdots, z_{n})$, we have
\begin{align}
\delta_{\mathbb{C}}^{(n)} \left( A \mathbf{z} \right)
= \dfrac{1}{\det A} \delta_{\mathbb{C}}^{(n)} \left( \mathbf{z} \right).
\label{Az}
\end{align}
\label{thAz}
\end{theorem}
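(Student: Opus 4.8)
The plan is to establish the identity $(\ref{Az})$ by testing both sides against holomorphic functions through the integral representation of Proposition~$\ref{propo4}$, carrying out a linear change of variables, and then appealing to Remark~$\ref{remark1}$ to pass from the resulting equality of analytic functionals to the equality of \v{C}ech cohomology classes asserted in the statement.

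Concretely, I would first fix an arbitrary function $f(\mathbf{z})$ that is holomorphic near $\mathbf{z}=\mathbf{0}$ and examine $\int_{\Gamma'}\delta^{(n)}_{\mathbb{C}}(A\mathbf{z})\,f(\mathbf{z})\,dz_{1}\cdots dz_{n}$, where $\Gamma'=A^{-1}\Gamma$ is the image of the torus $\Gamma$ of Eq.~$(\ref{Gamma})$; this is the cycle naturally adapted to the singular hyperplanes $\{(A\mathbf{z})_{i}=0\}$ of $\delta^{(n)}_{\mathbb{C}}(A\mathbf{z})$, whose only common point is the origin. Substituting $\mathbf{w}=A\mathbf{z}$, which is a biholomorphism because $\det A\neq 0$, sends $\Gamma'$ to $\Gamma$, transforms the holomorphic volume form by $dw_{1}\wedge\cdots\wedge dw_{n}=(\det A)\,dz_{1}\wedge\cdots\wedge dz_{n}$, and turns the integrand into $\delta^{(n)}_{\mathbb{C}}(\mathbf{w})\,f(A^{-1}\mathbf{w})$. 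Proposition~$\ref{propo4}$, now applied in the $\mathbf{w}$ variables, gives the value $\frac{1}{\det A}\,f(A^{-1}\mathbf{0})=\frac{1}{\det A}\,f(\mathbf{0})$, which is exactly the pairing of $\frac{1}{\det A}\,\delta^{(n)}_{\mathbb{C}}(\mathbf{z})$ with $f\,dz_{1}\cdots dz_{n}$. Hence the two sides of $(\ref{Az})$ define one and the same functional on germs of holomorphic functions at the origin; since, as in Remark~$\ref{remark1}$, such a functional determines a representative of $H^{n-1}(\mathbb{CP}^{n-1},\mathcal{O}(-n))$ only modulo the coboundary terms exhibited there, the difference of the two sides is a sum of those terms, which is precisely the content of $(\ref{Az})$.

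An essentially contour-free variant would instead factor $A$ as a product of shear elementary matrices, each of determinant $1$ (the transvections generate $SL(n,\mathbb{C})$), times a single diagonal matrix of determinant $\det A$; a short partial-fraction computation applied to Eq.~$(\ref{Cdelta})$ shows that a shear changes $\delta^{(n)}_{\mathbb{C}}$ only by a Remark~$\ref{remark1}$ coboundary, while the diagonal factor contributes precisely $1/\det A$, and composing the steps gives $(\ref{Az})$. In either route the real work lies not in the computation but in the orientation and cohomological bookkeeping: one must verify that $\Gamma'=A^{-1}\Gamma$ is homologous, in the complement of the hyperplanes, to the cycle implicitly used on the right-hand side---this is where the sign carried by $\det A$, as opposed to the $|\det A|$ of the real Dirac delta, actually enters---that replacing the covering $\{U_{i}\}$ by its image under $A$ does not change the cohomology group, and, in the elementary-matrix version, that a linear automorphism of $\mathbb{CP}^{n-1}$ carries the Remark~$\ref{remark1}$ coboundary terms to coboundary terms so that they can be dropped at every stage. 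I expect this last point---keeping precise control of which terms are cohomologically trivial after each change of variables---to be the main obstacle.
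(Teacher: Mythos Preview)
Your primary approach---pair both sides with an arbitrary holomorphic test function, perform the linear change of variables $\mathbf{w}=A\mathbf{z}$ so that the Jacobian supplies the factor $1/\det A$, and invoke Proposition~\ref{propo4} to evaluate---is exactly the argument the paper gives. The paper is in fact less careful than you are about replacing the torus $\Gamma$ by $A^{-1}\Gamma$ and about the passage to cohomology via Remark~\ref{remark1}; your elementary-matrix variant does not appear in the paper but is a legitimate alternative.
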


\begin{proof}
From Eq. $(\ref{Cdelta})$, $\delta_{\mathbb{C}}^{(n)} \left( A \mathbf{z} \right)$ is represented as follow:
\begin{align}
\delta_{\mathbb{C}}^{(n)} \left( A \mathbf{z} \right)
= \dfrac{1}{(2 \pi i)^{n}}
\dfrac{1}{w_{1} \cdots w_{n}}, \quad
w_{i} = \sum_{j=1}^{n}A_{ij} z_{j}.
\end{align}
Here, we have $z_{i}=\sum_{j=1}^{n}A^{-1}_{ij} w_{j}$ ; therefore, 
\begin{align}
d^{n} \mathbf{z}
= dz_{1} \wedge dz_{2} \wedge \cdots \wedge dz_{n}
= (\det A^{-1}) dw_{1} \wedge dw_{2} \wedge \cdots
\wedge dw_{n}
= \dfrac{1}{\det A}
d^{n} \mathbf{w}. 
\end{align}
Hence, for a complex function $f(\mathbf{z})$ which is holomorphic in a suitable region in $\mathbb{C}^{n}$, we can see that
\begin{align}
& \quad
\int_{\Gamma} \delta^{(n)}_{\mathbb{C}} (A \mathbf{z})
f(\mathbf{z}) 
d^{n} \mathbf{z}
\notag \\
&= \int_{\Gamma} \dfrac{1}{(2 \pi i)^{n}}
\dfrac{1}{w_{1} \cdots w_{n}} f(A^{-1}_{ij} w_{j}) 
\dfrac{1}{\det A} d^{n} \mathbf{w}
\notag \\
&= \dfrac{1}{\det A} f(0,\cdots,0)
\notag \\
& = \dfrac{1}{\det A} \int_{\Gamma} \delta^{(n)}_{\mathbb{C}} (\mathbf{z}) f(\mathbf{z}) d^{n} \mathbf{z}  
\end{align}
using Cauchy's integration formula. 
Here, the integral region $\Gamma$ is defined by Eq. $(\ref{Gamma})$.
Thus, we can see $\delta^{(n)}_{\mathbb{C}}(A \mathbf{z})=\delta^{(n)}_{\mathbb{C}}(\mathbf{z})/\det A$.
\end{proof}

\begin{remark}
We regard the $3$-dimensional complex projective space
$\mathbb{CP}^{3}$ as the dual complex twistor space $\mathbb{PT}^{*}$ to investigate the relationship between 
the delta functions and the scattering amplitudes in $\mathcal{N}=4$ SYM in the followings. 
\end{remark}

Here, we propose a new definition of the delta function on $\mathbb{PT}^{*}$ using the delta function on $\mathbb{C}^{4}$. This is constructed by considering the equivalence relation of $\mathbb{PT}^{*}$ where two points $\bar{Z}=\left(\bar{Z}_{\alpha}\right)$ and $\bar{X}=\left(\bar{X}_{\alpha}\right)$ are identified when a nonzero complex number $w$ exists, such that $\bar{Z}_{\alpha} = w \bar{X}_{\alpha}$.

\begin{definition}
We define the delta function on the dual complex twistor space $\mathbb{PT}^{*}$ as follows:
\begin{align}
\Delta_{m}^{(3)}(\bar{Z},\bar{X}) 
:=
\int_{C} \delta^{(4)}_{\mathbb{C}} \left(\bar{Z}_{\alpha}-w\bar{X}_{\alpha}\right) 
w^{m+3} dw,
\label{delta}
\end{align}
where the integral contour $C$ surrounds only one of the four singular points of 
$\delta^{(4)}_{\mathbb{C}} \left(\bar{Z}_{\alpha}-w\bar{X}_{\alpha}\right)$.
\end{definition}

When we consider the covering
\begin{align}
U_{0} = \left\{ \left(\bar{X}_{0},\bar{X}_{1},\bar{X}_{2},\bar{X}_{3}\right) \in \mathbb{CP}^{3} \,\middle| \,\bar{X}_{0} \ne 0 \right\},
\label{U0}
\end{align}
the integral contour $C$ is selected to surround only the singular point of $1/(\bar{Z}_{0}-w\bar{X}_{0})$, that is, $w=\bar{Z}_{0}/\bar{X}_{0}$. Hence, we obtain
\begin{align}
\Delta_{m}^{(3)}(\bar{Z},\bar{X})
= \dfrac{1}{(2 \pi i)^{3}} 
\dfrac{\left(\bar{Z}_{0}\right)^{m+3}}{\left(\bar{X}_{0}\right)^{m+1} \left(\bar{X}_{1}\bar{Z}_{0}-\bar{X}_{0}\bar{Z}_{1}\right)\left(\bar{X}_{2}\bar{Z}_{0}-\bar{X}_{0}\bar{Z}_{2}\right)\left(\bar{X}_{3}\bar{Z}_{0}-\bar{X}_{0}\bar{Z}_{3}\right)},
\label{Deltam}
\end{align}
by carrying out the integration in Eq. $(\ref{delta})$.
Here, we denote the inhomogeneous coordinates of $U_{0}$ by
\begin{align}
\zeta_{1} = \dfrac{\bar{X}_{1}}{\bar{X}_{0}}, \quad
\zeta_{2} = \dfrac{\bar{X}_{2}}{\bar{X}_{0}}, \quad
\zeta_{3} = \dfrac{\bar{X}_{3}}{\bar{X}_{0}}, 
\label{zeta}
\end{align}
\begin{align}
\eta_{1} = \dfrac{\bar{Z}_{1}}{\bar{Z}_{0}}, \quad
\eta_{2} = \dfrac{\bar{Z}_{2}}{\bar{Z}_{0}}, \quad
\eta_{3} = \dfrac{\bar{Z}_{3}}{\bar{Z}_{0}}.
\label{eta}
\end{align}
Using these coordinates, Eq. $(\ref{Deltam})$ can be represented as 
\begin{align}
\Delta_{m}^{(3)}(\bar{Z},\bar{X})
&= \dfrac{1}{(2 \pi i)^{3}} 
\dfrac{\left(\bar{Z}_{0}\right)^{m}}{\left(\bar{X}_{0}\right)^{m+4}}
\dfrac{1}{(\zeta_{1}-\eta_{1})(\zeta_{2}-\eta_{2})(\zeta_{3}-\eta_{3})}
\notag
\\
&= \dfrac{\left(\bar{Z}_{0}\right)^{m}}{\left(\bar{X}_{0}\right)^{m+4}}
\delta^{(3)}_{\mathbb{C}}
(\zeta_{1}-\eta_{1},\zeta_{2}-\eta_{2},\zeta_{3}-\eta_{3}).
\label{3Delta}
\end{align}
From this result and Remark $\ref{remark1}$, we see that $\Delta_{m}^{(3)}(\bar{Z},\bar{X})$ is a representative element of $H^{2}(\mathbb{CP}^{2},\mathcal{O}(-3))$.

\begin{proposition}
$\Delta_{m}^{(3)}(\bar{Z},\bar{X})$ has a degree of homogeneity $m$ in $\bar{Z}$
and $(-m-4)$ in $\bar{X}$.  
\end{proposition}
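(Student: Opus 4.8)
The plan is to read off the homogeneity degrees directly from the closed-form expression for $\Delta_{m}^{(3)}(\bar{Z},\bar{X})$ obtained in Eq.~$(\ref{Deltam})$, or alternatively from the integral definition in Eq.~$(\ref{delta})$ together with Theorem~$\ref{thAz}$. I would present both arguments, since each is short and together they reinforce the cohomological picture.

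For the direct computation, start from Eq.~$(\ref{Deltam})$ and replace $\bar{Z}_{\alpha}$ by $a\bar{Z}_{\alpha}$ for a nonzero complex number $a$. The numerator $(\bar{Z}_{0})^{m+3}$ scales by $a^{m+3}$; each of the three factors $\bar{X}_{\beta}\bar{Z}_{0}-\bar{X}_{0}\bar{Z}_{\beta}$ in the denominator scales by $a$, contributing $a^{-3}$; the factor $(\bar{X}_{0})^{m+1}$ is untouched. Hence $\Delta_{m}^{(3)}(a\bar{Z},\bar{X}) = a^{m+3}a^{-3}\Delta_{m}^{(3)}(\bar{Z},\bar{X}) = a^{m}\Delta_{m}^{(3)}(\bar{Z},\bar{X})$, which is homogeneity degree $m$ in $\bar{Z}$. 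Similarly, replacing $\bar{X}_{\alpha}$ by $b\bar{X}_{\alpha}$: the factor $(\bar{X}_{0})^{m+1}$ scales by $b^{m+1}$, and the three denominator factors each scale by $b$, giving $b^{m+1}b^{3}=b^{m+4}$ in the denominator, so $\Delta_{m}^{(3)}(\bar{Z},b\bar{X}) = b^{-(m+4)}\Delta_{m}^{(3)}(\bar{Z},\bar{X})$, i.e.\ homogeneity degree $(-m-4)$ in $\bar{X}$. One should note that this computation was performed in the patch $U_{0}$, but the result is patch-independent since the homogeneity degree is an intrinsic property.

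For the integral argument, work from Eq.~$(\ref{delta})$. Under $\bar{X}_{\alpha}\mapsto b\bar{X}_{\alpha}$, substitute $w = w'/b$ in the contour integral: then $\bar{Z}_{\alpha}-w(b\bar{X}_{\alpha}) = \bar{Z}_{\alpha}-w'\bar{X}_{\alpha}$, while $w^{m+3}dw = (w')^{m+3}b^{-(m+3)}\cdot b^{-1}dw' = b^{-(m+4)}(w')^{m+3}dw'$, and the contour $C$ is rescaled but still encircles exactly one singular point, so the integral is unchanged in form. This yields the degree $(-m-4)$ in $\bar{X}$. For the scaling in $\bar{Z}$, one uses $\bar{Z}_{\alpha}\mapsto a\bar{Z}_{\alpha}$ with the substitution $w = aw'$, together with the homogeneity $\delta^{(4)}_{\mathbb{C}}(a\mathbf{z}) = a^{-4}\delta^{(4)}_{\mathbb{C}}(\mathbf{z})$ from Theorem~$\ref{thAz}$ (with $A = a\,\mathrm{Id}$, $\det A = a^{4}$): the delta function contributes $a^{-4}$, and $w^{m+3}dw = a^{m+4}(w')^{m+3}dw'$, giving the net factor $a^{m}$.

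There is no real obstacle here; the only point requiring a line of care is checking that under the rescaling of the integration variable the contour $C$ still satisfies its defining property of enclosing precisely one of the four singular points of $\delta^{(4)}_{\mathbb{C}}(\bar{Z}_{\alpha}-w\bar{X}_{\alpha})$ — but since $w\mapsto w/b$ (or $w\mapsto aw$) is a conformal rescaling of the $w$-plane that moves the four poles $w = \bar{Z}_{\alpha}/\bar{X}_{\alpha}$ to $w = b\bar{Z}_{\alpha}/\bar{X}_{\alpha}$ (resp.\ fixes their ratios appropriately), a correspondingly rescaled contour still isolates the same pole, so the value is well defined and the claimed scaling follows. I would present the direct computation from Eq.~$(\ref{Deltam})$ as the main proof and mention the integral argument as a remark.
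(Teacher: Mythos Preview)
Your proposal is correct and your main argument is essentially the paper's: the paper reads the homogeneity off directly from the closed-form expression, using Eq.~$(\ref{3Delta})$ (which is just Eq.~$(\ref{Deltam})$ rewritten in the inhomogeneous coordinates $\zeta_i,\eta_i$, where the scale-invariance of those ratios makes the factors $(\bar{Z}_0)^m/(\bar{X}_0)^{m+4}$ the only source of homogeneity). Your additional integral argument via Eq.~$(\ref{delta})$ and Theorem~$\ref{thAz}$ is a clean alternative the paper does not give; it has the advantage of working directly from the definition without passing through a coordinate patch.
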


\begin{proof}
From Eq. $(\ref{3Delta})$, for non-zero complex numbers $a$ and $b$, we have
\begin{align}
\Delta_{m}^{(3)}\left(a\bar{Z},\bar{X}\right) = a^{m} \Delta_{m}^{(3)}(\bar{Z},\bar{X}),
\quad
\Delta_{m}^{(3)}\left(\bar{Z},b\bar{X}\right) = \dfrac{1}{b^{m+4}}
\Delta_{m}^{(3)}(\bar{Z},\bar{X}).
\notag
\end{align}
\end{proof}

\begin{theorem}
For a homogeneous holomorphic function 
$f_{m}\left(\bar{X}_{0},\bar{X}_{1},\bar{X}_{2},\bar{X}_{3}\right)$
on $\mathbb{PT}^{*}$ of degree $m$, we have
\begin{align}
\int_{\Gamma^{\prime}} f_{m} \left(\bar{X}_{0},\bar{X}_{1},\bar{X}_{2},\bar{X}_{3}\right) 
\Delta_{m}^{(3)}(\bar{Z},\bar{X}) D^{3} \bar{X} = f_{m} \left(\bar{Z}_{0},\bar{Z}_{1},\bar{Z}_{2},\bar{Z}_{3}\right),
\end{align}
where the integral contour $\Gamma^{\prime}$ surrounds all singular points of the delta function $\Delta_{m}^{(3)}(\bar{Z},\bar{X})$ and the volume element of $\mathbb{PT}^{*}$ is defined by $D^{3}\bar{X}:=\dfrac{1}{6} \epsilon^{\alpha \beta \gamma \delta} \bar{X}_{\alpha} d\bar{X}_{\beta} \wedge d\bar{X}_{\gamma} \wedge d\bar{X}_{\delta}$. Similarly, 
for a homogeneous holomorphic function 
$f_{-m-4} \left(\bar{X}_{0},\bar{X}_{1},\bar{X}_{2},\bar{X}_{3}\right)$
on $\mathbb{PT}^{*}$ of degree $(-m-4)$, we have 
\begin{align}
\int_{\Gamma^{\prime}} f_{-m-4} \left(\bar{Z}_{0},\bar{Z}_{1},\bar{Z}_{2},\bar{Z}_{3}\right) 
\Delta_{m}^{(3)}(\bar{Z},\bar{X}) D^{3} \bar{Z} = f_{-m-4} \left(\bar{X}_{0},\bar{X}_{1},\bar{X}_{2},\bar{X}_{3}\right).
\label{fzdelta}
\end{align}
\end{theorem}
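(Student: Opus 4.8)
The plan is to reduce the projective statement to the affine Cauchy-type identity already established in Proposition~\ref{propo4}, using the explicit local form of $\Delta_{m}^{(3)}(\bar Z,\bar X)$ given in Eq.~$(\ref{3Delta})$. First I would work in the affine chart $U_{0}$ of Eq.~$(\ref{U0})$ and gauge-fix the homogeneous coordinates by setting $\bar X_{0}=1$, writing $\bar X_{\alpha}=(1,\zeta_{1},\zeta_{2},\zeta_{3})$ with $\zeta_{k}$ as in Eq.~$(\ref{zeta})$; on this slice the volume element $D^{3}\bar X$ reduces to $d\zeta_{1}\wedge d\zeta_{2}\wedge d\zeta_{3}$, which is the standard computation for the canonical $(3,0)$-form on projective space restricted to an affine chart. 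Then Eq.~$(\ref{3Delta})$ gives
\begin{align}
\Delta_{m}^{(3)}(\bar Z,\bar X)\big|_{\bar X_{0}=1}
= (\bar Z_{0})^{m}\,\delta^{(3)}_{\mathbb{C}}(\zeta_{1}-\eta_{1},\zeta_{2}-\eta_{2},\zeta_{3}-\eta_{3}),
\notag
\end{align}
and homogeneity of degree $m$ lets me write $f_{m}(\bar X_{0},\bar X_{1},\bar X_{2},\bar X_{3})\big|_{\bar X_{0}=1}=f_{m}(1,\zeta_{1},\zeta_{2},\zeta_{3})=:g(\zeta_{1},\zeta_{2},\zeta_{3})$, a holomorphic function of the $\zeta_{k}$ near $\zeta_{k}=\eta_{k}$.

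Next I would substitute these into the integral. Choosing the contour $\Gamma'$ to be the product of small circles $|\zeta_{k}-\eta_{k}|=r_{k}$ — which is exactly "$\Gamma'$ surrounds all singular points of $\Delta_{m}^{(3)}$" expressed in the chart $U_{0}$ — and applying Proposition~\ref{propo4} after the translation $z_{k}=\zeta_{k}-\eta_{k}$ (the shift does not affect the contour-integral evaluation, it only moves the pole to the origin), I get
\begin{align}
\int_{\Gamma'} g(\zeta_{1},\zeta_{2},\zeta_{3})\,\delta^{(3)}_{\mathbb{C}}(\zeta_{1}-\eta_{1},\zeta_{2}-\eta_{2},\zeta_{3}-\eta_{3})\,d\zeta_{1}\,d\zeta_{2}\,d\zeta_{3}
= g(\eta_{1},\eta_{2},\eta_{3}).
\notag
\end{align}
Multiplying by the prefactor $(\bar Z_{0})^{m}$ and reassembling via homogeneity, $(\bar Z_{0})^{m}g(\eta_{1},\eta_{2},\eta_{3})=(\bar Z_{0})^{m}f_{m}(1,\eta_{1},\eta_{2},\eta_{3})=f_{m}(\bar Z_{0},\bar Z_{1},\bar Z_{2},\bar Z_{3})$ since $\eta_{k}=\bar Z_{k}/\bar Z_{0}$. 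This proves the first identity. The second identity, Eq.~$(\ref{fzdelta})$, follows by the symmetric computation: by the homogeneity Proposition the roles of $\bar Z$ and $\bar X$ in $\Delta_{m}^{(3)}$ are exchanged up to the degrees $m\leftrightarrow -m-4$, so running the same chart-and-residue argument with $D^{3}\bar Z$ and a degree-$(-m-4)$ function $f_{-m-4}$ of $\bar Z$, holding $\bar X$ fixed, evaluates the residue at $\bar Z=\bar X$ projectively and returns $f_{-m-4}(\bar X)$; alternatively one can deduce it directly from Eq.~$(\ref{Deltam})$ by carrying out the contour integrals in the $\bar Z_{k}$ variables around $\bar Z_{k}\bar X_{0}=\bar Z_{0}\bar X_{k}$.

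The main obstacle, and the point deserving the most care, is the bookkeeping of the homogeneity weights together with the chart-dependence of the volume form: one must check that the factor $(\bar Z_{0})^{m}/(\bar X_{0})^{m+4}$ in Eq.~$(\ref{3Delta})$, the degree-$m$ homogeneity of $f_{m}$, and the degree-$4$ homogeneity of $D^{3}\bar X$ all conspire so that the integrand is a well-defined (weight-zero) object on $\mathbb{PT}^{*}$ and the gauge-fixing $\bar X_{0}=1$ is legitimate — i.e.\ that the answer does not depend on the chart or on the scaling representative chosen. Once this weight-counting is verified (it is the same $n+4$ vs.\ $n$ balancing already seen in Eq.~$(\ref{4})$ for the real case), the analytic content is entirely contained in Proposition~\ref{propo4}, and the rest is the translation $z_{k}\mapsto \zeta_{k}-\eta_{k}$ plus reassembly by homogeneity. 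A secondary point to state explicitly is what "holomorphic in a suitable region" means here: $f_{m}$ need only be holomorphic in a neighbourhood of the point $\bar Z$ inside the chart $U_{0}$, which is enough for the residue evaluation on the small polycircle $\Gamma'$.
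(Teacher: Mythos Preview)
Your proposal is correct and follows essentially the same approach as the paper: pass to the affine chart $U_{0}$, use the explicit form Eq.~(\ref{3Delta}) together with the homogeneity of $f_{m}$ and of $D^{3}\bar X$, and reduce to the Cauchy-type identity of Proposition~\ref{propo4}. The only cosmetic difference is that you gauge-fix $\bar X_{0}=1$ whereas the paper keeps $\bar X_{0}$ explicit and verifies that the powers $(\bar X_{0})^{m}\cdot(\bar X_{0})^{-m-4}\cdot(\bar X_{0})^{4}$ cancel, which is exactly the weight-counting you flag as the main point to check.
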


\begin{proof}
By using Eq. $(\ref{zeta})$, we can express 
$
D^{3}\bar{X} = \left(\bar{X}_{0}\right)^{4} d \zeta_{1} \wedge
d \zeta_{2} \wedge d \zeta_{3}
$
and $f_{m}\left(\bar{X}_{0},\bar{X}_{1},\bar{X}_{2},\bar{X}_{3}\right)=\left(\bar{X}_{0}\right)^{m}f_{m}\left(1,\zeta_{1},\zeta_{2},\zeta_{3}\right)$. With these expressions and
Eq. $(\ref{3Delta})$, we obtain  
\begin{align}
&
\int_{\Gamma^{\prime}} f_{m} \left(\bar{X}_{0},\bar{X}_{1},\bar{X}_{2},\bar{X}_{3}\right) 
\Delta_{m}^{(3)}(\bar{Z},\bar{X}) D^{3} \bar{X}
\notag
\\
&= \left(\bar{Z}_{0} \right)^{m} 
\int_{\Gamma^{\prime}} 
f_{m}\left(1,\zeta_{1},\zeta_{2},\zeta_{3}\right)
\dfrac{1}{(2\pi i)^{3}} \dfrac{1}{(\zeta_{1}-\eta_{1})(\zeta_{2}-\eta_{2})(\zeta_{3}-\eta_{3})}
d\zeta_{1} \wedge d\zeta_{2} \wedge d\zeta_{3}
\notag
\\
&= \left(\bar{Z}_{0}\right)^{m} f_{m}\left(1,\eta_{1},\eta_{2},\eta_{3}\right)
\notag
\\
&= f_{m} \left(\bar{Z}_{0},\bar{Z}_{1},\bar{Z}_{2},\bar{Z}_{3}\right). 
\end{align}
Similarly, we can obtain Eq. $(\ref{fzdelta})$. 
\end{proof}

\begin{definition}
Next, we define the delta function which enforces the  collinearity of the three twistors $\bar{Z}_{1} =\left(\bar{Z}_{1 \alpha} \right)$, $\bar{Z}_{2} =\left(\bar{Z}_{2 \alpha} \right)$, and
$\bar{Z}_{3} =\left(\bar{Z}_{3 \alpha} \right)$ in $\mathbb{PT}^{*}$ as follows: 
\begin{align}
\,\,\,\,\,
\Delta^{(2)}\left(\bar{Z}_{1},\bar{Z}_{2},\bar{Z}_{3}\right) := \int_{\gamma} 
\dfrac{dv}{v} \dfrac{dw}{w} \delta^{(4)}_{\mathbb{C}}
\left(\bar{Z}_{1 \alpha}-v\bar{Z}_{2 \alpha}-w\bar{Z}_{3 \alpha}\right),
\label{2Delta}
\end{align}
where the integral contour $\gamma$ surrounds the point $(v,w)$ in $\mathbb{C}^{2}$ which satisfies
\begin{align}
\bar{Z}_{10} - v \bar{Z}_{20} - w \bar{Z}_{30} = 0,\,\,\,\,\,
\bar{Z}_{11} - v \bar{Z}_{21} - w \bar{Z}_{31} = 0.
\label{gamma}
\end{align}
\end{definition}

\begin{remark}
We use the notation $\bar{Z}_{i}=\left(\bar{Z}_{i \alpha}\right)
=\left(\tilde{\pi}_{iA},\tilde{\omega}_{i}^{A^{\prime}}\right)$
( $A=0,1;A^{\prime}=0^{\prime},1^{\prime}$ ) 
and redefine
$\langle i j \rangle 
:= \tilde{\pi}_{iA} \tilde{\pi}_{j}^{A} 
= \tilde{\pi}_{iA} \epsilon^{AB} \tilde{\pi}_{jB} 
= \tilde{\pi}_{i0} \tilde{\pi}_{j1} - \tilde{\pi}_{i1} \tilde{\pi}_{j0}$
in the followings.
\end{remark}

\begin{proposition}
By carrying out the integration in Eq. $(\ref{2Delta})$, we have
\begin{align}
\Delta^{(2)}\left(\bar{Z}_{1},\bar{Z}_{2},\bar{Z}_{3}\right) 
= 
\dfrac{\langle 2 3 \rangle^{3}}
{\langle 3 1 \rangle 
\langle 1 2 \rangle}
\delta^{(2)}_{\mathbb{C}}
\left(
\langle 2 3 \rangle 
\tilde{\omega}_{1}^{A^{\prime}}
+ \langle 3 1 \rangle 
\tilde{\omega}_{2}^{A^{\prime}}
+ \langle 1 2 \rangle 
\tilde{\omega}_{3}^{A^{\prime}}
\right).
\end{align}
This is a representative element of $H^{1}(\mathbb{CP}^{1}, \mathcal{O}(-2))$.
\end{proposition}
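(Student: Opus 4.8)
The plan is to reduce the double contour integral~$(\ref{2Delta})$ to an iterated Cauchy integral in the variables $v$ and $w$, mirroring the evaluation of $\Delta_{m}^{(3)}(\bar{Z},\bar{X})$ that produced~$(\ref{Deltam})$. First I would restrict to the chart in which the $2\times 2$ matrix built from the $\tilde{\pi}$--components of $\bar{Z}_{2}$ and $\bar{Z}_{3}$ is invertible, i.e.\ $\langle 23\rangle\neq 0$ --- the complex analogue of the subset $V$ of~$(\ref{V})$. Writing $\bar{Z}_{i\alpha}=\left(\tilde{\pi}_{iA},\tilde{\omega}_{i}^{A^{\prime}}\right)$ and expanding $\delta^{(4)}_{\mathbb{C}}$ via its definition~$(\ref{Cdelta})$, the integrand takes the form
\begin{align}
\frac{dv}{v}\,\frac{dw}{w}\,\frac{1}{(2\pi i)^{4}}\,\frac{1}{P_{0}\,P_{1}\,Q_{0^{\prime}}\,Q_{1^{\prime}}},\qquad
P_{A}:=\tilde{\pi}_{1A}-v\tilde{\pi}_{2A}-w\tilde{\pi}_{3A},\quad
Q_{A^{\prime}}:=\tilde{\omega}_{1}^{A^{\prime}}-v\tilde{\omega}_{2}^{A^{\prime}}-w\tilde{\omega}_{3}^{A^{\prime}}.
\end{align}
By~$(\ref{gamma})$ the contour $\gamma$ encircles the common zero $(v_{*},w_{*})$ of $P_{0}$ and $P_{1}$ and not the poles $v=0$, $w=0$, so near $\gamma$ the prefactor $1/(vw)$ is holomorphic.

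Second, I would take the two--dimensional residue at $(v_{*},w_{*})$. Cramer's rule applied to $P_{0}=P_{1}=0$ gives the same values as in the real case, $v_{*}=\langle 31\rangle/\langle 32\rangle$ and $w_{*}=\langle 12\rangle/\langle 32\rangle$, and $\partial(P_{0},P_{1})/\partial(v,w)=\langle 23\rangle$. Cauchy's formula in $v$ and then in $w$ (equivalently the Grothendieck residue) then contributes a factor $(2\pi i)^{2}/\langle 23\rangle$ times the value of $1/\bigl(v\,w\,Q_{0^{\prime}}\,Q_{1^{\prime}}\bigr)$ at $(v_{*},w_{*})$. Using
\begin{align}
v_{*}w_{*}=\frac{\langle 31\rangle\langle 12\rangle}{\langle 23\rangle^{2}},\qquad
Q_{A^{\prime}}\big|_{(v_{*},w_{*})}=\frac{1}{\langle 23\rangle}\Bigl(\langle 23\rangle\,\tilde{\omega}_{1}^{A^{\prime}}+\langle 31\rangle\,\tilde{\omega}_{2}^{A^{\prime}}+\langle 12\rangle\,\tilde{\omega}_{3}^{A^{\prime}}\Bigr),
\end{align}
the powers of $\langle 23\rangle=-\langle 32\rangle$ collect into $\langle 23\rangle^{3}/\bigl(\langle 31\rangle\langle 12\rangle\bigr)$, while the two surviving linear denominators reassemble into $\delta^{(2)}_{\mathbb{C}}$ of the argument $\langle 23\rangle\tilde{\omega}_{1}^{A^{\prime}}+\langle 31\rangle\tilde{\omega}_{2}^{A^{\prime}}+\langle 12\rangle\tilde{\omega}_{3}^{A^{\prime}}$; this is the claimed identity. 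The one place requiring care is the sign bookkeeping among the antisymmetric brackets $\langle ij\rangle=-\langle ji\rangle$ together with the orientation of $\gamma$ in the two--dimensional residue; I would fix one orientation, carry it through the iterated residue, and check the result against the real-signature relation~$(\ref{W123})$.

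Finally, for the cohomological statement I would invoke Remark~$\ref{remark1}$ with $n=2$: the expression $\delta^{(2)}_{\mathbb{C}}(u_{0^{\prime}},u_{1^{\prime}})=1/\bigl((2\pi i)^{2}u_{0^{\prime}}u_{1^{\prime}}\bigr)$, regarded in the homogeneous coordinates $u_{A^{\prime}}=\langle 23\rangle\tilde{\omega}_{1}^{A^{\prime}}+\langle 31\rangle\tilde{\omega}_{2}^{A^{\prime}}+\langle 12\rangle\tilde{\omega}_{3}^{A^{\prime}}$ on the $\mathbb{CP}^{1}$ which is the line in $\mathbb{PT}^{*}$ carrying $\bar{Z}_{1},\bar{Z}_{2},\bar{Z}_{3}$, is precisely a representative element of $H^{1}(\mathbb{CP}^{1},\mathcal{O}(-2))$ relative to the two--set cover $U_{1}=\{u_{0^{\prime}}\neq 0\}$, $U_{2}=\{u_{1^{\prime}}\neq 0\}$. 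The scalar prefactor $\langle 23\rangle^{3}/\bigl(\langle 31\rangle\langle 12\rangle\bigr)$ plays the same passive role there that $(\bar{Z}_{0})^{m}/(\bar{X}_{0})^{m+4}$ played for $\Delta_{m}^{(3)}$ in~$(\ref{3Delta})$ and does not alter the class. I expect the genuine obstacle to be not the residue computation, which is routine, but making precise that the collinearity locus is indeed the relevant $\mathbb{CP}^{1}$ and that adding the $\mathcal{O}(-2)$--coboundaries permitted by Remark~$\ref{remark1}$ leaves~$(\ref{2Delta})$ unchanged; the latter holds because such terms integrate to zero against holomorphic test functions, exactly as in the proof of Proposition~$\ref{propo4}$.
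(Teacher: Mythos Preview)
Your proposal is correct and follows essentially the same route as the paper: both reduce the double integral to a two--dimensional residue at the common zero of $P_{0}$ and $P_{1}$, with the paper making this explicit via the linear change of variables $(v,w)\mapsto(a,b)=(v\tilde{\pi}_{20}+w\tilde{\pi}_{30},\,v\tilde{\pi}_{21}+w\tilde{\pi}_{31})$ so that $\gamma$ becomes a standard product torus, whereas you compute the same residue directly through the Jacobian $\partial(P_{0},P_{1})/\partial(v,w)=\langle 23\rangle$. The cohomological conclusion is identical in both, invoking Remark~\ref{remark1} with $n=2$; your added remarks on orientation and on coboundaries are sound but not needed beyond what the paper uses.
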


\begin{proof}
First, we transform the integral variables $v$ and $w$
into $a$ and $b$, respectively, as follows:
\begin{align}
a= v\bar{Z}_{20} + w\bar{Z}_{30}, \,\,\,\,\, 
b= v\bar{Z}_{21} + w\bar{Z}_{31}.
\label{ab}
\end{align}
Using the variables $a$ and $b$, the integral contour $\gamma$ in Eq. $(\ref{2Delta})$ is represented as a $2$-dimensional torus which is the direct product of the circle $\gamma_{0}$ with center $a=\bar{Z}_{10}$ and the circle $\gamma_{1}$ with center $b=\bar{Z}_{11}$: $\gamma = \gamma_{0} \times \gamma_{1}$, where
\begin{align}
&\gamma_{0} = 
\left\{ a=\bar{Z}_{10} + \varepsilon_{0} e^{i \theta_{0}}
\,\middle|\, 
\varepsilon_{0}>0,\, 
0 \le \theta_{0} \le 2\pi \right\},
\label{gamma0}
\\
&\gamma_{1} = 
\left\{ b=\bar{Z}_{11} + \varepsilon_{1} e^{i \theta_{1}}
\,\middle|\, 
\varepsilon_{1}>0,\, 
0 \le \theta_{1} \le 2\pi \right\}.
\label{gamma1}
\end{align}
Here, from Eq. $(\ref{ab})$, we obtain
\begin{align}
v = \dfrac{a\bar{Z}_{31}-b\bar{Z}_{30}}{\langle 2 3 \rangle}, \,\,\,\,\,
w = \dfrac{-a\bar{Z}_{21}+b\bar{Z}_{20}}{\langle 2 3 \rangle},
\,\,\,\,\,
dv \wedge dw = \dfrac{1}{\langle 2 3 \rangle} da \wedge db.
\label{vwab}
\end{align}
Using these relations, we can carry out the integration in Eq. $(\ref{2Delta})$ in terms of variables $a$ and $b$ as follows:
\begin{align}
&\Delta^{(2)}\left(\bar{Z}_{1},\bar{Z}_{2},\bar{Z}_{3}\right)
\notag
\\
&=
\dfrac{\langle 2 3 \rangle}{(2 \pi i)^{2}}
\int_{\gamma} da \wedge db
\dfrac{1}{\left(a-\bar{Z}_{10}\right)\left(b-\bar{Z}_{11}\right)}
\dfrac{1}{\left(a\bar{Z}_{31}-b\bar{Z}_{30}\right)\left(-a\bar{Z}_{21}+b\bar{Z}_{20}\right)}
\notag
\\
&
\,\,\,\,\,
\times
\prod_{{\alpha}=2}^{3}
\dfrac{1}{(2\pi i)}
\dfrac{1}{\bar{Z}_{1\alpha}
-\dfrac{a\bar{Z}_{31}-b\bar{Z}_{30}}{\langle 2 3 \rangle}\bar{Z}_{2\alpha}
-\dfrac{-a\bar{Z}_{21}+b\bar{Z}_{20}}{\langle 2 3 \rangle}\bar{Z}_{3\alpha}}
\\
&=\dfrac{\langle 2 3 \rangle}
{\langle 3 1 \rangle \langle 1 2 \rangle}
\prod_{\alpha=2}^{3}
\dfrac{1}{(2\pi i)}
\dfrac{\langle 2 3 \rangle}
{\langle 2 3 \rangle \bar{Z}_{1 \alpha}
+\langle 3 1 \rangle \bar{Z}_{2 \alpha}
+\langle 1 2 \rangle \bar{Z}_{3 \alpha}}
\\
&
=\dfrac{\langle 2 3 \rangle^{3}}{\langle 3 1 \rangle \langle 1 2 \rangle}
\delta^{(2)}_{\mathbb{C}}
\left(\langle 2 3 \rangle \tilde{\omega}_{1}^{A^{\prime}} 
+ \langle 3 1 \rangle \tilde{\omega}_{2}^{A^{\prime}} 
+ \langle 1 2 \rangle \tilde{\omega}_{3}^{A^{\prime}}
\right).
\end{align}
From this result and Remark $\ref{remark1}$, we can see that $\Delta^{(2)}\left(\bar{Z}_{1},\bar{Z}_{2},\bar{Z}_{3}\right)$ is a representative element of $H^{1}(\mathbb{CP}^{1},\mathcal{O}(-2))$.
\end{proof}

\subsection{On conformal invariance}

The delta functions $\Delta_{m}^{(3)}(\bar{Z},\bar{X})$
and $\Delta^{(2)}\left(\bar{Z}_{1},\bar{Z}_{2},\bar{Z}_{3}\right)$ are globally conformally invariant, because the integrand $\delta^{(4)}_{\mathbb{C}}\left(\bar{Z}_{\alpha}\right)$ in the definition equations (\ref{delta}) and $(\ref{2Delta})$ is globally conformally invariant ($SU(2,2)$ invariant). We can show this fact as follows:

\begin{theorem}
The delta function $\delta^{(4)}_{\mathbb{C}}\left(\bar{Z}_{\alpha}\right)$ on $\mathbb{C}^{4}$ is invariant under the $SU(2,2)$ transformation:
\begin{align}
\delta^{(4)}_{\mathbb{C}}\left({\Lambda_{\alpha}}^{\beta} \bar{Z}_{\beta}\right) 
= \delta^{(4)}_{\mathbb{C}} \left(\bar{Z}_{\alpha}\right), \quad \left( {\Lambda_{\alpha}}^{\beta} \in SU(2,2) \right).
\end{align}
\label{GSU22delta}
\end{theorem}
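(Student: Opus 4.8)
The plan is to prove invariance directly from the explicit formula $\delta^{(4)}_{\mathbb{C}}(\bar Z_{\alpha}) = \tfrac{1}{(2\pi i)^4}\tfrac{1}{\bar Z_0 \bar Z_1 \bar Z_2 \bar Z_3}$ by reducing the statement to a special case of Theorem~\ref{thAz}, the transformation law $\delta^{(n)}_{\mathbb{C}}(A\mathbf{z}) = (\det A)^{-1}\delta^{(n)}_{\mathbb{C}}(\mathbf{z})$ for a non-singular complex matrix $A$. First I would observe that the $SU(2,2)$ action $\bar Z_{\alpha}\mapsto {\Lambda_{\alpha}}^{\beta}\bar Z_{\beta}$ is precisely a linear change of coordinates $\mathbf{z}\mapsto \Lambda\mathbf{z}$ with $n=4$ and $\Lambda = ({\Lambda_{\alpha}}^{\beta})$, so Theorem~\ref{thAz} gives immediately
\begin{align}
\delta^{(4)}_{\mathbb{C}}\left({\Lambda_{\alpha}}^{\beta}\bar Z_{\beta}\right) = \frac{1}{\det \Lambda}\,\delta^{(4)}_{\mathbb{C}}\left(\bar Z_{\alpha}\right).
\end{align}

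The only remaining point is that $\det\Lambda = 1$ for $\Lambda \in SU(2,2)$. This is where I would spend a sentence: $SU(2,2)$ is by definition the subgroup of $GL(4,\mathbb{C})$ preserving a Hermitian form of signature $(2,2)$ \emph{and} having unit determinant (the ``S'' in $SU$), so $\det\Lambda = 1$ is built into the definition. Hence $1/\det\Lambda = 1$ and the two sides agree. If one prefers not to invoke the definition directly, one can note that the determinant of any element of the pseudo-unitary group $U(2,2)$ is a complex number of modulus $1$, and restricting to $SU(2,2)$ forces it to equal $1$; either way the factor is trivial.

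The argument has essentially no obstacle: the real content was already packaged into Theorem~\ref{thAz}, whose proof used Cauchy's integration formula together with the Jacobian $d^n\mathbf{z} = (\det A)^{-1} d^n\mathbf{w}$. The one thing to be careful about is that Theorem~\ref{thAz} is stated for a \emph{non-singular} matrix, so I would remark that every $\Lambda\in SU(2,2)$ is invertible (indeed $\Lambda^{-1}$ again lies in $SU(2,2)$), which is needed for the theorem to apply and for the contour deformation implicit in the integral identity to make sense. I would then conclude that, since $\delta^{(4)}_{\mathbb{C}}(\bar Z_\alpha)$ is the integrand appearing in the definitions $(\ref{delta})$ and $(\ref{2Delta})$ of $\Delta^{(3)}_m$ and $\Delta^{(2)}$, this establishes the global conformal ($SU(2,2)$) invariance of those delta functions as well, which is the use we want to make of the theorem.
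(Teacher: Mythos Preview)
Your proposal is correct and follows exactly the paper's own argument: invoke Theorem~\ref{thAz} with $n=4$ and $A=\Lambda$, then use $\det\Lambda=1$ for $\Lambda\in SU(2,2)$. The paper's proof is the one-line version of what you wrote; your additional remarks on non-singularity and the consequence for $\Delta^{(3)}_m$ and $\Delta^{(2)}$ are fine but not needed for the theorem itself.
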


\begin{proof}
From Theorem $\ref{thAz}$ and $\det \left(\Lambda_{\alpha}{}^{\beta}\right)=1$, we obtain
\begin{align}
\delta^{(4)}_{\mathbb{C}}\left({\Lambda_{\alpha}}^{\beta} \bar{Z}_{\beta}\right) 
= \dfrac{1}{\det \left( {\Lambda_{\alpha}}^{\beta} \right)}
\delta^{(4)}_{\mathbb{C}} \left(\bar{Z}_{\alpha}\right)
= \delta^{(4)}_{\mathbb{C}} \left(\bar{Z}_{\alpha}\right).
\notag
\end{align}
\end{proof}

In addition, the delta function $\delta^{(4)}_{\mathbb{C}}\left(\bar{Z}_{\alpha}\right)$ 
is invariant under the local conformal transformation because it is annihilated by all $SU(2,2)$ generators. Here, we represent the $SU(2,2)$ generators in terms of the dual twistor variables $\bar{Z}_{\alpha}=\left(\tilde{\pi}_{A}, \tilde{\omega}^{A^{\prime}}\right)$. 
The Lorentz generators $J_{A^{\prime} B^{\prime}}$, $\tilde{J}_{AB}$, the momentum operators $P_{AA^{\prime}}$, the dilatation operator $D$, and the special conformal generators $K_{AA^{\prime}}$ are represented as follows, by carrying out the Fourier transform of the $SU(2,2)$ generators in the momentum space \cite{Wit} to the dual complex twistor space:
\begin{align}
J_{A^{\prime} B^{\prime}} &= \dfrac{i}{2} 
\left( \tilde{\omega}_{A^{\prime}} \dfrac{\partial}{\partial \tilde{\omega}^{B^{\prime}}}
+ \tilde{\omega}_{B^{\prime}} \dfrac{\partial}{\partial \tilde{\omega}^{A^{\prime}}} \right),
\label{J}
\\
\tilde{J}_{A B} &= \dfrac{i}{2} 
\left( \tilde{\pi}_{A} \dfrac{\partial}{\partial \tilde{\pi}^{B}}
+\tilde{\pi}_{B} \dfrac{\partial}{\partial \tilde{\pi}^{A}} \right),
\\
P_{A A^{\prime}} &=
\tilde{\pi}_{A} \dfrac{\partial}{\partial \tilde{\omega}^{A^{\prime}}},
\\
D &= \dfrac{i}{2} \left( 
- \tilde{\omega}^{A^{\prime}} \dfrac{\partial}{\partial \tilde{\omega}^{A^{\prime}}}
+ \tilde{\pi}_{A} \dfrac{\partial}{\partial \tilde{\pi}_{A}}
\right),
\\
K_{A A^{\prime}} &=
\tilde{\omega}_{A^{\prime}} \dfrac{\partial}{\partial \tilde{\pi}^{A}}
.
\label{K}
\end{align}

\begin{theorem}
The delta function $\delta^{(4)}_{\mathbb{C}}\left(\bar{Z}_{\alpha}\right)$ on $\mathbb{C}^{4}$ is annihilated by all $SU(2,2)$ generators $J_{A^{\prime} B^{\prime}}$, $\tilde{J}_{AB}$, $P_{AA^{\prime}}$, $D$ and $K_{AA^{\prime}}$.
\label{SU22delta}
\end{theorem}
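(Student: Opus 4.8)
The plan is to verify directly that each of the five generators, written as a first-order differential operator in the twistor variables $\tilde{\pi}_A$ and $\tilde{\omega}^{A^\prime}$, kills the function $\delta^{(4)}_{\mathbb{C}}(\bar Z_\alpha)$. Since $\delta^{(4)}_{\mathbb{C}}(\bar Z) = (2\pi i)^{-4}/(\bar Z_0 \bar Z_1 \bar Z_2 \bar Z_3)$ is, up to a constant, simply $\prod_\alpha \bar Z_\alpha^{-1}$, I would first record the two elementary facts that drive everything: $\bar Z_\beta\,\partial_{\bar Z_\gamma}\,\delta^{(4)}_{\mathbb{C}} = 0$ for $\beta\neq\gamma$ (because $\partial_{\bar Z_\gamma}\delta^{(4)}_{\mathbb{C}} = -\delta^{(4)}_{\mathbb{C}}/\bar Z_\gamma$ and then $\bar Z_\beta/\bar Z_\gamma$ leaves no pole in $\bar Z_\gamma$ to contour-integrate — this is exactly Theorem~4.9 applied to a single variable and its distributional shadow), while $\bar Z_\gamma\,\partial_{\bar Z_\gamma}\,\delta^{(4)}_{\mathbb{C}} = -\delta^{(4)}_{\mathbb{C}}$ (no sum). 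Equivalently, as a distribution paired against holomorphic test functions on the torus $\Gamma$ of Eq.~\eqref{Gamma}, $z_i\,\partial_{z_j}\delta^{(n)}_{\mathbb{C}}$ vanishes off the diagonal and gives $-\delta^{(n)}_{\mathbb{C}}$ on it; I would state this as the working lemma and justify it by the same Cauchy-integral manipulation used in the proof of Theorem~4.7.

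With that lemma in hand the five computations are short and I would just carry them out one at a time. For $P_{AA^\prime} = \tilde\pi_A\,\partial/\partial\tilde\omega^{A^\prime}$: the variable being differentiated, $\tilde\omega^{A^\prime}$, is distinct from the multiplying variable $\tilde\pi_A$, so the off-diagonal fact gives zero immediately. Likewise $K_{AA^\prime} = \tilde\omega_{A^\prime}\,\partial/\partial\tilde\pi^{A}$ annihilates $\delta^{(4)}_{\mathbb{C}}$ by the same off-diagonal argument (with indices raised/lowered by $\epsilon$, which is just relabelling). For $\tilde J_{AB} = \tfrac{i}{2}(\tilde\pi_A\,\partial/\partial\tilde\pi^B + \tilde\pi_B\,\partial/\partial\tilde\pi^A)$ and $J_{A^\prime B^\prime}$ analogously: when $A\neq B$ (after accounting for the index-raising with $\epsilon^{AB}$, so that $\partial/\partial\tilde\pi^B$ really means $\epsilon^{BC}\partial/\partial\tilde\pi_C$) each term is off-diagonal and vanishes; when $A=B$ the two symmetrized diagonal pieces are built so that the traces cancel — more precisely, $\tilde\pi_A\,\partial/\partial\tilde\pi_A$ summed over $A$ on $\delta^{(4)}_{\mathbb{C}}$ gives $-2\,\delta^{(4)}_{\mathbb{C}}$ (one for each of $\tilde\pi_0,\tilde\pi_1$), but the trace part is projected out of the symmetric traceless combination $J_{AB}$, so $J_{AB}$ and $\tilde J_{AB}$ annihilate it. Finally for $D = \tfrac{i}{2}(-\tilde\omega^{A^\prime}\,\partial/\partial\tilde\omega^{A^\prime} + \tilde\pi_A\,\partial/\partial\tilde\pi_A)$, the diagonal fact gives $\tilde\pi_A\,\partial/\partial\tilde\pi_A\,\delta^{(4)}_{\mathbb{C}} = -2\,\delta^{(4)}_{\mathbb{C}}$ and $\tilde\omega^{A^\prime}\,\partial/\partial\tilde\omega^{A^\prime}\,\delta^{(4)}_{\mathbb{C}} = -2\,\delta^{(4)}_{\mathbb{C}}$, and the difference is zero — this is just the statement that $\delta^{(4)}_{\mathbb{C}}$ has homogeneity degree $-4$ in the four variables $\bar Z_\alpha$ and the dilatation weights of $\tilde\pi$ and $\tilde\omega$ are opposite and equal.

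The one genuinely delicate point — and the place I would be most careful — is the bookkeeping of the raised indices in $\tilde J_{AB}$, $J_{A^\prime B^\prime}$ and $K_{AA^\prime}$, where $\partial/\partial\tilde\pi^B = \epsilon^{BC}\,\partial/\partial\tilde\pi_C$ mixes the two components $\tilde\pi_0,\tilde\pi_1$; one must check that "off-diagonal" is still the right dichotomy after this $\epsilon$-contraction, i.e. that $\tilde\pi_A\,\epsilon^{BC}\,\partial/\partial\tilde\pi_C$ applied to $1/(\tilde\pi_0\tilde\pi_1)$ really does vanish when symmetrized in $A,B$. Concretely $\epsilon^{BC}\partial_{\tilde\pi_C}(1/(\tilde\pi_0\tilde\pi_1))$ has a pole structure $\sim 1/(\tilde\pi_0^2\tilde\pi_1)$ or $1/(\tilde\pi_0\tilde\pi_1^2)$ depending on $B$, and multiplying by $\tilde\pi_A$ and symmetrizing leaves terms like $1/\tilde\pi_0^2$ (no $\tilde\pi_1$) which, as a \v{C}ech cocycle / hyperfunction on $\mathbb{CP}^1$, are coboundaries — equivalently they integrate to zero against the torus. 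I would phrase this cancellation once, carefully, using the distributional pairing of Proposition~4.5 (extended to the negative-power "extra terms" of Remark~4.6, which are precisely the cohomologically trivial pieces), and then the remaining four generators follow by the same pattern with essentially no extra work. The conclusion is that $\delta^{(4)}_{\mathbb{C}}(\bar Z_\alpha)$ is annihilated by $J_{A^\prime B^\prime}$, $\tilde J_{AB}$, $P_{AA^\prime}$, $D$ and $K_{AA^\prime}$, hence is invariant under the full local $\mathfrak{su}(2,2)$ action.
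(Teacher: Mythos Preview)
Your proposal is correct and follows essentially the same approach as the paper: both compute the action of each generator on $\delta^{(4)}_{\mathbb{C}}(\bar Z_\alpha)$ and observe that the resulting expressions are missing a pole in one of the four variables, hence vanish as elements of $H^{3}(\mathbb{CP}^{3},\mathcal{O}(-4))$ (equivalently, integrate to zero against holomorphic test functions on the torus). The paper simply computes representative components such as $J_{0'0'}$, $P_{00'}$, $K_{00'}$ explicitly and invokes cohomological triviality, whereas you package the same mechanism into the off-diagonal/diagonal lemma $\bar Z_\beta\,\partial_{\bar Z_\gamma}\delta^{(4)}_{\mathbb{C}}\sim 0$; one small slip to fix is that in your first statement of the lemma you write ``leaves no pole in $\bar Z_\gamma$'' where you mean $\bar Z_\beta$ (the factor $\bar Z_\beta$ cancels the simple pole in $\bar Z_\beta$, not the now-double pole in $\bar Z_\gamma$), though you state the mechanism correctly in your final paragraph.
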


\begin{proof}
First, by applying the Lorentz generator $J_{0^{\prime}0^{\prime}}$ to $\delta^{(4)}_{\mathbb{C}}\left(\bar{Z}_{\alpha}\right)$, we have
\begin{align}
J_{{0}^{\prime} {0}^{\prime}} \delta^{(4)}_{\mathbb{C}}\left(\bar{Z}_{\alpha}\right)
=\dfrac{i}{(2 \pi i)^{4}}
\dfrac{1}{\tilde{\pi}_{0} \tilde{\pi}_{1}
\left(\tilde{\omega}^{0^{\prime}}\right)^{2}}.
\label{Jdelta}
\end{align}
This is equivalent to $0$ as an element of $H^{3}\left(\mathbb{CP}^{3}, \mathcal{O}(-4) \right)$ because the delta function $\delta^{(4)}_{\mathbb{C}}\left(\bar{Z}_{\alpha}\right)$ is a representative element of $H^{3}\left(\mathbb{CP}^{3}, \mathcal{O}(-4) \right)$.
Similarly, the actions of other Lorentz generators are equivalent to $0$.
Next, by applying 
the momentum operator $P_{00^{\prime}}$ to 
$\delta^{(4)}_{\mathbb{C}}\left(\bar{Z}_{\alpha}\right)$, we have
\begin{align}
P_{0 {0}^{\prime}} \delta^{(4)}_{\mathbb{C}}\left(\bar{Z}_{\alpha}\right)
= \dfrac{-1}{(2\pi i)^{4}}
\dfrac{1}{\tilde{\pi}_{1} \left(\tilde{\omega}^{0^{\prime}}\right)^{2} \tilde{\omega}^{1^{\prime}}}.
\label{Pdelta}
\end{align}
This is equivalent to $0$ as an element of $H^{3}\left(\mathbb{CP}^{3}, \mathcal{O}(-4) \right)$.
Similarly, the actions of other momentum operators are equivalent to $0$.
Furthermore, the application of dilatation operator $D$ to 
$\delta^{(4)}_{\mathbb{C}}\left(\bar{Z}_{\alpha}\right)$ equals to $0$ by direct calculation: $D \delta^{(4)}_{\mathbb{C}}\left(\bar{Z}_{\alpha}\right)=0$.
Finally, by applying the special conformal generator $K_{00^{\prime}}$ to $\delta^{(4)}_{\mathbb{C}}\left(\bar{Z}_{\alpha}\right)$, we have
\begin{align}
K_{0 0^{\prime}} \delta^{(4)}_{\mathbb{C}}\left(\bar{Z}_{\alpha}\right)
= \dfrac{1}{(2 \pi i)^{4}} \dfrac{1}{
\tilde{\pi}_{0} \left(\tilde{\pi}_{1}\right)^{2}
\tilde{\omega}^{0^{\prime}}}.
\end{align}
This is equivalent to $0$ as an element of $H^{3}\left(\mathbb{CP}^{3}, \mathcal{O}(-4) \right)$.
Similarly, the actions of other special conformal generators are equivalent to $0$.
Thus, each of the $SU(2,2)$ generators annihilates $\delta^{(4)}_{\mathbb{C}}\left(\bar{Z}_{\alpha}\right)$.
\end{proof}

From Theorems \ref{GSU22delta} and \ref{SU22delta}, we can see that $\Delta^{(3)}_{m}\left(\bar{Z},\bar{X}\right)$ defined by Eq. $(\ref{delta})$ and $\Delta^{(2)}\left(\bar{Z}_{1},\bar{Z}_{2},\bar{Z}_{3}\right)$ defined by Eq. (\ref{2Delta}) are globally and locally conformally invariant.

\subsection{New definition of the delta functions on the dual complex super twistor space}

In this subsection, we propose a new definition of the delta functions for superspaces.

\begin{definition}
First, let $\left(\bar{Z}_{\alpha},\xi_{k}\right)$ ( $\alpha=0,1,2,3$ ; $k=1,2,3,4$ ) be the coordinates of point 
$\mathcal{\bar{Z}}$
in complex supermanifold $\mathbb{C}^{4|4}$.
We define the delta function on $\mathbb{C}^{4|4}$ as
 the product of the delta function $\delta^{(4)}_{\mathbb{C}}\left(\bar{Z}_{\alpha}\right)$ on $\mathbb{C}^{4}$ and the Grassmann delta function $\delta^{(4)}\left(\xi_{k}\right)$ as follows:
\begin{align}
\delta^{(4|4)}_{\mathbb{C}} \left(\mathcal{\bar{Z}}\right)
:=\delta^{(4)}_{\mathbb{C}}\left(\bar{Z}_{\alpha}\right) \delta^{(4)}(\xi_{k})
= \dfrac{1}{(2 \pi i)^{4}}
\dfrac{1}{\bar{Z}_{0}\bar{Z}_{1}\bar{Z}_{2}\bar{Z}_{3}} \xi_{1} \xi_{2} \xi_{3} \xi_{4}.
\label{delta44}
\end{align}
\end{definition}

\begin{definition}
We define the delta function on the dual complex super twistor space $\mathbb{CP}^{3|4}$ as follows:
\begin{align}
\Delta^{(3|4)} \left( \mathcal{\bar{Z}}_{1}, \mathcal{\bar{Z}}_{2} \right)
:= \int_{C} \dfrac{dv}{v} \delta^{(4|4)}_{\mathbb{C}}
\left( \mathcal{\bar{Z}}_{1} - v \mathcal{\bar{Z}}_{2} \right),
\label{delta12}
\end{align}
where the integral contour $C$ surrounds only one of the four singular points of $\delta^{(4)}_{\mathbb{C}}\left(\bar{Z}_{1\alpha}-v\bar{Z}_{2\alpha}\right)$
in the same manner as in Eq. (\ref{delta}).
\end{definition}

\begin{definition}
We define the delta function which enforces the collinearity of the three points $\bar{\mathcal{Z}}_{1}$, $\bar{\mathcal{Z}}_{2}$, and $\bar{\mathcal{Z}}_{3}$ in $\mathbb{CP}^{3|4}$ as follows:
\begin{align}
\Delta^{(2|4)}\left(\mathcal{\bar{Z}}_{1},\mathcal{\bar{Z}}_{2},\mathcal{\bar{Z}}_{3}\right)
:= \int_{\gamma} \dfrac{dv}{v} \dfrac{dw}{w}
\delta^{(4|4)}_{\mathbb{C}} \left(\mathcal{\bar{Z}}_{1}-v\mathcal{\bar{Z}}_{2}-w\mathcal{\bar{Z}}_{3}\right),
\label{s2Delta}
\end{align}
where the integral contour $\gamma$ surrounds the point $(v,w)$ in $\mathbb{C}^{2}$ which satisfies 
Eq. $(\ref{gamma})$
in the same manner as in Eq. $(\ref{2Delta})$.
\end{definition}

\begin{proposition}
By carrying out the integration in Eq. $(\ref{s2Delta})$, we have
\begin{align}
\Delta^{(2|4)}\left(\mathcal{\bar{Z}}_{1},\mathcal{\bar{Z}}_{2},\mathcal{\bar{Z}}_{3}\right)
= 
&\dfrac{1}{
\langle 1 2 
\rangle \langle 2 3 \rangle
\langle 3 1 \rangle
}
\delta^{(2)}_{\mathbb{C}}\left(
\langle 2 3 \rangle 
{\tilde{\omega}_{1}}^{A^{\prime}}
+\langle 3 1 \rangle 
{\tilde{\omega}_{2}}^{A^{\prime}}
+\langle 1 2 \rangle 
{\tilde{\omega}_{3}}^{A^{\prime}}\right)
\notag
\\
&\times
\delta^{(4)}\left(
\langle 2 3 \rangle \xi_{1k}
+\langle 3 1 \rangle \xi_{2k}
+\langle 1 2 \rangle \xi_{3k} \right).
\label{Delta24123}
\end{align}
\end{proposition}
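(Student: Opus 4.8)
The plan is to compute the contour integral in Eq.~$(\ref{s2Delta})$ in essentially the same way as in the purely bosonic Proposition preceding it, since the Grassmann delta function $\delta^{(4)}(\xi_{k})$ in $\delta^{(4|4)}_{\mathbb{C}}$ factors out cleanly and the bosonic $v,w$-integration is unchanged. First I would substitute the definition $(\ref{delta44})$ into $(\ref{s2Delta})$, so that the integrand becomes $\frac{1}{vw}\,\delta^{(4)}_{\mathbb{C}}\!\left(\bar{Z}_{1\alpha}-v\bar{Z}_{2\alpha}-w\bar{Z}_{3\alpha}\right)\,\delta^{(4)}\!\left(\xi_{1k}-v\xi_{2k}-w\xi_{3k}\right)$. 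The key point is that the contour $\gamma$ is chosen to enclose the point $(v,w)$ fixed by Eq.~$(\ref{gamma})$, i.e.\ $v = \langle 31\rangle/\langle 23\rangle$ and $w = \langle 12\rangle/\langle 23\rangle$ on the support (these follow from the vanishing of the $\tilde\pi$-components exactly as in the bosonic proof via the relations $(\ref{ab})$--$(\ref{vwab})$). Evaluating the residues in $v$ and $w$ simultaneously freezes the bosonic integral to the value already obtained, giving the factor $\frac{\langle 23\rangle^{3}}{\langle 31\rangle\langle 12\rangle}\,\delta^{(2)}_{\mathbb{C}}\!\left(\langle 23\rangle\tilde\omega_{1}^{A'}+\langle 31\rangle\tilde\omega_{2}^{A'}+\langle 12\rangle\tilde\omega_{3}^{A'}\right)$ from the bosonic part.

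Next I would substitute the residue values of $v$ and $w$ into the Grassmann delta function. Using the Grassmann scaling identity $\delta^{(4)}(c\,\theta_{k}) = c^{4}\,\delta^{(4)}(\theta_{k})$ for a commuting scalar $c$, one has
\begin{align}
\delta^{(4)}\!\left(\xi_{1k}-\tfrac{\langle 31\rangle}{\langle 23\rangle}\xi_{2k}-\tfrac{\langle 12\rangle}{\langle 23\rangle}\xi_{3k}\right)
= \dfrac{1}{\langle 23\rangle^{4}}\,\delta^{(4)}\!\left(\langle 23\rangle\xi_{1k}+\langle 31\rangle\xi_{2k}+\langle 12\rangle\xi_{3k}\right),
\end{align}
which supplies the fermionic factor in $(\ref{Delta24123})$ together with a compensating $\langle 23\rangle^{-4}$. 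Multiplying the bosonic factor $\frac{\langle 23\rangle^{3}}{\langle 31\rangle\langle 12\rangle}$ by this $\langle 23\rangle^{-4}$ produces the total prefactor $\frac{1}{\langle 12\rangle\langle 23\rangle\langle 31\rangle}$ displayed in the statement, and collecting the bosonic and fermionic delta functions gives the claimed formula.

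I expect the only genuine subtlety — the ``hard part'' — to be the sign bookkeeping in the Grassmann scaling step: one must be careful that pulling the scalar $\langle 23\rangle$ through the four anticommuting factors $\xi_{\cdot 1}\xi_{\cdot 2}\xi_{\cdot 3}\xi_{\cdot 4}$ in $\delta^{(4)}$ really gives $\langle 23\rangle^{4}$ with a $+$ sign (even powers of a commuting scalar carry no sign, so this is clean, but it should be checked against the ordering convention $\xi_{1}\xi_{2}\xi_{3}\xi_{4}$ fixed in $(\ref{delta44})$), and that the rearrangement of $\langle 23\rangle\xi_{1k}+\langle 31\rangle\xi_{2k}+\langle 12\rangle\xi_{3k}$ inside the argument of $\delta^{(4)}$ matches the cyclic ordering of $\langle 12\rangle\langle 23\rangle\langle 31\rangle$ in the denominator without introducing an extra minus sign. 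Everything else is a direct transcription of the bosonic computation, since the contour $\gamma$, the change of variables $(\ref{ab})$, and the residue evaluation in $a,b$ are identical; the Grassmann piece is simply carried along as an overall multiplicative factor that is evaluated at the same pole location.
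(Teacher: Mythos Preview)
Your approach is correct and coincides with the paper's: it performs the same change of variables $(\ref{ab})$--$(\ref{vwab})$, evaluates the same residues in $(a,b)$, and carries the Grassmann factor along to be evaluated at the pole, then rescales by $\langle 23\rangle^{4}$ to obtain $(\ref{Delta24123})$. The only difference is cosmetic: the paper rewrites the full integrand in $(a,b)$ variables and integrates once, whereas you invoke the bosonic result and then append the Grassmann piece.

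One concrete slip in your sign bookkeeping, precisely where you anticipated trouble: solving $\tilde{\pi}_{1A}-v\tilde{\pi}_{2A}-w\tilde{\pi}_{3A}=0$ by contracting with $\tilde{\pi}_{3}$ and $\tilde{\pi}_{2}$ gives $v=\langle 13\rangle/\langle 23\rangle=-\langle 31\rangle/\langle 23\rangle$ and $w=\langle 21\rangle/\langle 23\rangle=-\langle 12\rangle/\langle 23\rangle$, not the values you wrote. Substituting these into $\xi_{1k}-v\xi_{2k}-w\xi_{3k}$ produces $\xi_{1k}+\tfrac{\langle 31\rangle}{\langle 23\rangle}\xi_{2k}+\tfrac{\langle 12\rangle}{\langle 23\rangle}\xi_{3k}$ (plus signs, as in the paper's intermediate step), and then the rescaling $\delta^{(4)}(c\theta)=c^{4}\delta^{(4)}(\theta)$ yields the stated argument with the correct signs directly. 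Your displayed identity as written has two compensating sign errors (wrong $v,w$, then wrong sign inside the rescaled delta), so the final answer is right but the intermediate line is not.
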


\begin{proof}
To carry out the integration, we transform the integral variables $v$ and $w$ to $a$ and $b$ in the same manner as Eq. $(\ref{ab})$. The integral contour $\gamma$ in Eq. $(\ref{s2Delta})$ is the $2$-dimensional torus which is the direct product of the circles expressed by Eqs. $(\ref{gamma0})$ and $(\ref{gamma1})$. By using the relations expressed in Eq. $(\ref{vwab})$, we can carry out the integration in Eq. $(\ref{s2Delta})$ in terms of variables $a$ and $b$ as follows:
\begin{align}
&\Delta^{(2|4)}\left(\mathcal{\bar{Z}}_{1},\mathcal{\bar{
Z}}_{2},\mathcal{\bar{Z}}_{3}\right)
\notag 
\\
=
&\dfrac{\langle 2 3 \rangle}{(2 \pi i)^{2}}
\int_{\gamma} da \wedge db
\dfrac{1}{(a- \bar{Z}_{10})(b- \bar{Z}_{11})}
\dfrac{1}{(a \bar{Z}_{31}-b \bar{Z}_{30})(-a \bar{Z}_{21}+b \bar{Z}_{20})}
\notag
\\
&
\,\,\,\,\,
\times
\prod_{{\alpha}=2}^{3}
\dfrac{1}{(2\pi i)}
\dfrac{1}
{\bar{Z}_{1\alpha}
-\dfrac{a \bar{Z}_{31}-b \bar{Z}_{30}}{\langle 2 3 \rangle} \bar{Z}_{2\alpha}
-\dfrac{-a \bar{Z}_{21}+b \bar{Z}_{20}}{\langle 2 3 \rangle} \bar{Z}_{3\alpha}}
\notag
\\
&
\,\,\,\,\,
\times
\prod_{k=1}^{4}
\left(
\xi_{1k}
-\dfrac{a \bar{Z}_{31}-b \bar{Z}_{30}}{\langle 2 3 \rangle} {\xi_{2}}_{k}
-\dfrac{-a \bar{Z}_{21}+b \bar{Z}_{20}}{\langle 2 3 \rangle}{\xi_{3}}_{k} \right)
\\
=
&\dfrac{\langle 2 3 \rangle}
{
\langle 1 2 \rangle
\langle 3 1 \rangle 
}
\prod_{\alpha=2}^{3}
\dfrac{1}{(2 \pi i)}
\dfrac{1}{
\bar{Z}_{1\alpha}
+\dfrac{\langle 3 1 \rangle}{\langle 2 3 \rangle} \bar{Z}_{2\alpha}
+\dfrac{\langle 1 2 \rangle}{\langle 2 3 \rangle} \bar{Z}_{3\alpha}
}
\notag
\\
&
\,\,\,\,\,
\times
\prod_{k=1}^{4} \left(
\xi_{1k} 
+\dfrac{\langle 3 1 \rangle}{\langle 2 3 \rangle} \xi_{2k}
+\dfrac{\langle 1 2 \rangle}{\langle 2 3 \rangle} \xi_{3k}
\right)
\\
=
&
\dfrac{1}{
\langle 1 2 \rangle 
\langle 2 3 \rangle
\langle 3 1 \rangle
}
\delta^{(2)}_{\mathbb{C}}\left(
\langle 2 3 \rangle {\tilde{\omega}_{1}}^{A^{\prime}}
+\langle 3 1 \rangle {\tilde{\omega}_{2}}^{A^{\prime}}
+\langle 1 2 \rangle {\tilde{\omega}_{3}}^{A^{\prime}}
\right)
\notag
\\
& \,\,\,\,\,
\times 
\delta^{(4)}\left(
\langle 2 3 \rangle \xi_{1k}
+\langle 3 1 \rangle \xi_{2k}
+\langle 1 2 \rangle \xi_{3k}
\right).
\label{101}
\end{align}
Thus, we obtain Eq. $(\ref{Delta24123})$. 
\end{proof}

From Eq. $(\ref{101})$, we see that 
$\Delta^{(2|4)}\left(\mathcal{\bar{Z}}_{1},\mathcal{\bar{
Z}}_{2},\mathcal{\bar{Z}}_{3}\right)$
is invariant under the scale transformation
$\left(\mathcal{\bar{Z}}_{1},\mathcal{\bar{Z}}_{2},\mathcal{\bar{Z}}_{3}\right) \rightarrow \left(c_{1}\mathcal{\bar{Z}}_{1}, c_{2}\mathcal{\bar{Z}}_{2},c_{3}\mathcal{\bar{Z}}_{3}\right)$.
In addition, 
$\Delta^{(2|4)}\left(\mathcal{\bar{Z}}_{1},\mathcal{\bar{
Z}}_{2},\mathcal{\bar{Z}}_{3}\right)$
has cyclic symmetry and is antisymmetric when any two variables are exchanged. These properties are exactly the nature of the MHV amplitude omitted the colour factor. 
In the next subsection, we show that $\Delta^{(2|4)}\left(\mathcal{\bar{Z}}_{1},\mathcal{\bar{
Z}}_{2},\mathcal{\bar{Z}}_{3}\right)$
is the three-particle MHV amplitude in the dual complex super twistor space.

The delta functions $\Delta^{(3|4)}\left(\bar{\mathcal{Z}}_{1}, \bar{\mathcal{Z}}_{2}\right)$
and $\Delta^{(2|4)}\left(\mathcal{\bar{Z}}_{1},\mathcal{\bar{
Z}}_{2},\mathcal{\bar{Z}}_{3}\right)$
are globally superconformally invariant, because the integrand $\delta^{(4|4)}_{\mathbb{C}}\left(\bar{\mathcal{Z}}\right)$
in the definition equations $(\ref{delta12})$ and $(\ref{s2Delta})$ 
is globally superconformally invariant ($PSU(2,2|4)$ invariant).
This can be expressed as follows:

\begin{theorem}
The delta function $\delta^{(4|4)}_{\mathbb{C}}\left(\mathcal{\bar{Z}}\right)$ on $\mathbb{C}^{4|4}$ defined by Eq. (\ref{delta44}) is invariant under the $PSU(2,2|4)$ transformation:
\begin{align}
\delta^{(4|4)}_{\mathbb{C}}\left(\Lambda \mathcal{\bar{Z}}\right)
=\delta^{(4|4)}_{\mathbb{C}}\left(\mathcal{\bar{Z}}\right),
\,\,\,\,\,
(\Lambda \in PSU(2,2|4)).
\end{align} 
\label{GPSU224delta}  
\end{theorem}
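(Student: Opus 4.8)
The plan is to reduce the super-invariance statement to the already-established bosonic case together with an elementary computation for the Grassmann delta function. Since $\delta^{(4|4)}_{\mathbb{C}}\left(\mathcal{\bar{Z}}\right)$ factorizes as $\delta^{(4)}_{\mathbb{C}}\left(\bar{Z}_{\alpha}\right)\delta^{(4)}(\xi_{k})$, and a generic element $\Lambda \in PSU(2,2|4)$ mixes the bosonic coordinates $\bar{Z}_{\alpha}$ with the fermionic coordinates $\xi_{k}$, I would first write $\Lambda$ in block form acting on $\mathcal{\bar{Z}} = (\bar{Z}_{\alpha}, \xi_{k})$, with an even-even block $B$, an odd-odd block $F$, and odd off-diagonal blocks. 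The key point is that the superdeterminant $\mathrm{sdet}\,\Lambda = 1$ for $\Lambda \in PSU(2,2|4)$ (indeed $SL(4|4)$ already forces $\mathrm{sdet} = 1$, and $PSU(2,2|4)$ sits inside this), and $\mathrm{sdet}\,\Lambda = \det(B - \text{corrections})/\det(F - \text{corrections})$, so the bosonic Jacobian and the fermionic Jacobian conspire to cancel.

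The key steps, in order, would be: (i) decompose the action of $\Lambda$ so that $\bar{Z}'_{\alpha}$ depends linearly on $\bar{Z}_{\beta}$ plus bilinear-in-fermion corrections, and $\xi'_{k}$ depends linearly on $\xi_{l}$ plus corrections; (ii) observe that because $\delta^{(4)}(\xi_{k}) = \xi_1\xi_2\xi_3\xi_4$ is the top-degree form in the Grassmann variables, only the purely-fermionic-linear part of the transformation of the $\xi'_{k}$ contributes when we substitute — any term carrying an extra factor of $\bar{Z}$ that would come with fewer $\xi$'s is killed by the product of four $\xi'$s, and conversely the bosonic-even block is what survives in $\delta^{(4)}_{\mathbb{C}}(\bar{Z}'_{\alpha})$ up to the same Grassmann-even corrections; (iii) apply Theorem \ref{thAz} to the bosonic factor to pull out $1/\det(B_{\text{eff}})$, and perform the elementary Grassmann change of variables $\delta^{(4)}(\xi'_{k}) = \det(F_{\text{eff}})\,\delta^{(4)}(\xi_{k})$ — note the fermionic Jacobian appears with a positive power, opposite to the bosonic one; (iv) combine to get the factor $\det(F_{\text{eff}})/\det(B_{\text{eff}}) = 1/\mathrm{sdet}\,\Lambda = 1$. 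A cleaner alternative, which I would probably prefer to present, is to mimic Theorem \ref{SU22delta}: show $\delta^{(4|4)}_{\mathbb{C}}$ is annihilated by every generator of $\mathfrak{psu}(2,2|4)$ — the bosonic generators are handled exactly as in Theorem \ref{SU22delta}, the fermionic generators act by differentiating $\delta^{(4)}(\xi_k)$ in one $\xi$ while multiplying by one $\tilde\pi$ or $\tilde\omega$ and shifting the bosonic denominator, and each resulting term is again a coboundary (cohomologically zero) in $H^{3}(\mathbb{CP}^{3},\mathcal{O}(-4))$ tensored with the fermionic factor — and then exponentiate to the group, which is legitimate since $PSU(2,2|4)$ is connected.

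I would expect the main obstacle to be bookkeeping rather than anything conceptual: making precise the sense in which the "effective" even and odd blocks entering the Jacobians are exactly the numerator and denominator of the Berezinian, and checking the sign conventions so that the fermionic Jacobian genuinely cancels the bosonic one. One must be careful that the off-diagonal odd blocks of $\Lambda$ do contribute bilinear (Grassmann-even) corrections to both Jacobians, and that these corrections are precisely the Schur-complement terms in the Berezinian formula $\mathrm{sdet}\begin{pmatrix}B & \sigma\\ \rho & F\end{pmatrix} = \det(B - \sigma F^{-1}\rho)\det(F)^{-1}$; this is where a careless argument would go wrong. A secondary subtlety is that $PSU(2,2|4)$, unlike $SU(2,2|4)$, is a quotient — but since we only ever use $\mathrm{sdet} = 1$ and the Lie-algebra action, and the central $\mathfrak{u}(1)$ that is quotiented out acts trivially on the projective object, passing between the group and its quotient causes no difficulty. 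Once these points are pinned down, the cancellation is immediate and the theorem follows, completing the argument that $\Delta^{(3|4)}$ and $\Delta^{(2|4)}$ are globally superconformally invariant.
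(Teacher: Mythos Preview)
Your proposal is correct in spirit but takes a genuinely different route from the paper. The paper does \emph{not} decompose $\Lambda$ into blocks or invoke the Berezinian formula directly. Instead it argues at the level of the pairing with an arbitrary test function: one computes $\int \Omega\,\delta^{(4|4)}_{\mathbb{C}}(\Lambda\bar{\mathcal{Z}})f(\bar{\mathcal{Z}})$, performs the super change of variables $\bar{\mathcal{X}}=\Lambda\bar{\mathcal{Z}}$, uses that the full supermeasure $\Omega=d^{4}\bar{Z}\,d^{4}\xi$ is $PSU(2,2|4)$-invariant, and evaluates the resulting integral to $f(0,0)$ exactly as for the untransformed delta. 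All of the Schur-complement bookkeeping you flag as the ``main obstacle'' is thereby absorbed into the single statement that $\Omega$ is invariant; no block decomposition, no effective Jacobians, no cohomological cancellation of cross terms is ever written down.

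Your direct-Jacobian approach would work, but step~(ii) as you state it is not correct: the $\rho\bar{Z}$ contributions to $\xi'_{k}$ are \emph{not} killed by the top-form condition alone --- they survive in $\xi'_{1}\xi'_{2}\xi'_{3}\xi'_{4}$ and only drop out after being multiplied by $\delta^{(4)}_{\mathbb{C}}(\bar{Z}')$, because each such term carries a numerator $\bar{Z}_{\beta}$ that cancels one denominator factor and renders the bosonic piece a \v{C}ech coboundary. So the mechanism is cohomological triviality, not Grassmann saturation, and your later remark about Schur complements is closer to what is actually needed. Your ``cleaner alternative'' (annihilation by all generators, then exponentiate) is exactly what the paper proves separately as the \emph{next} theorem on local invariance; the paper keeps the global and infinitesimal statements as two independent results rather than deducing one from the other. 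What your Jacobian route buys is an explicit algebraic identity; what the paper's test-function route buys is a two-line proof with no combinatorics.
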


\begin{proof}
First, by integrating the product of $\delta^{(4|4)}_{\mathbb{C}}\left(\mathcal{\bar{Z}}\right)$
and an arbitrary function $f\left(\mathcal{\bar{Z}}\right)$ on $\mathbb{C}^{4|4}$ using the integration measure
\begin{align}
\Omega = d\bar{Z}_{0}d\bar{Z}_{1}d\bar{Z}_{2}d\bar{Z}_{3}d\xi_{1}d\xi_{2}d\xi_{3}d\xi_{4},
\label{measure}
\end{align}
we have
\begin{align}
\int \Omega \delta^{(4|4)}_{\mathbb{C}} \left(\mathcal{\bar{Z}}\right)
f\left(\mathcal{\bar{Z}}\right) 
&= \int \Omega \dfrac{1}{(2 \pi i)^{4}} \dfrac{1}{\bar{Z}_{0}\bar{Z}_{1}\bar{Z}_{2}\bar{Z}_{3}} \xi_{1} \xi_{2} \xi_{3} \xi_{4} f\left(\bar{Z}_{\alpha},\xi_{k}\right)
\notag
\\
&= \int d\xi_{1}d\xi_{2}d\xi_{3}d\xi_{4} \xi_{1} \xi_{2} \xi_{3} \xi_{4} f(0,\xi_{k})
\notag
\\
&= f(0,0),
\label{integral}
\end{align}
where the integration contour of each variable $\bar{Z}_{\alpha}$ surrounds singular point $\bar{Z}_{\alpha}=0$.

Next, we calculate the integration of the product of $\delta^{(4|4)}_{\mathbb{C}}\left(\Lambda \mathcal{\bar{Z}}\right)\,
(\Lambda \in PSU(2,2|4))$ and an arbitrary function $f\left(\mathcal{\bar{Z}}\right)$
to demonstrate that $\delta^{(4|4)}_{\mathbb{C}}\left(\Lambda \mathcal{\bar{Z}}\right)$
plays the same role as $\delta^{(4|4)}_{\mathbb{C}}
\left(\mathcal{\bar{Z}}\right)$. 
To carry out the integration, we perform a variable transformation as follows:
\begin{align}
\mathcal{\bar{X}} = \Lambda \mathcal{\bar{Z}} \, \Rightarrow \,
\mathcal{\bar{Z}} = \Lambda^{-1} \mathcal{\bar{X}}
= \left(
{C_{\alpha}}^{\beta} \bar{X}_{\beta} + {\Xi_{\alpha}}^{\ell} \psi_{\ell} ,\,
{\Theta_{k}}^{\beta} \bar{X}_{\beta} + {D_{k}}^{\ell} \psi_{\ell}
\right)
, \,\,\,\,\,
\mathcal{\bar{X}} := (\bar{X}_{\beta},\psi_{\ell}).
\end{align}
Here, $\Lambda^{-1}$ is a matrix defined by
\begin{align}
\Lambda^{-1} = \begin{pmatrix}
{C_{\alpha}}^{\beta} & {\Xi_{\alpha}}^{\ell} \\
{\Theta_{k}}^{\beta} & {D_{k}}^{\ell}
\end{pmatrix},
\end{align}
where $C$ and $D$ denote $4\times4$ Grassmann even matrices and $\Theta$ and $\Xi$ denote $4\times4$
Grassmann odd matrices.
In addition, $\Omega$ defined by Eq. $(\ref{measure})$ is invariant under the action of $PSU(2,2|4)$. Therefore, we have
\begin{align}
\Omega = d\bar{X}_{0}d\bar{X}_{1}d\bar{X}_{2}d\bar{X}_{3}
d\psi_{1} d\psi_{2} d\psi_{3} d\psi_{4}.
\end{align}
Using the variable $\mathcal{\bar{X}}$, the following integration can be performed:
\begin{align}
&\int \Omega \delta^{(4|4)}_{\mathbb{C}} \left(\Lambda \mathcal{\bar{Z}}\right) f\left(\mathcal{\bar{Z}}\right)
\notag
\\
=
& \int d\bar{X}_{0}d\bar{X}_{1}d\bar{X}_{2}d\bar{X}_{3}
d\psi_{1} d\psi_{2} d\psi_{3} d\psi_{4}
\delta^{(4|4)}_{\mathbb{C}} \left(\mathcal{\bar{X}}\right) 
f\left(\Lambda^{-1} \mathcal{\bar{X}}\right)
\notag
\\
=
& \int d\bar{X}_{0}d\bar{X}_{1}d\bar{X}_{2}d\bar{X}_{3}
d\psi_{1} d\psi_{2} d\psi_{3} d\psi_{4}
\notag
\\
&\times
\dfrac{1}{(2 \pi i)^{4}} \dfrac{1}{\bar{X}_{0}\bar{X}_{1}\bar{X}_{2}\bar{X}_{3}}
\psi_{1} \psi_{2} \psi_{3} \psi_{4} 
f\left(C\bar{X}+\Xi \psi, \Theta \bar{X} +D \psi \right)
\notag
\\
=
& \int d\psi_{1} d\psi_{2} d\psi_{3} d\psi_{4}
\psi_{1} \psi_{2} \psi_{3} \psi_{4} 
f(\Xi \psi, D \psi)
\notag
\\
=& f(0,0),
\end{align}
where the integration contour of each variable $\bar{X}_{\alpha}$ surrounds the singular point $\bar{X}_{\alpha}=0$. This result is the same as that obtained using Eq. (\ref{integral}). Therefore, $\delta^{(4|4)}_{\mathbb{C}} \left(\Lambda \mathcal{\bar{Z}}\right)$ is equal to 
$\delta^{(4|4)}_{\mathbb{C}} \left(\mathcal{\bar{Z}}\right)$.
\end{proof}

In addition, the delta function $\delta^{(4|4)}_{\mathbb{C}} \left(\mathcal{\bar{Z}}\right)$ is invariant under the locally superconformal transformation because it is annihilated by all $PSU(2,2|4)$ generators. 
Here, we represent the $PSU(2,2|4)$ generators in terms of the dual complex super twistor variables 
$\mathcal{\bar{Z}} = \left( \tilde{\pi}_{A}, \tilde{\omega}^{A^{\prime}}, \xi_{k} \right)$
by carrying out the Fourier transform of the $PSU(2,2|4)$ generators in the momentum superspace \cite{Wit}
to the dual complex super twistor space.
The generators in the dual complex super twistor space consist of the $SU(2,2)$ generators represented by Eqs. $(\ref{J}) \sim (\ref{K})$ and the following generators:
\begin{align}
&{R_{k}}^{\ell} =  -\xi_{k} \dfrac{\partial}{\partial \xi_{\ell}} 
+\dfrac{1}{4} \delta_{k}^{\ell} \sum_{m=1}^{4} \xi_{m} \dfrac{\partial}{\partial \xi_{m}},
\label{R}
\\
&{Q^{A^{\prime}}}_{k} = -\xi_{k} \dfrac{\partial}{\partial \tilde{\omega}_{A^{\prime}}},
\\
&{S_{A^{\prime}}}^{k} = \tilde{\omega}_{A^{\prime}}
\dfrac{\partial}{\partial {\xi_{k}}},
\\ 
&\tilde{Q}^{Ak} = \tilde{\pi}^{A} \dfrac{\partial}{\partial \xi_{k}},
\\
&\tilde{S}_{Ak} = \xi_{k} \dfrac{\partial}{\partial \tilde{\pi}^{A}}.
\label{tildeS}
\end{align}

\begin{theorem}
The delta function
$\delta^{(4|4)}_{\mathbb{C}}\left(\mathcal{\bar{Z}}\right)$
on $\mathbb{C}^{4|4}$
is annihilated by all $PSU(2,2|4)$ generators.
\label{PSU224delta}
\end{theorem}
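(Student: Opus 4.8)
The plan is to adapt the proof of Theorem~\ref{SU22delta}, splitting the $PSU(2,2|4)$ generators into four families according to how each acts on the factorized expression $\delta^{(4|4)}_{\mathbb{C}}(\bar{\mathcal{Z}}) = \delta^{(4)}_{\mathbb{C}}(\bar{Z}_\alpha)\,\delta^{(4)}(\xi_k)$ of Eq.~(\ref{delta44}). For the bosonic $SU(2,2)$ generators (\ref{J})--(\ref{K}), which involve only $\tilde\pi$- and $\tilde\omega$-derivatives, the Grassmann factor $\delta^{(4)}(\xi_k)$ is an inert spectator, so $G\,\delta^{(4|4)}_{\mathbb{C}} = \bigl(G\,\delta^{(4)}_{\mathbb{C}}(\bar{Z}_\alpha)\bigr)\,\delta^{(4)}(\xi_k)$ for each such $G$. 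By Theorem~\ref{SU22delta} the factor in parentheses is a \v{C}ech coboundary, and multiplying a coboundary by the monomial $\xi_1\xi_2\xi_3\xi_4$ again yields a coboundary, so these generators annihilate $\delta^{(4|4)}_{\mathbb{C}}$.

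For the $R$-symmetry generators (\ref{R}), which touch only $\delta^{(4)}(\xi_k) = \xi_1\xi_2\xi_3\xi_4$, I would first record the identities $\xi_k\,(\partial/\partial\xi_\ell)(\xi_1\xi_2\xi_3\xi_4) = \delta^\ell_k\,\xi_1\xi_2\xi_3\xi_4$ (the case $k\neq\ell$ dying because $\xi_k^2=0$) and $\sum_m \xi_m\,(\partial/\partial\xi_m)(\xi_1\xi_2\xi_3\xi_4) = 4\,\xi_1\xi_2\xi_3\xi_4$; substituting them into $R_k{}^\ell$ gives $R_k{}^\ell\,\delta^{(4|4)}_{\mathbb{C}} = \bigl(-\delta^\ell_k + \frac{1}{4}\,\delta^\ell_k\cdot 4\bigr)\,\delta^{(4|4)}_{\mathbb{C}} = 0$ identically. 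For the two supercharges carrying a left factor $\xi_k$ --- namely $Q^{A'}{}_k$ and $\tilde S_{Ak}$ (the latter being Eq.~(\ref{tildeS})) --- one has $\xi_k\,\delta^{(4)}(\xi_k) = 0$ since $\xi_k$ is already one of the four factors, hence $Q^{A'}{}_k\,\delta^{(4|4)}_{\mathbb{C}} = \tilde S_{Ak}\,\delta^{(4|4)}_{\mathbb{C}} = 0$ identically.

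The remaining supercharges $\tilde Q^{Ak} = \tilde\pi^A\,(\partial/\partial\xi_k)$ and $S_{A'}{}^k = \tilde\omega_{A'}\,(\partial/\partial\xi_k)$ form the substantive case, which I expect to be the main obstacle. Applied to $\delta^{(4|4)}_{\mathbb{C}}$ they replace $\delta^{(4)}(\xi_k)$ by a sign times the product of the three $\xi$'s other than $\xi_k$, and multiply the bosonic factor by $\tilde\pi^A$ (resp.\ $\tilde\omega_{A'}$). Because $\tilde\pi^A = \epsilon^{AB}\tilde\pi_B$, the numerator $\tilde\pi^A$ cancels exactly one of the two $\tilde\pi$-factors in the denominator of $\delta^{(4)}_{\mathbb{C}}(\bar{Z}_\alpha)$ --- for instance $\tilde\pi^{0}/(\tilde\pi_0\tilde\pi_1) = \pm\,1/\tilde\pi_0$ --- and likewise $\tilde\omega_{A'}$ removes one $\tilde\omega$-factor. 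The outcome is thus a bosonic function independent of one of the four homogeneous coordinates, times a Grassmann cubic; such a function is holomorphic on the triple intersection of the standard four-chart cover of $\mathbb{CP}^{3|4}$ obtained by dropping the chart attached to the absent coordinate, hence is a \v{C}ech coboundary and vanishes as an element of the cohomology group for which $\delta^{(4|4)}_{\mathbb{C}}$ is a representative --- exactly the mechanism used in Theorem~\ref{SU22delta} to discard terms such as $1/\bigl(\tilde\pi_1(\tilde\omega^{0'})^2\tilde\omega^{1'}\bigr)$. Working out one representative of each type, say $\tilde Q^{0k}$ and $S_{0'}{}^k$, and invoking the already-established Lorentz and $R$ covariance to reach the rest, completes the argument.

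The one delicate point is this last family: I must be sure that tensoring a coboundary with a Grassmann monomial preserves triviality and that ``one bosonic coordinate missing from the denominator'' is the correct criterion for vanishing in the relevant super-\v{C}ech cohomology. Both hold because the odd variables enter only as constant coefficients of the \v{C}ech complex built from the bosonic cover, so the vanishing reduces to the purely bosonic fact already invoked for Theorem~\ref{SU22delta}.
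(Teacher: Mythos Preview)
Your proposal is correct and follows essentially the same decomposition and arguments as the paper: the $SU(2,2)$ generators are handled by Theorem~\ref{SU22delta}, the $R_k{}^\ell$ by the direct computation you give, and $Q^{A'}{}_k$, $\tilde S_{Ak}$ by the vanishing $\xi_k\,\delta^{(4)}(\xi)=0$. The one difference in emphasis is that what you flag as ``the substantive case'' --- the supercharges $\tilde Q^{Ak}$ and $S_{A'}{}^k$ --- is disposed of in the paper in a single line each, by invoking the already-proved identity $\bar Z_\alpha\,\delta^{(4)}_{\mathbb{C}}(\bar Z)=0$ (the Theorem following Remark~\ref{remark1}); your more explicit coboundary discussion is a rederivation of that fact rather than a new ingredient, so there is no obstacle there.
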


\begin{proof}
The application of the $SU(2,2)$ generators to
$\delta^{(4|4)}_{\mathbb{C}}\left(\mathcal{\bar{Z}}\right)$
is equal to zero in the same manner as in the case of $\delta^{(4)}_{\mathbb{C}}\left(\bar{Z}_{\alpha}\right)$
shown in Theorem \ref{SU22delta}.
Therefore, we only investigate the application of generators ${R_{k}}^{\ell}$,  ${Q^{A^{\prime}}}_{k}$, ${S_{A^{\prime}}}^{k}$,
$\tilde{Q}^{Ak}$, and $\tilde{S}_{Ak}$.
First, we see that the application of ${R_{k}}^{\ell}$ to $\delta^{(4)}\left(\xi_{k}\right)$ is equal to zero through direct calculations. Therefore, ${R_{k}}^{\ell}\delta^{(4|4)}_{\mathbb{C}}\left(\mathcal{\bar{Z}}\right)=0$.
Next, we have ${Q^{A^{\prime}}}_{k} \delta^{(4|4)}_{\mathbb{C}}\left(\mathcal{\bar{Z}}\right)=0$ and 
$\tilde{S}_{Ak} \delta^{(4|4)}_{\mathbb{C}}\left(\mathcal{\bar{Z}}\right)=0$ using the property of the delta function $\xi_{k}\delta^{(4)}\left(\xi_{k}\right)=0$.
Furthermore, we have ${S_{A^{\prime}}}^{k}\delta^{(4|4)}_{\mathbb{C}}\left(\mathcal{\bar{Z}}\right)=0$ using the property of the delta function
$\tilde{\omega}^{A^{\prime}}\delta^{(4)}_{\mathbb{C}}\left(\bar{Z}_{\alpha}\right)=0$.
Similarly, we have $\tilde{Q}^{Ak} \delta^{(4|4)}_{\mathbb{C}}\left(\mathcal{\bar{Z}}\right)=0$ using the property of the delta function $\tilde{\pi}_{A}\delta^{(4)}_{\mathbb{C}}\left(\bar{Z}_{\alpha}\right)=0$.
Thus, each $PSU(2,2|4)$ generator annihilates $\delta^{(4|4)}_{\mathbb{C}}\left(\mathcal{\bar{Z}}\right)$.
\end{proof}

From Theorems \ref{GPSU224delta} and \ref{PSU224delta}, we can see that
$\Delta^{(3|4)}\left(\bar{\mathcal{Z}}_{1}, \bar{\mathcal{Z}}_{2}\right)$
defined by Eq. $(\ref{delta12})$
and 
$\Delta^{(2|4)}\left(\mathcal{\bar{Z}}_{1},\mathcal{\bar{
Z}}_{2},\mathcal{\bar{Z}}_{3}\right)$
defined by Eq. $(\ref{s2Delta})$ 
are globally and locally superconformally invariant.

\subsection{Relationship with the scattering amplitudes in the momentum superspace}

In this subsection, we demonstrate that the inverse Fourier transform of the delta function 
$\Delta^{(2|4)}(\bar{\mathcal{Z}}_{1},\bar{\mathcal{Z}}_{2},\bar{\mathcal{Z}}_{3})$ is the three-particle MHV amplitude for $\mathcal{N}=4$ in the momentum superspace. In addition, the inverse Fourier transform of
 the product of two delta functions
$\Delta^{(3|4)}(\bar{\mathcal{Z}}_{1},\bar{\mathcal{Z}}_{2})\Delta^{(3|4)}(\bar{\mathcal{Z}}_{1},\bar{\mathcal{Z}}_{3})$
is the three-particle $\overline{\text{MHV}}$ amplitude.
Hence, when the twistor space is complex space, the amplitudes do not have sign factors and are superconformally invariant. 
Here, the inverse Fourier transform from dual complex super twistor space to momentum superspace is given by
\begin{align}
\mathcal{F}^{-1} \left[ f \left(\tilde{\pi}_{A}, \tilde{\omega}^{A^{\prime}}, \xi_{k} \right) \right]
= \int d^{2} \tilde{\omega} d^{4} \xi
e^{-\tilde{\omega}^{A^{\prime}} \pi_{A^{\prime}} - \xi_{k} \eta^{k}} f \left(\tilde{\pi}_{A}, \tilde{\omega}^{A^{\prime}}, \xi_{k} \right),
\end{align}
where $(\pi_{A^{\prime}}, \tilde{\pi}_{A}, \eta^{k})$ denotes the coordinates of the momentum superspace.
Detailed properties of this inverse Fourier transform
are provided in literature \cite{JN}.

\begin{remark}
In this subsection, we define  
$[ ij ] := \pi_{iA^{\prime}} \pi_{j}^{A^{\prime}}
= \pi_{i 0^\prime} \pi_{j 1^\prime} - \pi_{i 1^\prime} \pi_{j 0^\prime}$. 
In addition, we consider that $\tilde{\pi}_{i}$ is not the complex conjugate of $\pi_{i}$ (the momenta are complex in this case) because the three-particle MHV and $\overline{\text{MHV}}$ amplitudes are not when $\tilde{\pi}_{i}$ is the complex conjugate of $\pi_{i}$ (the momenta are real in this case) \cite{ElHu}.   
Therefore, $[ij]$ is not the complex conjugate of $\langle ij \rangle$.
\end{remark}

\begin{theorem}
The inverse Fourier transform of the delta function on the dual complex super twistor space defined by Eq. $(\ref{s2Delta})$ is
\begin{align}
\mathcal{F}^{-1} \left[ \Delta^{(2|4)} \left( \mathcal{\bar{Z}}_{1}, \mathcal{\bar{Z}}_{2}, \mathcal{\bar{Z}}_{3}
\right) \right]
=
\dfrac{1}{ 
\langle 12 \rangle 
\langle 23 \rangle
\langle 31 \rangle
}
\delta^{(4)}_{\mathbb{C}} \left( \sum_{i=1}^{3} \tilde{\pi}_{i A} 
\pi_{i A^{\prime}} \right)
\delta^{(8)} \left( \sum_{i=1}^{3}  
\tilde{\pi}_{iA} \eta_{i}^{k} \right).
\label{finverseDelta24123}
\end{align}
This is exactly the three-particle MHV amplitude for $\mathcal{N}=4$ SYM in the momentum superspace.
\end{theorem}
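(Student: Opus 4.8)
The plan is to start from the closed form of $\Delta^{(2|4)}(\mathcal{\bar{Z}}_{1},\mathcal{\bar{Z}}_{2},\mathcal{\bar{Z}}_{3})$ obtained in Eq.~$(\ref{Delta24123})$ and to apply to it the inverse Fourier transform $\mathcal{F}^{-1}=\int\prod_{i=1}^{3}d^{2}\tilde{\omega}_{i}\,d^{4}\xi_{i}\; e^{-\sum_{i=1}^{3}(\tilde{\omega}_{i}^{A^{\prime}}\pi_{iA^{\prime}}+\xi_{ik}\eta_{i}^{k})}$. Since the prefactor $1/(\langle 12\rangle\langle 23\rangle\langle 31\rangle)$ in $(\ref{Delta24123})$ depends only on the spinors $\tilde{\pi}_{iA}$, which $\mathcal{F}^{-1}$ leaves untouched, the computation factorizes into a bosonic integration over the $\tilde{\omega}_{i}$ acting on $\delta^{(2)}_{\mathbb{C}}(\langle 23\rangle\tilde{\omega}_{1}^{A^{\prime}}+\langle 31\rangle\tilde{\omega}_{2}^{A^{\prime}}+\langle 12\rangle\tilde{\omega}_{3}^{A^{\prime}})$ and a fermionic integration over the $\xi_{i}$ acting on $\delta^{(4)}(\langle 23\rangle\xi_{1k}+\langle 31\rangle\xi_{2k}+\langle 12\rangle\xi_{3k})$. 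The one algebraic input used repeatedly is the Schouten identity $\langle 23\rangle\tilde{\pi}_{1A}+\langle 31\rangle\tilde{\pi}_{2A}+\langle 12\rangle\tilde{\pi}_{3A}=0$, equivalently $\tilde{\pi}_{1A}=-\langle 23\rangle^{-1}(\langle 31\rangle\tilde{\pi}_{2A}+\langle 12\rangle\tilde{\pi}_{3A})$.

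For the bosonic part I would change variables $\tilde{\omega}_{1}^{A^{\prime}}\mapsto\langle 23\rangle\tilde{\omega}_{1}^{A^{\prime}}+\langle 31\rangle\tilde{\omega}_{2}^{A^{\prime}}+\langle 12\rangle\tilde{\omega}_{3}^{A^{\prime}}$, keeping $\tilde{\omega}_{2},\tilde{\omega}_{3}$ fixed: by Theorem~$\ref{thAz}$ this carries the holomorphic Jacobian $\langle 23\rangle^{-2}$, and Proposition~$\ref{propo4}$ then collapses the $\tilde{\omega}_{1}$-integral against $\delta^{(2)}_{\mathbb{C}}$, substituting $\tilde{\omega}_{1}^{A^{\prime}}=-\langle 23\rangle^{-1}(\langle 31\rangle\tilde{\omega}_{2}^{A^{\prime}}+\langle 12\rangle\tilde{\omega}_{3}^{A^{\prime}})$ into the exponential. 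The exponent is then linear in $\tilde{\omega}_{2}^{A^{\prime}}$ and $\tilde{\omega}_{3}^{A^{\prime}}$, so the remaining two integrations produce, via the Fourier-inversion property of $\delta^{(2)}_{\mathbb{C}}$ established in [10], the product $\delta^{(2)}_{\mathbb{C}}(\pi_{2A^{\prime}}-\langle 23\rangle^{-1}\langle 31\rangle\pi_{1A^{\prime}})\,\delta^{(2)}_{\mathbb{C}}(\pi_{3A^{\prime}}-\langle 23\rangle^{-1}\langle 12\rangle\pi_{1A^{\prime}})$. The Schouten identity rewrites $\sum_{i}\tilde{\pi}_{iA}\pi_{iA^{\prime}}=\tilde{\pi}_{2A}(\pi_{2A^{\prime}}-\langle 23\rangle^{-1}\langle 31\rangle\pi_{1A^{\prime}})+\tilde{\pi}_{3A}(\pi_{3A^{\prime}}-\langle 23\rangle^{-1}\langle 12\rangle\pi_{1A^{\prime}})$, and since the linear map $(u_{A^{\prime}},w_{A^{\prime}})\mapsto\tilde{\pi}_{2A}u_{A^{\prime}}+\tilde{\pi}_{3A}w_{A^{\prime}}$ has determinant $\langle 23\rangle$ in each $A^{\prime}$-block, one more use of Theorem~$\ref{thAz}$ recombines the two $\delta^{(2)}_{\mathbb{C}}$'s into $\langle 23\rangle^{2}\,\delta^{(4)}_{\mathbb{C}}(\sum_{i}\tilde{\pi}_{iA}\pi_{iA^{\prime}})$; the factor $\langle 23\rangle^{2}$ cancels the earlier Jacobian, leaving $\delta^{(4)}_{\mathbb{C}}(\sum_{i}\tilde{\pi}_{iA}\pi_{iA^{\prime}})$ up to the normalization carried by the conventions of $\mathcal{F}^{-1}$ and $\delta^{(2)}_{\mathbb{C}}$.

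The fermionic part would be handled by the same scheme with Berezin integration replacing contour integration: the change of variable $\xi_{1k}\mapsto\langle 23\rangle\xi_{1k}+\langle 31\rangle\xi_{2k}+\langle 12\rangle\xi_{3k}$ supplies the Berezin Jacobian $\langle 23\rangle^{4}$, the Grassmann delta forces $\xi_{1k}=-\langle 23\rangle^{-1}(\langle 31\rangle\xi_{2k}+\langle 12\rangle\xi_{3k})$, and the remaining $\int d^{4}\xi_{2}\,d^{4}\xi_{3}$ of the now-linear exponential gives $\prod_{k}(\eta_{2}^{k}-\langle 23\rangle^{-1}\langle 31\rangle\eta_{1}^{k})\prod_{k}(\eta_{3}^{k}-\langle 23\rangle^{-1}\langle 12\rangle\eta_{1}^{k})$; using the Schouten identity inside $\delta^{(8)}(\sum_{i}\tilde{\pi}_{iA}\eta_{i}^{k})=\prod_{k,A}(\tilde{\pi}_{2A}(\eta_{2}^{k}-\langle 23\rangle^{-1}\langle 31\rangle\eta_{1}^{k})+\tilde{\pi}_{3A}(\eta_{3}^{k}-\langle 23\rangle^{-1}\langle 12\rangle\eta_{1}^{k}))$, where each $A$-pair again yields a determinant $\langle 23\rangle$, this product equals $\langle 23\rangle^{-4}\,\delta^{(8)}(\sum_{i}\tilde{\pi}_{iA}\eta_{i}^{k})$, whose factor cancels the $\langle 23\rangle^{4}$ Jacobian. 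Multiplying the bosonic result, the fermionic result, and the prefactor $1/(\langle 12\rangle\langle 23\rangle\langle 31\rangle)$ then reproduces Eq.~$(\ref{finverseDelta24123})$, which is the standard momentum-superspace form of the three-particle MHV amplitude of $\mathcal{N}=4$ SYM.

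The hard part will be the bookkeeping of the powers of $\langle ij\rangle$ together with the control of signs: one must check that the holomorphic Jacobians from Theorem~$\ref{thAz}$, the Berezin Jacobians, and the determinants arising when products of lower-dimensional delta functions (both $\delta_{\mathbb{C}}$ and Grassmann ones) are recombined into $\delta^{(4)}_{\mathbb{C}}$ and $\delta^{(8)}$ all cancel so as to leave precisely $1/(\langle 12\rangle\langle 23\rangle\langle 31\rangle)$, and --- the very point of this paper --- that no modulus or sign factor can be generated anywhere, which is guaranteed because every step is holomorphic, Theorem~$\ref{thAz}$ producing $\det A$ and not $|\det A|$. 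A subsidiary point is to make sure that the Fourier-inversion identity $\int d^{2}\tilde{\omega}\,e^{\pm\tilde{\omega}^{A^{\prime}}q_{A^{\prime}}}\propto\delta^{(2)}_{\mathbb{C}}(q_{A^{\prime}})$ used above, together with its normalization and the compatibility of its contour with the contour $\gamma$ of Eq.~$(\ref{s2Delta})$, is precisely the one proved in [10].
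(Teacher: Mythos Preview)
Your argument is correct and complete in outline; the factor-tracking you describe does close, and your final caveat about the identity $\int d^{2}\tilde{\omega}\,e^{-\tilde{\omega}^{A'}q_{A'}}\propto\delta^{(2)}_{\mathbb{C}}(q)$ is exactly the point that requires care. However, the route you take is genuinely different from the paper's.

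The paper does \emph{not} start from the closed form $(\ref{Delta24123})$. Instead it keeps the parameter integral $(\ref{s2Delta})$ intact, applies $\mathcal{F}^{-1}$ under the $(v,w)$-integral directly to $\delta^{(4|4)}_{\mathbb{C}}(\bar{\mathcal{Z}}_{1}-v\bar{\mathcal{Z}}_{2}-w\bar{\mathcal{Z}}_{3})$, performs the $\tilde{\omega}_{1}$-contour integral first, and then evaluates the remaining $\tilde{\omega}_{2},\tilde{\omega}_{3}$ integrals not by a Fourier-inversion lemma from [10] but by the elementary Laplace-type formula $\int_{0}^{\infty}e^{-zx}\,dx=1/z$ (valid for $\operatorname{Re}z>0$) followed by analytic continuation. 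Only \emph{after} that does it carry out the $(v,w)$ contour integral via the same $a,b$ substitution used in the proof of $(\ref{Delta24123})$, and finally recombines the resulting lower-rank $\delta_{\mathbb{C}}$'s into $\delta^{(4)}_{\mathbb{C}}(\sum\tilde{\pi}_{iA}\pi_{iA'})$ using Theorem~$\ref{thAz}$, much as you do.

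What each approach buys: yours is shorter and conceptually cleaner for this theorem, since the $(v,w)$ integral has already been done once in Proposition establishing $(\ref{Delta24123})$ and you simply reuse that result; it also makes the cancellation of all $\langle 23\rangle$ powers transparent via a single Schouten identity applied symmetrically to the bosonic and fermionic sectors. The paper's approach, by contrast, avoids relying on an externally supplied Fourier-inversion identity for $\delta^{(2)}_{\mathbb{C}}$ --- it manufactures the needed $1/q$ factors in-house from the half-line integral plus analytic continuation --- and, more importantly, its template (Fourier-transform the integrand, then do the projective parameter integrals last) carries over verbatim to the next theorem on $\Delta^{(3|4)}(\bar{\mathcal{Z}}_{1},\bar{\mathcal{Z}}_{2})\Delta^{(3|4)}(\bar{\mathcal{Z}}_{1},\bar{\mathcal{Z}}_{3})$, where no analogue of the closed form $(\ref{Delta24123})$ is available to start from.
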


\begin{proof}
First, the inverse Fourier transform of the integrand in the right hand side of Eq. $(\ref{s2Delta})$ is
\begin{align}
& \quad \mathcal{F}^{-1} \left[ \delta^{(4|4)}_{\mathbb{C}}
\left( \mathcal{\bar{Z}}_{1} - v \mathcal{\bar{Z}}_{2}
- w \mathcal{\bar{Z}}_{3} \right) \right]
\notag \\
&
= \int \prod_{j=1}^{3} d^{2} \tilde{\omega}_{j} d^{4} \xi_{j}
e^{-\tilde{\omega}_{j}^{A^{\prime}} \pi_{j A^{\prime}}
- \xi_{jk} \eta_{j}^{k} }
\delta^{(4|4)}_{\mathbb{C}}
\left( \mathcal{\bar{Z}}_{1} - v \mathcal{\bar{Z}}_{2}
- w \mathcal{\bar{Z}}_{3} \right)
\notag \\
&
= \prod_{A=0}^{1} \dfrac{1}{2\pi i}
\dfrac{1}{\left( \tilde{\pi}_{1A} - v \tilde{\pi}_{2A}
- w \tilde{\pi}_{3A} \right)}
\int d^{2} \tilde{\omega}_{2} d^{2} \tilde{\omega}_{3}
e^{- \tilde{\omega}_{2}^{A^\prime} (v \pi_{1 A^\prime} + \pi_{2 A^\prime})}
e^{- \tilde{\omega}_{3}^{A^\prime} (w \pi_{1 A^\prime} + \pi_{3 A^\prime})}
\notag
\\
& \quad \times
\delta^{(4)} \left( v \eta_{1}^{k} + \eta_{2}^{k} \right)
\delta^{(4)} \left( w \eta_{1}^{k} + \eta_{3}^{k} \right),
\label{finverse1}
\end{align}
where the integral contour of $\tilde{\omega}_{1}^{A^{\prime}}$ surrounds the singular point 
$\tilde{\omega}_{1}^{A^{\prime}}=v\tilde{\omega}_{2}^{A^{\prime}}+w\tilde{\omega}_{3}^{A^{\prime}}$.
We set the integration domain in Eq. $(\ref{finverse1})$
as  
$\left\{ \text{Re}\, \tilde{\omega}_{2}^{A^{\prime}} > 0,\,  \text{Re}\, \tilde{\omega}_{3}^{A^{\prime}} > 0 \right\}$.
By using the integration formula
\begin{align}
\int_{0}^{\infty} e^{-zx} dx = \dfrac{1}{z}, \quad
\text{Re} z > 0, 
\label{intformula}
\end{align}
we can carry out the integration
of the variables $\tilde{\omega}_{2}^{A^\prime}$
and $\tilde{\omega}_{3}^{A^\prime}$
in Eq. $(\ref{finverse1})$ as follows:
\begin{align}
&\int d^{2} \tilde{\omega}_{2} d^{2} \tilde{\omega}_{3}
e^{- \tilde{\omega}_{2}^{A^\prime} (v \pi_{1 A^\prime} + \pi_{2 A^\prime})}
e^{- \tilde{\omega}_{3}^{A^\prime} (w \pi_{1 A^\prime} + \pi_{3 A^\prime})}
\notag
\\
=
& \prod_{A^{\prime}=0^{\prime}}^{1^{\prime}}
\dfrac{1}{v \pi_{1 A^\prime} + \pi_{2 A^\prime}}
\prod_{B^{\prime}=0^{\prime}}^{1^{\prime}}
\dfrac{1}{w \pi_{1 B^\prime} + \pi_{3 B^\prime}},
\label{omegaint}
\end{align}
where 
$\left\{\text{Re}(v \pi_{1 A^\prime} + \pi_{2 A^\prime})>0,\,\text{Re}(w \pi_{1 A^\prime} + \pi_{3 A^\prime})>0
\right\}$.
Here, we extend the 
$\left\{ \left( 
\pi_{1A^{\prime}},\, \pi_{2A^{\prime}},\, \pi_{3A^{\prime}}
\right)\right\}$
region in Eq. $(\ref{omegaint})$
 to the direct product of
$\left\{v \pi_{1 A^\prime} + \pi_{2 A^\prime} \ne 0
\right\}$
and
$\left\{w \pi_{1 A^\prime} + \pi_{3 A^\prime} \ne 0
\right\}$
by using the analytic continuation method. From Eqs. $(\ref{s2Delta})$, $(\ref{finverse1})$, and
$(\ref{omegaint})$, we have
\begin{align}
& \quad \mathcal{F}^{-1} \left[ \Delta^{(2|4)} \left( \mathcal{\bar{Z}}_{1}, \mathcal{\bar{Z}}_{2},\mathcal{\bar{Z}}_{3} \right) \right]
\notag \\
&= \int_{\gamma} \dfrac{dv}{v} \dfrac{dw}{w}
\mathcal{F}^{-1} \left[ \delta^{(4|4)}_{\mathbb{C}}
\left( \mathcal{\bar{Z}}_{1} - v \mathcal{\bar{Z}}_{2}
- w \mathcal{\bar{Z}}_{3} \right) \right]
\notag \\
&= \int_{\gamma} \dfrac{dv}{v} \dfrac{dw}{w}
\prod_{A=0}^{1} \dfrac{1}{2\pi i}
\dfrac{1}{\left( \tilde{\pi}_{1A} - v \tilde{\pi}_{2A}
- w \tilde{\pi}_{3A} \right)}
\prod_{A^{\prime}=0^{\prime}}^{1^{\prime}}
\dfrac{1}{v \pi_{1 A^\prime} + \pi_{2 A^\prime}}
\prod_{B^{\prime}=0^{\prime}}^{1^{\prime}}
\dfrac{1}{w \pi_{1 B^\prime} + \pi_{3 B^\prime}}
\notag \\
& \quad \times
\delta^{(4)} \left( v \eta_{1}^{k} + \eta_{2}^{k} \right)
\delta^{(4)} \left( w \eta_{1}^{k} + \eta_{3}^{k} \right).
\label{finverse1.5}
\end{align}
To carry out this integration, we transform the integral variables $v$ and $w$ to $a$ and $b$ in the same manner as Eq. $(\ref{ab})$ and set the integral contour $\gamma$ to surround only the singular points $a=\tilde{\pi}_{10}$ and $b=\tilde{\pi}_{11}$ in the same manner as in Eqs. $(\ref{gamma0})$ and $(\ref{gamma1})$. By performing integration, we obtain 
\begin{align}
(\ref{finverse1.5})
= \dfrac{1}{
\langle 31 \rangle 
\langle 12 \rangle
\langle 23 \rangle^{3}
}
\delta^{(2)}_{\mathbb{C}} \left( \tilde{\pi}_{3A} P^{A}{}_{A^\prime} \right)
\delta^{(2)}_{\mathbb{C}} \left( \tilde{\pi}_{2A} P^{A}{}_{A^\prime}\right)
\delta^{(4)} \left( \tilde{\pi}_{3 A} H^{Ak} \right)  
\delta^{(4)} \left( \tilde{\pi}_{2 A} H^{Ak} \right), 
\label{finverse2}
\end{align}
where
\begin{align}
P^{A}{}_{A^\prime} := \tilde{\pi}_{1}^{A} \pi_{1 A^{\prime}} 
+ \tilde{\pi}_{2}^{A} \pi_{2 A^{\prime}} + \tilde{\pi}_{3}^{A} \pi_{3 A^{\prime}}, \quad
H^{Ak} := \tilde{\pi}_{1}^{A} \eta_{1}^{k} + \tilde{\pi}_{2}^{A} \eta_{2}^{k} + \tilde{\pi}_{3}^{A} \eta_{3}^{k}.
\end{align}
Here, using Theorem \ref{thAz}, the product of the delta functions on $\mathbb{C}^{2}$ in Eq. $(\ref{finverse2})$ can be expressed as follows:
\begin{align}
&\delta^{(2)}_{\mathbb{C}} \left( \tilde{\pi}_{3A} P^{A}{}_{A^\prime} \right)
\delta^{(2)}_{\mathbb{C}} \left( \tilde{\pi}_{2A} P^{A}{}_{A^\prime}\right)
\notag
\\
=& \dfrac{1}{\left( \tilde{\pi}_{3 A} P^{A}{}_{0^\prime} \right)\left( \tilde{\pi}_{2 A} P^{A}{}_{0^\prime} \right)}
\cdot
\dfrac{1}{\left( \tilde{\pi}_{3 A} P^{A}{}_{1^\prime} \right)\left( \tilde{\pi}_{2 A} P^{A}{}_{1^\prime} \right)}
\notag \\
=& \dfrac{1}{\langle 32 \rangle P^{0}{}_{0^\prime}P^{1}{}_{0^\prime}}
\cdot
\dfrac{1}{\langle 32 \rangle P^{0}{}_{1^\prime}P^{1}{}_{1^\prime}}
= \dfrac{1}{\langle 32 \rangle^2} \prod_{A=0}^{1} 
\prod_{A^{\prime}=0^{\prime}}^{1^\prime}
\dfrac{1}{P_{AA^{\prime}}}
\notag \\
=& \dfrac{1}{\langle 32 \rangle^2}
\delta^{(4)}_{\mathbb{C}} \left( \tilde{\pi}_{1A} \pi_{1 A^{\prime}} 
+ \tilde{\pi}_{2A} \pi_{2 A^{\prime}} + \tilde{\pi}_{3A} \pi_{3 A^{\prime}}\right).
\label{dcpopip}
\end{align}
Furthermore, the product of the Grassmann delta functions in Eq. $(\ref{finverse2})$ can be represented as follows:
\begin{align}
&\delta^{(4)} \left( \tilde{\pi}_{3 A} H^{Ak} \right)  
\delta^{(4)} \left( \tilde{\pi}_{2 A} H^{Ak} \right) 
\notag \\
=&\prod_{k=1}^{4} 
\left( \tilde{\pi}_{30} H^{0k} + \tilde{\pi}_{31} H^{1k}\right)
\left( \tilde{\pi}_{20} H^{0k} + \tilde{\pi}_{21} H^{1k}\right)
\notag \\
=&\prod_{k=0}^{4}
\langle 32 \rangle H^{0k} H^{1k}
= \langle 32 \rangle^{4} 
\delta^{(4)}\left( {H_{0}}^{k} \right)
\delta^{(4)}\left( {H_{1}}^{k} \right)
\notag \\
=& \langle 32 \rangle^{4}
\delta^{(8)} \left( \tilde{\pi}_{1A} \eta_{1}^{k} + \tilde{\pi}_{2A} \eta_{2}^{k} + \tilde{\pi}_{3A} \eta_{3}^{k} \right).
\label{d4pih}
\end{align}
Substituting Eqs. $(\ref{dcpopip})$ and $(\ref{d4pih})$ into Eq. $(\ref{finverse2})$, we have
\begin{align}
(\ref{finverse2}) 
= 
\dfrac{
\delta^{(4)}_{\mathbb{C}} \left( \tilde{\pi}_{1A} \pi_{1 A^{\prime}} 
+ \tilde{\pi}_{2A} \pi_{2 A^{\prime}} + \tilde{\pi}_{3A} \pi_{3 A^{\prime}}\right)
\delta^{(8)} \left( \tilde{\pi}_{1A} \eta_{1}^{k} + \tilde{\pi}_{2A} \eta_{2}^{k} + \tilde{\pi}_{3A} \eta_{3}^{k} \right)
}
{
\langle 12 \rangle 
\langle 23 \rangle
\langle 31 \rangle
}.
\end{align}
Thus, we obtain Eq. $(\ref{finverseDelta24123})$.
\end{proof}

\begin{theorem}
The inverse Fourier transform of the product of the two delta functions on the dual complex super twistor space
 $\Delta^{(3|4)}\left(\mathcal{\bar{Z}}_{1}, \mathcal{\bar{Z}}_{2}\right)$ and  
$\Delta^{(3|4)}\left(\mathcal{\bar{Z}}_{1}, \mathcal{\bar{Z}}_{3}\right)$ is 
\begin{align}
& 
\mathcal{F}^{-1} \left[
\dfrac{1}{(2 \pi i)^{2}}
\Delta^{(3|4)}\left( \mathcal{\bar{Z}}_{1}, \mathcal{\bar{Z}}_{2} \right)
\Delta^{(3|4)}\left( \mathcal{\bar{Z}}_{1}, \mathcal{\bar{Z}}_{3} \right)
\right]
\notag \\
=&\dfrac{1}{[12][23][31]}
\delta^{(4)}_{\mathbb{C}} \left( \tilde{\pi}_{1A} \pi_{1A^\prime} + \tilde{\pi}_{2A} \pi_{2A^\prime} + \tilde{\pi}_{3A} \pi_{3A^\prime} \right)
\delta^{(4)} \left( [23] \eta_{1}^{k} + [31] \eta_{2}^{k} + [12] \eta_{3}^{k} \right).
\label{finverse0}
\end{align}
This is exactly the three-particle ${\overline{\text{MHV}}}$ amplitude for $\mathcal{N}=4$ SYM in the momentum superspace.
\end{theorem}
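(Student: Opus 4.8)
The plan is to follow the template of the preceding theorem (the MHV case), using the linearity of $\mathcal{F}^{-1}$. Since $\mathcal{F}^{-1}$ commutes with the two contour integrals defining $\Delta^{(3|4)}(\bar{\mathcal{Z}}_1,\bar{\mathcal{Z}}_2)$ and $\Delta^{(3|4)}(\bar{\mathcal{Z}}_1,\bar{\mathcal{Z}}_3)$, it is enough to compute $\mathcal{F}^{-1}$ of the product of integrands $\delta^{(4|4)}_{\mathbb{C}}(\bar{\mathcal{Z}}_1-v\bar{\mathcal{Z}}_2)\,\delta^{(4|4)}_{\mathbb{C}}(\bar{\mathcal{Z}}_1-w\bar{\mathcal{Z}}_3)$ and then restore the $v$- and $w$-integrations, where $v$ and $w$ are the parameters of $\Delta^{(3|4)}(\bar{\mathcal{Z}}_1,\bar{\mathcal{Z}}_2)$ and $\Delta^{(3|4)}(\bar{\mathcal{Z}}_1,\bar{\mathcal{Z}}_3)$ respectively.

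First I would carry out the Grassmann integrations. Writing $\delta^{(4|4)}_{\mathbb{C}}=\delta^{(4)}_{\mathbb{C}}\,\delta^{(4)}$ and changing variables $\xi_{2k}\mapsto\xi_{1k}-v\xi_{2k}$, $\xi_{3k}\mapsto\xi_{1k}-w\xi_{3k}$, the $\xi_1,\xi_2,\xi_3$ integrals in $\mathcal{F}^{-1}$ collapse to a single Grassmann delta of a linear combination of the $\eta_i^k$, the powers of $v$ and $w$ from the Berezin Jacobians cancelling those pulled out of the delta; this is the analogue of the manipulation leading to (\ref{d4pih}). Then I would perform the bosonic $\tilde{\omega}$-integrations as in (\ref{finverse1})--(\ref{omegaint}): the $\tilde{\omega}_1$-integrals are contour integrals, the remaining $\tilde{\omega}_2,\tilde{\omega}_3$-integrals are elementary through $\int_0^\infty e^{-zx}\,dx=1/z$ after the analytic-continuation extension of the integration region, and the net effect is to replace the $\tilde{\omega}$-dependent poles of the two $\delta^{(4)}_{\mathbb{C}}$'s by rational factors built from the $\pi_{iA'}$, leaving the $\tilde{\pi}_1$-dependent factors $1/[(\tilde{\pi}_{1A}-v\tilde{\pi}_{2A})(\tilde{\pi}_{1A}-w\tilde{\pi}_{3A})]$ ($A=0,1$) untouched. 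At this stage the left side of (\ref{finverse0}) has become a double contour integral over $v$ and $w$ of an explicit rational-times-Grassmann expression.

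Next I would evaluate the $v$- and $w$-contour integrals. After a change of variables in $(v,w)$ analogous to the substitution (\ref{ab}) of the MHV proof — but now built from the $\pi_{iA'}$ rather than the $\tilde{\pi}_{iA}$, so that the residues pick out the bracket invariants $[ij]=\pi_{iA'}\pi_j^{A'}$ — the double residue, whose location is fixed by the two-component Schouten identity $[23]\,\pi_{1A'}+[31]\,\pi_{2A'}+[12]\,\pi_{3A'}=0$, produces the prefactor $1/([12][23][31])$, the argument $[23]\eta_1^k+[31]\eta_2^k+[12]\eta_3^k$ of the surviving Grassmann delta, and — after recombining the remaining products of lower-dimensional $\delta_{\mathbb{C}}$'s through Theorem \ref{thAz} exactly as in (\ref{dcpopip}) — the factor $\delta^{(4)}_{\mathbb{C}}\!\big(\sum_i\tilde{\pi}_{iA}\pi_{iA'}\big)$. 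Collecting the factors of $2\pi i$ and the signs then gives (\ref{finverse0}), and comparison with the standard momentum-superspace form of the colour-stripped amplitude identifies it with the three-particle $\overline{\text{MHV}}$ amplitude of $\mathcal{N}=4$ SYM.

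I expect the main obstacle to be the contour bookkeeping in the last two steps. Because the momenta are complex and $[ij]$ is algebraically independent of $\langle ij\rangle$, one must check that the $\tilde{\omega}_1$-contour and then the $v,w$-contours are taken around precisely those singular points whose residues generate the bracket invariants $[ij]$ rather than the angle invariants, and that the interlocking factors of $2\pi i$, the signs, and the Berezin/holomorphic Jacobians balance so that the final expression coincides with the claimed \v{C}ech representative on the nose; justifying the analytic continuation used to evaluate the $\tilde{\omega}$-integrals, as was done for (\ref{omegaint}), is a further technical point.
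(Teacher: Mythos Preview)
Your overall template matches the paper's --- commute $\mathcal{F}^{-1}$ past the $v,w$ integrals, do the Grassmann and bosonic $\tilde{\omega}$ integrations, then the $v,w$ residues --- but two of the intermediate steps do not work as you describe.

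The $\tilde{\omega}$-integrations are not structurally the same as in the MHV case: there are now \emph{two} factors $\delta^{(4)}_{\mathbb{C}}$, both carrying $\tilde{\omega}_1$-poles, so after the $\tilde{\omega}_1$-contour is taken around the pole of one (at $\tilde{\omega}_1^{A'}=v\tilde{\omega}_2^{A'}$) the other leaves a residual pole $1/(v\tilde{\omega}_2^{B'}-w\tilde{\omega}_3^{B'})$. Hence $\tilde{\omega}_2$ is itself a contour integral, not an instance of $\int_0^\infty e^{-zx}\,dx=1/z$; only $\tilde{\omega}_3$ is elementary. The outcome of the three $\tilde{\omega}$-integrations is the single factor $\delta^{(2)}_{\mathbb{C}}\bigl(vw\,\pi_{1A'}+w\,\pi_{2A'}+v\,\pi_{3A'}\bigr)$, which is \emph{bilinear} in $(v,w)$. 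That in turn breaks your plan for the $v,w$ step: a change of variables ``analogous to (\ref{ab}) but built from the $\pi_{iA'}$'' would need this factor to be linear in $(v,w)$, so (\ref{ab}) has no direct analogue here. The paper proceeds differently: it takes the $v$- and $w$-residues at the $\tilde{\pi}$-poles $v=\tilde{\pi}_{10}/\tilde{\pi}_{20}$ and $w=\tilde{\pi}_{10}/\tilde{\pi}_{30}$ (those prescribed by the definition of $\Delta^{(3|4)}$), obtaining at first $\delta_{\mathbb{C}}(\langle 21\rangle)\,\delta_{\mathbb{C}}(\langle 31\rangle)$ in \emph{angle} brackets together with $\delta^{(2)}_{\mathbb{C}}\bigl(\tilde{\pi}_{10}\pi_{1A'}+\tilde{\pi}_{20}\pi_{2A'}+\tilde{\pi}_{30}\pi_{3A'}\bigr)$ and a Grassmann delta with argument $\tilde{\pi}_{10}\eta_1^k+\tilde{\pi}_{20}\eta_2^k+\tilde{\pi}_{30}\eta_3^k$. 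The square brackets $[ij]$ enter only afterwards, by using the support of that $\delta^{(2)}_{\mathbb{C}}$ to write $\tilde{\pi}_{20}=\tfrac{[31]}{[23]}\tilde{\pi}_{10}$ and $\tilde{\pi}_{30}=\tfrac{[12]}{[23]}\tilde{\pi}_{10}$, and by Theorem~\ref{thAz} to merge $\delta_{\mathbb{C}}(\langle 21\rangle)\delta_{\mathbb{C}}(\langle 31\rangle)$ with the remaining two components of the momentum-conservation delta. So the conversion from $\langle\,\rangle$ to $[\,]$ is algebraic post-processing on the support of momentum conservation, not a choice of contour --- the opposite of what you anticipate in your last paragraph.
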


\begin{proof}
First, from Eq. $(\ref{delta12})$, the product of 
$\Delta^{(3|4)}\left(\mathcal{\bar{Z}}_{1}, \mathcal{\bar{Z}}_{2}\right)$ and  
$\Delta^{(3|4)}\left(\mathcal{\bar{Z}}_{1}, \mathcal{\bar{Z}}_{3}\right)$ is expressed as follows:
\begin{align}
\Delta^{(3|4)}\left( \mathcal{\bar{Z}}_{1}, \mathcal{\bar{Z}}_{2} \right)
\Delta^{(3|4)}\left( \mathcal{\bar{Z}}_{1}, \mathcal{\bar{Z}}_{3} \right)
= \int \dfrac{dv}{v} \dfrac{dw}{w}
\delta^{(4|4)}_{\mathbb{C}} \left( \mathcal{\bar{Z}}_{1} - v \mathcal{\bar{Z}}_{2} \right) 
\delta^{(4|4)}_{\mathbb{C}} \left( \mathcal{\bar{Z}}_{1} - w \mathcal{\bar{Z}}_{3} \right). 
\label{delta1213}
\end{align}
The inverse Fourier transform of the integrand in the right hand side of Eq. $(\ref{delta1213})$ is
\begin{align}
& 
\mathcal{F}^{-1} \left[ \delta^{(4|4)}_{\mathbb{C}} \left( \mathcal{\bar{Z}}_{1} - v \mathcal{\bar{Z}}_{2} \right) 
\delta^{(4|4)}_{\mathbb{C}} \left( \mathcal{\bar{Z}}_{1} - w \mathcal{\bar{Z}}_{3} \right) \right]
\notag \\
=
& \int \prod_{j=1}^{3} d^{2} \tilde{\omega}_{j} d^{4} \xi_{j}
e^{- \tilde{\omega}_{j}^{A^\prime} \pi_{jA^\prime}
- \xi_{jk} \eta_{j}^{k} } 
\delta^{(4|4)}_{\mathbb{C}} \left( \mathcal{\bar{Z}}_{1} - v \mathcal{\bar{Z}}_{2} \right) 
\delta^{(4|4)}_{\mathbb{C}} \left( \mathcal{\bar{Z}}_{1} - w \mathcal{\bar{Z}}_{3} \right)
\notag \\
=
& \prod_{A=0}^{1} \dfrac{1}{(2 \pi i)}\dfrac{1}{\tilde{\pi}_{1A} - v \tilde{\pi}_{2A}}
\prod_{B=0}^{1} \dfrac{1}{(2 \pi i)}\dfrac{1}{\tilde{\pi}_{1B} - w \tilde{\pi}_{3B}}
\prod_{k=1}^{4} \left( vw \eta_{1}^{k} + w \eta_{2}^{k} + v \eta_{3}^{k} \right)
\notag \\
& \times
\int \prod_{j=1}^{3} d^{2} \tilde{\omega}_{j} 
e^{- \tilde{\omega}_{j}^{A^\prime} \pi_{jA^\prime}}
\prod_{A^{\prime}=0^{\prime}}^{1^{\prime}}
\dfrac{1}{(2 \pi i)}\dfrac{1}{\tilde{\omega}_{1}^{A^\prime}-v \tilde{\omega}_{2}^{A^\prime}}
\prod_{B^{\prime}=0^{\prime}}^{1^{\prime}}
\dfrac{1}{(2 \pi i)}\dfrac{1}{\tilde{\omega}_{1}^{B^\prime}-w \tilde{\omega}_{3}^{B^\prime}}.
\label{finverse3}
\end{align}
Here, we set the integral contour of $\tilde{\omega}_{1}$ 
to surround only singular points of ${1}/\left(\tilde{\omega}_{1}^{A^{\prime}}-v\tilde{\omega}_{2}^{A^{\prime}}\right)$. Therefore, for the integration of  
the variables $\tilde{\omega}_{j}$ in Eq. $(\ref{finverse3})$, we have
\begin{align}
& 
\int \prod_{j=1}^{3} d^{2} \tilde{\omega}_{j} 
e^{- \tilde{\omega}_{j}^{A^\prime} \pi_{jA^\prime}}
\prod_{A^{\prime}=0^{\prime}}^{1^{\prime}}
\dfrac{1}{(2 \pi i)}\dfrac{1}{\tilde{\omega}_{1}^{A^\prime}-v \tilde{\omega}_{2}^{A^\prime}}
\prod_{B^{\prime}=0^{\prime}}^{1^{\prime}}
\dfrac{1}{(2 \pi i)}\dfrac{1}{\tilde{\omega}_{1}^{B^\prime}-w \tilde{\omega}_{3}^{B^\prime}}
\notag \\
= 
&\int d^2 \tilde{\omega}_{2} d^2 \tilde{\omega}_{3}
e^{-v \tilde{\omega}_{2}^{A^\prime} \pi_{1A^\prime}-\tilde{\omega}_{2}^{A^\prime} \pi_{2A^\prime}
-\tilde{\omega}_{3}^{A^\prime} \pi_{3A^\prime}}
\prod_{B^\prime=0^\prime}^{1^\prime} \dfrac{1}{(2 \pi i)}
\dfrac{1}{v \tilde{\omega}_{2}^{B^\prime} - w \tilde{\omega}_{3}^{B^\prime}}.
\label{omegaintegral}
\end{align}
Furthermore, we set the integral contour of $\tilde{\omega}_{2}$ to surround the singular points of
$1/\left(v\tilde{\omega}_{2}^{B^{\prime}} - w\tilde{\omega}_{3}^{B^{\prime}} \right)$
and the integral contour of $\tilde{\omega}_{3}$ to be 
$\{\text{Re}\tilde{\omega}_{3}^{A^{\prime}}>0\}$.
Therefore, by using the integration formula $(\ref{intformula})$ , we have 
\begin{align}
(\ref{omegaintegral})
= 
& \dfrac{1}{v^{2}} \int d^2 \tilde{\omega}_{3}
\exp \left[-\tilde{\omega}_{3}^{A^\prime} \left( w \pi_{1A^\prime} + \dfrac{w}{v} \pi_{2A^\prime} + \pi_{3A^\prime} \right) \right]
\notag \\
= 
& (2 \pi i)^2 \delta^{(2)}_{\mathbb{C}} 
\left( vw \pi_{1A^\prime} + w \pi_{2A^\prime} + v \pi_{3A^\prime} \right),
\label{omegaint2}
\end{align}
where 
$\left\{\text{Re}\left( w\pi_{1A^{\prime}} + \dfrac{w}{v} \pi_{2A^{\prime}} + \pi_{3A^{\prime}} \right)>0\right\}$. 
Here, we extend the $\{(\pi_{1A^{\prime}},\pi_{2A^{\prime}},\pi_{3A^{\prime}})\}$
region in Eq. $(\ref{omegaint2})$ to 
$\{vw \pi_{1A^\prime} + w \pi_{2A^\prime} + v \pi_{3A^\prime} \ne 0\}$ by using the analytic continuation method.
Substituting Eq. $(\ref{omegaint2})$ into Eq. $(\ref{finverse3})$, we have
\begin{align}
(\ref{finverse3}) 
=
& 
(2 \pi i)^{2}
\delta^{(2)}_{\mathbb{C}} \left( \tilde{\pi}_{1A} - v \tilde{\pi}_{2A} \right)
\delta^{(2)}_{\mathbb{C}} \left( \tilde{\pi}_{1B} - w \tilde{\pi}_{3B} \right)
\delta^{(4)} \left( vw \eta_{1}^{k} + w \eta_{2}^{k} + v \eta_{3}^{k} \right)
\notag \\
& \times
\delta^{(2)}_{\mathbb{C}} \left( vw \pi_{1A^\prime} + w \pi_{2A^\prime} + v \pi_{3A^\prime} \right).
\label{ifofintegrand}
\end{align}
From Eqs. $(\ref{delta1213})$ and $(\ref{ifofintegrand})$, we have
\begin{align}
& 
\mathcal{F}^{-1} \left[  \Delta^{(3|4)}\left( \mathcal{\bar{Z}}_{1}, \mathcal{\bar{Z}}_{2} \right)
\Delta^{(3|4)}\left( \mathcal{\bar{Z}}_{1}, \mathcal{\bar{Z}}_{3} \right) \right]
\notag \\ 
= 
&\int \dfrac{dv}{v} \dfrac{dw}{w} 
\mathcal{F}^{-1} \left[ \delta^{(4|4)}_{\mathbb{C}} \left( \mathcal{\bar{Z}}_{1} - v \mathcal{\bar{Z}}_{2} \right) 
\delta^{(4|4)}_{\mathbb{C}} \left( \mathcal{\bar{Z}}_{1} - w \mathcal{\bar{Z}}_{3} \right) \right]
\notag \\
= 
&\dfrac{(2 \pi i)^2}{\tilde{\pi}_{20} \tilde{\pi}_{30}}
\delta_{\mathbb{C}} (\langle 21 \rangle)
\delta_{\mathbb{C}} (\langle 31 \rangle)
\notag \\
& \times
\delta^{(2)}_{\mathbb{C}} 
\left( \tilde{\pi}_{10} \pi_{1A^\prime} 
+ \tilde{\pi}_{20} \pi_{2A^\prime} 
+ \tilde{\pi}_{30} \pi_{3A^\prime} \right)
\delta^{(4)} \left( \tilde{\pi}_{10} \eta_{1}^{k}  
+ \tilde{\pi}_{20} \eta_{2}^{k} + \tilde{\pi}_{30} \eta_{3}^{k}
\right),
\label{finverse4}
\end{align}
where the integral contour of $v$ is set to surround only the singular point of $1/(\tilde{\pi}_{10}-v\tilde{\pi}_{20})$
and that of $w$ is set to surround only the  
singular point of $1/(\tilde{\pi}_{10}-w\tilde{\pi}_{30})$.
Here, from the third delta function in Eq. $(\ref{finverse4})$, we have
\begin{align}
\tilde{\pi}_{10} \pi_{1A^\prime} 
+ \tilde{\pi}_{20} \pi_{2A^\prime} 
+ \tilde{\pi}_{30} \pi_{3A^\prime} = 0.
\label{barpipi}
\end{align}
By using Eq. $(\ref{barpipi})$, 
Theorem $\ref{thAz}$
and $\tilde{\pi}_{1A} \tilde{\pi}_{1}^{A}=0$, the product of the first and second delta functions in Eq. $(\ref{finverse4})$ can be expressed as follows:
\begin{align}
& \delta_{\mathbb{C}} (\langle 21 \rangle)
\delta_{\mathbb{C}} (\langle 31 \rangle)
\notag \\
= 
&[23] \delta^{(2)}_{\mathbb{C}} \left( 
\pi_{2A^\prime} \langle 21 \rangle
+ \pi_{3A^\prime} \langle 31 \rangle \right)
\notag \\
= 
&[23] \delta^{(2)}_{\mathbb{C}} \left( 
( \pi_{1A^\prime} \tilde{\pi}_{1A} 
+ \pi_{2A^\prime} \tilde{\pi}_{2A} 
+ \pi_{3A^\prime} \tilde{\pi}_{3A} ) \tilde{\pi}_{1}^{A}
\right)
\notag \\
= 
&[23] \delta^{(2)}_{\mathbb{C}} \left( 
( \pi_{1A^\prime} \tilde{\pi}_{11} 
+ \pi_{2A^\prime} \tilde{\pi}_{21} 
+ \pi_{3A^\prime} \tilde{\pi}_{31} ) \tilde{\pi}_{1}^{1}
\right)
\notag \\
= 
&[23] \dfrac{1}{(\tilde{\pi}_{10})^2}
\delta^{(2)}_{\mathbb{C}} \left( 
 \pi_{1A^\prime} \tilde{\pi}_{11} 
+ \pi_{2A^\prime} \tilde{\pi}_{21} 
+ \pi_{3A^\prime} \tilde{\pi}_{31} 
\right).
\label{2131}
\end{align}
Furthermore, by contracting Eq. $(\ref{barpipi})$ with 
$\pi_{2}$ or $\pi_{3}$, we have
\begin{align}
\tilde{\pi}_{30} = \dfrac{[12]}{[23]} \tilde{\pi}_{10}, \quad
\tilde{\pi}_{20} = \dfrac{[31]}{[23]} \tilde{\pi}_{10}.
\label{pi32}
\end{align}
By using Eq. $(\ref{pi32})$, the fourth delta function in Eq. $(\ref{finverse4})$ can be expressed as follows:
\begin{align}
& 
\delta^{(4)} \left( \tilde{\pi}_{10} \eta_{1}^{k}  
+ \tilde{\pi}_{20} \eta_{2}^{k} + \tilde{\pi}_{30} \eta_{3}^{k}
\right)
\notag \\
=
&\delta^{(4)} \left( \tilde{\pi}_{10} \eta_{1}^{k}  
+ \dfrac{[31]}{[23]} \tilde{\pi}_{10} \eta_{2}^{k} + \dfrac{[12]}{[23]} \tilde{\pi}_{10} \eta_{3}^{k}
\right)
\notag \\
= 
&\dfrac{(\tilde{\pi}_{10})^4}{[23]^4}
\delta^{(4)} \left( [23] \eta_{1}^{k}  
+ [31] \eta_{2}^{k} + [12] \eta_{3}^{k} \right).
\label{eta123}
\end{align}
Substituting Eqs. $(\ref{2131})$, $(\ref{pi32})$, and
$(\ref{eta123})$ into Eq. $(\ref{finverse4})$, we have
\begin{align}
(\ref{finverse4})
= 
\dfrac{(2 \pi i)^{2}}{[12][23][31]}
\delta^{(4)}_{\mathbb{C}} \left( 
\tilde{\pi}_{1A} \pi_{1A^\prime}  
+ \tilde{\pi}_{2A} \pi_{2A^\prime}
+ \tilde{\pi}_{3A} \pi_{3A^\prime}
\right)
\delta^{(4)} \left( [23] \eta_{1}^{k} + [31] \eta_{2}^{k}
+ [12] \eta_{3}^{k} \right).
\end{align}
Thus, we obtain Eq. $(\ref{finverse0})$. 
\end{proof}

\section{Summary and discussion}

We revealed a new geometrical interpretation of the sign factors which arise in the real twistor space. In addition, we proposed a new definition of the delta functions on the dual complex (super) twistor space in terms of the \v{C}ech cohomology group.

We demonstrated that the domain of the tilded $\delta$-function $\tilde{\delta}^{(3)}_{-n-4}(W,Y)$ without the sign factor is the three-dimensional sphere $S^{3} \approx \mathbb{RP}^{3} \times O(1)$, by comparing with the delta function on the real twistor space $\mathbb{RP}^{3}$, ${\delta}^{(3)}_{-n-4}(W,Y)$, with the sign factor.
Furthermore, we demonstrated that the domain of the collinear $\tilde{\delta}$-function which enforces the collinearity of the three twistors in $\mathbb{RP}^{3}$,
$\tilde{\delta}^{(2)}(W_{1},W_{2},W_{3})$,
without the sign factors is $G_{2,4}(\mathbb{R}) \times O(1)$, and that of the collinear $\delta$-function, 
${\delta}^{(2)}(W_{1},W_{2},W_{3})$, with the sign factors is 
$G_{2,4}(\mathbb{R})$.
Thus, we revealed that the sign factors play a role in dividing the domain of the delta functions by $O(1)$.

We proposed a new definition of the delta function on  the $n$-dimensional complex space $\mathbb{C}^{n}$,
denoted by $\delta^{(n)}_{\mathbb{C}}(z_{1},z_{2},\cdots,z_{n})$, as a representative element of the $(n-1)$-th \v{C}ech cohomology group on $\mathbb{CP}^{n-1}$ with coefficients in the sheaf $\mathcal{O}(-n)$, denoted by $H^{n-1}(\mathbb{CP}^{n-1},\mathcal{O}(-n))$.
By using this delta function to impose the equivalence relation of the complex projective space $\mathbb{CP}^{3}$ for any two points $\bar{X}$ and $\bar{Z}$, we defined the delta function on the dual complex twistor space $\mathbb{PT}^{*}=\mathbb{CP}^{3}$, denoted by  
$\Delta^{(3)}_{m}\left(\bar{Z},\bar{X}\right)$, without any sign factor. We demonstrated that this function possesses the property of the delta function on $\mathbb{PT}^{*}$ and is a representative element of the 
\v{C}ech cohomology group $H^{2}\left(\mathbb{CP}^{2},\mathcal{O}(-3)\right)$. Furthermore, we defined the delta function $\Delta^{(2)}\left(\bar{Z}_{1},\bar{Z}_{2},\bar{Z}_{3}\right)$ which enforces the collinearity of the three twistors $\bar{Z}_{1}$, $\bar{Z}_{2}$, and $\bar{Z}_{3}$ in $\mathbb{PT}^{*}$ without any sign factors, as a representative element of the  
\v{C}ech cohomology group $H^{1}\left(\mathbb{CP}^{1},\mathcal{O}(-2)\right)$. We also showed that both $\Delta^{(3)}_{m}\left(\bar{Z},\bar{X}\right)$ and $\Delta^{(2)}\left(\bar{Z}_{1},\bar{Z}_{2},\bar{Z}_{3}\right)$
are invariant under local and global conformal transformations ($SU(2,2)$ transformation). 
This invariance results from the fact that
they do not have sign factors and their ingredients' delta function on $\mathbb{C}^{4}$, denoted by $\delta^{(4)}_\mathbb{C} \left(\bar{Z}_{\alpha}\right)$, is invariant under the local and global conformal transformations.
Similarly, we defined the delta function
$\Delta^{(3|4)}\left(\mathcal{\bar{Z}}_{1}, \mathcal{\bar{Z}}_{2}\right)$
on the dual complex super twistor space $\mathbb{CP}^{3|4}$
and the delta function $\Delta^{(2|4)}\left(\mathcal{\bar{Z}}_{1}, \mathcal{\bar{Z}}_{2}, \mathcal{\bar{Z}}_{3}\right)$ which enforces the collinearity of the three super twistors in $\mathbb{CP}^{3|4}$.
These delta functions are invariant under local and global super conformal transformations  ($PSU(2,2|4)$ transformation).
Furthermore, we also showed that the inverse Fourier transform of $\Delta^{(2|4)}\left(\mathcal{\bar{Z}}_{1}, \mathcal{\bar{Z}}_{2}, \mathcal{\bar{Z}}_{3}\right)$
is the three-particle MHV amplitude for $\mathcal{N}=4$
SYM in the momentum superspace. Similarly, the inverse Fourier transform of the product of the two delta functions
$\Delta^{(3|4)}\left(\mathcal{\bar{Z}}_{1}, \mathcal{\bar{Z}}_{2}\right)$ and
$\Delta^{(3|4)}\left(\mathcal{\bar{Z}}_{1}, \mathcal{\bar{Z}}_{3}\right)$
is the three-particle $\overline{\text{MHV}}$ amplitude for $\mathcal{N}=4$
SYM in the momentum superspace.
For this reason, the delta functions $\Delta^{(3|4)}(\bar{\mathcal{Z}}_{1},\bar{\mathcal{Z}}_{2})$ and $\Delta^{(2|4)}(\bar{\mathcal{Z}}_{1},\bar{\mathcal{Z}}_{2},\bar{\mathcal{Z}}_{3})$ are scattering amplitudes for $\mathcal{N}=4$ SYM in the dual complex super twistor space.

The delta functions $\delta^{(3)}_{-n-4}(W,Y)$ and $\tilde{\delta}^{(3)}_{-n-4}(W,Y)$ are conformally invariant solutions of the differential equation of the twistor wave function \cite{Pen5}, because they possess a degree of homogeneity $(-n-4)$ in $W$ and $n$ in $Y$. For the same reason, the delta function $\Delta^{(3)}_{m}\left(\bar{Z},\bar{X}\right)$ is a conformally invariant solution of the differential equation of the twistor wave function. Therefore, these delta functions resemble the invariant function or the Feynman propagator which is well known in quantum field theory in the Minkowski space. 
Here, the delta function on twistor space is considered to be a fundamental object than the $\overline{\text{MHV}}$ amplitude in momentum space, because the $\overline{\text{MHV}}$ amplitude is constructed from the product of the two delta functions, as seen from Eqs $(\ref{tpdfists})$ and $(\ref{finverse0})$. For this reason, quantum field theory in twistor space may be more fundamental than that in space-time. In future studies, we will construct an operator formalism of quantum field theory in the real and complex twistor spaces. This would reveal the physical meaning of the sign factors. As a conjecture, the sign factors may correspond to the metric of the Fock space of the quantum field theory in twistor space. When the twistor space is real space, the metric of the Fock space may be indefinite, because the three-particle MHV and $\overline{\text{MHV}}$ amplitudes are the product of the sign factors and the delta functions. However, when the twistor space is complex space, the metric of the Fock space may be definite, because the three-particle MHV and $\overline{\text{MHV}}$ amplitudes do not have the sign factors.  Furthermore, the operator formalism of quantum field theory in the complex twistor space may be related to conformal gravity, because the delta functions, which are the ingredients of scattering amplitudes, are conformally invariant. 
Additionally, investigating the BCFW recursion relation in the complex twistor space in terms of the \v{C}ech cohomology group is a topic for future research.

\end{document}